\documentclass[12pt]{iopart}
\makeatletter
\expandafter\let\csname equation*\endcsname\relax
\expandafter\let\csname endequation*\endcsname\relax
\makeatother
\usepackage{amsmath}

\usepackage{amsfonts}
\usepackage{amssymb}
\usepackage{graphicx}
\usepackage{physics}
\usepackage{caption}
\usepackage{subcaption}

\usepackage[dvipsnames]{xcolor}
\usepackage{booktabs}
\usepackage{tabularx}
\usepackage{tikz}
\usetikzlibrary{angles,quotes,decorations.markings,decorations.pathreplacing,arrows.meta,patterns,calc}
\usepackage[T1]{fontenc}
\usepackage[english]{babel}
\usepackage[expansion=false]{microtype}
\usepackage[hyphens]{url}
\usepackage{datetime}
\usepackage[normalem]{ulem}

\newcommand{\E}{\mathbb{E}}
\newcommand{\Var}{\operatorname{Var}}

\newcommand{\gstar}{g^{*}}

\usepackage{amsthm,url}
\usepackage{hyperref}
\hypersetup{
    colorlinks=true,
    linkcolor=MidnightBlue,
    citecolor=ForestGreen,
    urlcolor=BrickRed,
    pdfauthor={},
    pdftitle={}
}

\newtheorem{proposition}{Proposition}

\newtheorem{corollary}{Corollary}
\newtheorem{assumption}{Assumption}
\newtheorem{remark}{Remark}

\begin{document}

\title[Measurement cost of gradient attacks on QML]{When cheap gradients fail: the measurement cost of attacking quantum classifiers}

\author{Bacui Li$^{1,2}$, Chandra Thapa$^{2}$, Tansu Alpcan$^{1}$ and Udaya Parampalli$^{3}$}

\address{$^{1}$ Department of Electrical and Electronic Engineering, University of Melbourne, Parkville, Victoria 3010, Australia}
\address{$^{2}$ CSIRO Data61, Marsfield, NSW 2122, Australia}
\address{$^{3}$ School of Computing and Information Systems, University of Melbourne, Parkville, Victoria 3010, Australia}

\ead{bacuil@student.unimelb.edu.au}

\begin{abstract}
Adversarial perturbations threaten machine learning classifiers, including variational quantum classifiers.
We show that finite quantum measurement statistics, that is, shot noise, act as a built-in defense against gradient-based test-time attacks whose cost scales unfavorably for the attacker.
Because every gradient component must be inferred from repeated circuit executions under any unbiased gradient-estimation rule, white-box extraction consumes a dimension-dependent measurement budget that measurement grouping cannot remove in expressive circuits.
We establish, under stated assumptions, that single-step attacks need at least quadratically many shots in the input dimension $d$, growing as $d^{5/2}$ under norm-concentration scaling, with a sufficient-budget analysis for iterative attacks via stochastic gradient Langevin dynamics.
Simulations up to 784 input dimensions validate the law: the realized total budget is the $d^{5/2}$ geometric floor for plateau-mitigated models and grows as $d^{3.00}$ for the tested deep circuits, whose gradient norms decay with dimension absent barren-plateau mitigation; folding the measured gradient norm back in per sample recovers the parameter-free $d^{3/2}$ shot-noise geometry (measured $d^{1.46}$).
Against a matched classical baseline whose attack overhead is dimension-independent (the cheap-gradient principle of automatic differentiation), the quantum \emph{gradient cost ratio}, a gradient's cost in forward-inference units, grows polynomially, empirically as $d^{3.00}$, so the attacker's relative cost diverges as the model scales.
On a 156-qubit IBM processor (\texttt{ibm\_boston}, 4-qubit circuits, $d{=}12$), a simulator--hardware comparison over a 100-input cohort reproduces the effect, the device attack tracking the ideal within a few percent at matched budgets, with the high-shot gradient faithful to the exact one (cohort-median cosine $0.98$, mean $0.90$).
The experiments establish the \emph{scaling law} of measurement-based gradient extraction; its defensive consequence operates precisely when the forward map is classically hard to simulate, since only then is a white-box attacker denied the simulate-and-backpropagate shortcut and must pay the measurement cost we quantify.
\end{abstract}

\vspace{1pc}
\noindent{\it Keywords}: quantum machine learning, adversarial robustness, parameter-shift rule, shot noise, variational quantum circuits, quantum measurement, finite-shot gradient estimation

\vspace{1pc}

\section{Introduction}

Modern machine learning models achieve high accuracy under benign conditions yet remain vulnerable to adversarial perturbations: small, deliberately crafted input changes that flip a classifier's prediction while remaining imperceptible to a human observer~\cite{goodfellow_explaining_2014, carlini_towards_2016, bruna_intriguing_2013}. Quantum machine learning (QML)~\cite{melnikov_quantum_2023}, whose models are parametrized quantum circuits trained on classical or quantum data, inherits this vulnerability. Existing defenses for QML are largely \emph{dynamical}: worst-case robustness degrades only polynomially in qubit count~\cite{liu_vulnerability_2020, liao_robust_2021}, a bound that data-geometry refinements sharpen by exploiting the manifold structure of natural data~\cite{mahloujifar_curse_2018} and that has been benchmarked on quantum classifiers at scale~\cite{west_benchmarking_2023}; alongside these, empirical studies report that depolarization, hardware crosstalk, and label noise each enhance robustness on near-term devices~\cite{du_quantum_2021, kundu_qnad_2024, ahmed_comparative_2025, zhang_experimental_2025}. Dowling et al.~\cite{dowling_adversarial_2024} organize these protections into a hierarchy of dynamical guarantees rooted in unitarity, operator scrambling, and circuit chaoticity. Every analysis in this line, however, prices the attacker's gradient at zero: its guarantees are stated for exact expectation values.

Quantum mechanics forbids the zero-cost gradient. Every observable must be reconstructed from finitely many measurements, and the parameter-shift rule (PSR)~\cite{mitarai_quantum_2018, schuld_evaluating_2019} makes the cost explicit: each gradient component is a difference of two expectation values, each estimated from shifted-circuit measurement statistics. Such estimation is governed by the quantum Cram\'er--Rao bound~\cite{helstrom_quantum_1969, braunstein_statistical_1994}, which for single-copy projective measurements coincides with the classical shot-noise variance and floors the variance of any unbiased estimator built from a finite number of shots; under PSR this floor is a per-component variance $\sigma^2/s$ for $s$ shots. A substantial training-time literature amortizes this cost across epochs through adaptive shot-allocation schedules~\cite{scriva2023shotnoise, kubler_adaptive_2020, bittel_fast_2022, ito_santaqlaus_2023, kreplin_reduction_2024, ma_adaptive_2021}; the security side has none. The asymmetry is the point: a defender pays the shot bill once, spread across training, whereas an attacker pays it afresh for every gradient query at inference time. This sets quantum shot noise apart from the defenses a model designer adds deliberately. Algorithmic gradient obfuscation offers only a false sense of security, since adaptive attacks defeat it at negligible additional cost~\cite{athalye_obfuscated_2018}; certified randomized-smoothing schemes~\cite{cohen2019smoothing, wong2018convexouter} do supply provable robustness radii, but at the price of test-time sampling overhead and radii that themselves shrink with dimension. Quantum shot noise is intrinsic to measurement: it cannot be switched off without buying more shots, and it acts on every gradient component of every query the attacker issues.

Our analysis complements the dynamical-guarantee literature~\cite{dowling_adversarial_2024}: those results bound attack feasibility through circuit properties, whereas we quantify the physical measurement cost that any gradient-based attack incurs regardless of the circuit's dynamical regime. It also recasts the cheap-gradient theorem~\cite{baur_complexity_1983, griewank_evaluating_2008}. Classically, the \emph{gradient cost ratio}, the cost of a gradient in units of one forward pass, is bounded by a small constant ($\le5$); on quantum hardware it is exactly this ratio that blows up, because unbiased gradient estimation costs $\Theta(d)$ measurements, linear in the input data dimension $d$, regardless of the algorithm used. This $\Theta(d)$ component cost cannot be reduced by any measurement grouping. By the expressivity--measurement-efficiency trade-off for parametrized quantum circuits~\cite{chinzei_tradeoff_2025}, the number of simultaneously measurable gradient components collapses to $O(1)$ for deep, generically-entangling ans\"atze whose dynamical Lie algebra (the Lie algebra generated by the circuit's gate Hamiltonians) is near-maximal, that is, fills the full $\mathfrak{su}(2^q)$, which is also the regime in which the forward map becomes classically hard; the per-gradient measurement count is then proportional to the number of components. The exact parameter-shift family is unbiased and so falls under this $\Theta(d)$ bound: generalized parameter-shift rules for gates with richer spectra~\cite{Wierichs2022, banchi_overshifted_2025} change only the number of shifts per component and remain unbiased, so they do not lower the exponent. Biased or stochastic surrogates instead sit outside the unbiased bound: simultaneous-perturbation (SPSA) estimators and PSR/SPSA hybrids~\cite{periyasamy_guided-spsa_2024, hoffmann_gradient_2022} and classical-shadow gradients~\cite{heidari_quantum_2024} reduce the per-component circuit count at the price of estimator variance and, for finite-difference and SPSA-type rules, bias; whether that trade lowers the exponent in the high-curvature expressive regime is governed by the model's curvature and remains open (\ref{app:zero-order}; see also the empirical survey of~\cite{lockwood_empirical_2022}).

\paragraph{Setting and threat model.} We make the attacker precise before stating our results. We study a test-time, white-box \emph{evasion} attack~\cite{biggio_evasion_2013} on a trained QML classifier of classical data. A classifier $f(x,\theta)$ maps a classical input $x\in\mathbb{R}^{d}$ to class scores through a parametrized quantum circuit. Training is complete and the parameters $\theta$ are frozen; at inference the adversary adds a perturbation $\delta$ with $\|\delta\|_2\le\epsilon$ (the perturbation budget $\epsilon$) to a single input so that $f(x+\delta,\theta)$ is misclassified. A white-box adversary knows the architecture and $\theta$; this is the strongest such attacker, so any cost it must pay, a black-box attacker pays as well. The attack direction is set by the input gradient $\gstar=\nabla_x L(x,y)$ of the loss $L$. On classical hardware this gradient is essentially free to compute, but on a quantum device every component of $\gstar$ must be reconstructed from finitely many circuit measurements. We write $R$ for the total measurement budget (circuit executions, or shots) the attacker spends to do so. The question this paper answers is how the budget $R$ must grow with $d$ to keep the attack effective.

\paragraph{Contributions.} We answer this in four parts. (i)~\emph{Single-step attacks.} Under stated, attacker-favoring assumptions (local linearity, i.i.d.\ shot noise, and a small-noise regime), the single-step shot cost is $R=\Theta(\epsilon d^{2})$: $\Theta(d^{2})$ at a fixed perturbation budget and $\Theta(d^{5/2})$ under the $\ell_2$ scaling $\epsilon\propto\sqrt{d}$ that norm concentration motivates; because each assumption favors the attacker, relaxing any of them only raises the cost. (ii)~\emph{Iterative attacks.} Casting the finite-shot Carlini--Wagner attack~\cite{carlini_towards_2016} as Stochastic Gradient Langevin Dynamics (SGLD) with inverse temperature $\beta\propto s$~\cite{raginsky_non-convex_2017, xu_global_2020} reproduces the empirical shot sweet spot and shows that finite-shot convergence theory grants an iterative attacker no guarantee of escaping the single-step scaling (\ref{app:sgld_sufficient}). (iii)~\emph{Simulation.} Simulations up to $d{=}784$ on the MNIST and Fashion-MNIST image datasets confirm the law's structure: folding the measured gradient norm back into the cost coefficient recovers the parameter-free $d^{3/2}$ geometric baseline, and for the tested \emph{plateau-prone} circuits (gradient norm decaying with dimension, $\|\gstar\|\propto d^{-0.73}$) the realized total budget grows approximately cubically ($d^{3.00}$ and $d^{3.07}$ on the two datasets), above the $d^{5/2}$ floor that \emph{plateau-mitigated} models (dimension-stable gradients, $\|\gstar\|=\Theta(1)$, e.g.\ barren-plateau-mitigated architectures) would reach. A matched classical CNN baseline has a dimension-independent gradient cost ratio $\rho_\mathrm{classical}\approx5$ (the Baur--Strassen bound), so the quantum overhead's growth as $d^{3.00}$ makes the relative gradient-extraction cost diverge polynomially with $d$. (iv)~\emph{Hardware.} On IBM's 156-qubit \texttt{ibm\_boston}, a matched simulation-versus-experiment comparison at $d=12$ over a 100-input cohort reproduces the shot-noise robustness; the high-shot gradient is faithful (cosine $0.90$). The simulation reaches the exact-gradient floor ($10\%$) as $s\to\infty$, and the experiment plateaus a few percent above it, a residual device bias the attacker cannot remove, so the low-shot robustness reflects finite measurement and survives device imperfection.

\label{sec:scope}
Together, these results show that shot noise induces a dimension-scaling resource asymmetry for measurement-based gradient extraction. The scaling itself is generic to noisy, gradient-free (zero-order) estimation: recovering a $d$-dimensional gradient from such an oracle is long known to cost $\Theta(d)$ queries, with matching information-theoretic lower bounds~\cite{nesterov_random_2017, jamieson_query_2012, duchi_optimal_2015}. What is specific to the quantum setting is the \emph{irreducibility} of that cost under our threat model. The per-query noise is the Born rule under single-copy measurement, intrinsic to the physics rather than a software perturbation that a white-box attacker could switch off or cheaply average away~\cite{athalye_obfuscated_2018}; and in the classically-hard regime, quantum hardware denies the simulate-and-backpropagate shortcut even to an attacker who knows $\theta$, leaving an exponent that no \emph{unbiased} estimator can reduce, because the gradient components cannot be co-measured~\cite{chinzei_tradeoff_2025}. The shot-budget bounds concern inference-time gradient extraction only. We further assume the attacker processes one copy of the inference circuit at a time, using the same number of qubits and exploiting no cross-circuit entanglement, so that each shot resolves a single circuit execution rather than a joint multi-copy measurement. Training-time threats such as data poisoning and backdoor injection lie outside this framework and are formalized separately in Section~\ref{sec:framework}. Table~\ref{tab:roadmap} summarizes the four attack regimes considered, the assumptions and tools used to bound each, and the section in which each result appears.

\paragraph{Regime of applicability.} One scoping point is essential to reading these results correctly. The defensive consequence of shot noise operates precisely when classical simulation of the model's forward map is intractable. A white-box attacker who can simulate the circuit can also backpropagate through that simulation to obtain the input gradient at $O(1)$ cost (Remark~\ref{rem:classical_cost}), bypassing the measurement bill entirely. The setting in which QML is expected to be useful, where its forward map offers a genuine quantum advantage, is by definition one in which that map is classically hard to reproduce. There, and only there, the attacker is denied the simulate-and-backpropagate shortcut and must extract every gradient component through measurement, paying the dimension-dependent cost we bound. The small systems we simulate ($q\le 10$ qubits) lie deliberately on the tractable side of this boundary, so that we can compute ground-truth gradients and isolate the scaling exponent; at those sizes a rational white-box attacker would simulate rather than measure. Our experiments should therefore be read as establishing the \emph{scaling law} of measurement-based gradient extraction, not as a deployed defense at the dimensions probed. Current hardware and the cost of classical simulation place the classically-hard regime beyond what a single study can instantiate end to end, but the scaling law is what characterizes the attacker's cost once that boundary is crossed. What transfers across the boundary is the \emph{exponent}: the per-component count is fixed by the input dimension and, by the expressivity--measurement-efficiency trade-off~\cite{chinzei_tradeoff_2025}, cannot be amortized by any measurement grouping, precisely because the useful regime is the expressive one. The ans\"atze we simulate already lie in this expressive class (deep, generically-entangling strongly-entangling-layer circuits whose dynamical Lie algebra is near-maximal), so the no-amortization premise is expected to hold for our circuit family at every tested size; only the classical hardness of the forward map awaits larger $q$, not the expressivity that fixes the exponent. The \emph{prefactor} carries the per-observable measurement variance $\sigma^2$, which the strong entanglement and output concentration of the expressive regime may shift; bounding it there is the boundary of our extrapolation, which we leave to future work.

\begin{table}
\centering
\small
\caption{\label{tab:roadmap}Roadmap of the four attack regimes analyzed. $R$ denotes the total shot budget required to maintain constant attack efficacy at input dimension $d$. The empirical total-shot exponent (denoted $p_1$; fitting convention in \ref{methods:software}) exceeds the $d^{5/2}$ baseline because the tested models' gradient norms decay with dimension, violating the baseline's $\|\gstar\|=\Theta(1)$ premise in the defender's favor (Section~\ref{sec:Res}). C\&W is studied as a representative example of the broader, largely heuristic, multi-step attack landscape in QML.}
\begin{tabularx}{\linewidth}{@{}lXlc@{}}
\toprule
\textbf{Regime} & \textbf{Assumption / tool} & \textbf{Result} & \textbf{Section} \\
\midrule
Single-step, fixed $\epsilon$ & local linearity, i.i.d.\ shots & $R = \Theta(\epsilon d^{2})=\Theta(d^{2})$ & \ref{sec:singlestep} \\
Single-step, $\epsilon \propto \sqrt{d}$ & norm concentration & $R = \Theta(\epsilon d^{2})=\Theta(d^{5/2})$ & \ref{sec:singlestep} \\
Iterative (C\&W exemplar) & SGLD + Raginsky~\cite{raginsky_non-convex_2017} & sufficient poly($d$) budget & \ref{sec:multistep} \\
Empirical \& hardware & matched CNN baseline; \texttt{ibm\_boston} & $p_1 \approx 3.0$; overhead $\Theta(d^{3.00})$ vs $O(1)$ & \ref{sec:Res}, \ref{sec:hw_validation} \\
\bottomrule
\end{tabularx}
\end{table}

\section{Theoretical framework and threat model}
\label{sec:methods}
\label{sec:background}

This section establishes the measurement framework and shows that shot noise induces dimension-dependent gradient uncertainty, the foundation for understanding what happens when a quantum classifier is attacked with finite resources. We introduce notation and the PSR gradient-estimation pipeline including PSR unbiasedness and covariance bookkeeping (Section~\ref{sec:notation}), formalize the threat model and the single-step attack assumptions (Section~\ref{sec:framework}), derive the chain-rule structure of noisy gradient estimates (Section~\ref{sec:uncertain_gradient}), and establish forward- versus backward-pass cost dominance (Section~\ref{methods:fp_bp_dominance}).

\subsection{Notation and gradient estimation via PSR}
\label{sec:notation}

We consider a quantum machine learning classifier $f(x,\theta): \mathbb{R}^d \to \mathbb{R}^C$ mapping $d$-dimensional inputs to $C$ class logits via a parametric quantum circuit with trainable parameters $\theta$; Table~\ref{tab:notation} collects the symbols used throughout this section and the rest of the paper. The quantum layer employs angle encoding gates $U(x_i) = e^{iG x_i}$ where $G$ is a generator with two distinct eigenvalues, producing observables whose expectation values $\langle O_j \rangle$ define output features. The $C$ class logits are these expectation values directly: the model carries no trainable classical readout layer, and the only classical post-processing is the softmax cross-entropy loss, whose gradient with respect to the logits costs $O(1)$ by automatic differentiation and so leaves the dimension-dependent measurement cost of the quantum Jacobian unchanged. This setup represents the minimal structure required for parameter-shift gradient estimation and is general: it encompasses widely-used QML architectures, including data re-uploading circuits~\cite{perez-salinas_data_2020}, Instantaneous Quantum Polynomial (IQP) models~\cite{havlicek_supervised_2019}, and variational quantum classifiers~\cite{schuld_circuit-centric_2020}. Any parametric gate satisfying Lie-algebraic conditions for the parameter-shift rule~\cite{schuld_evaluating_2019} falls within this framework, making our shot-noise analysis broadly applicable to practical QML implementations.

Gradient estimation via the Parameter-Shift Rule (PSR) exploits the spectral structure of encoding gates. For a gate $U(x_i) = e^{iG_i x_i}$ whose generator $G_i$ has two distinct eigenvalues with spectral gap $\Delta G_i = \lambda_+ - \lambda_-$, the exact gradient is
\begin{equation}
\label{eq:psr_intro}
\begin{aligned}
\frac{\partial\langle O\rangle}{\partial x_i}
&= \frac{\Delta G_i}{2}\!\left[\langle O(x^{(i,+)})\rangle - \langle O(x^{(i,-)})\rangle\right],\\
\quad x^{(i,\pm)} &= x \pm \frac{\pi}{2\Delta G_i}\,\hat{e}_i\;,
\end{aligned}
\end{equation}
where $\hat{e}_i$ is the unit vector along input coordinate $i$ (so $x^{(i,\pm)}$ shifts only the $i$-th feature). For single-qubit Pauli rotations ($G_i = \sigma_i/2$, $\Delta G_i = 1$), this reduces to the familiar shift $a = \pi/2$. Each shifted expectation estimated from $s_0$ shots has variance $\mathrm{Var}[\hat{O}_j] = (1 - \langle O_j\rangle^2)/s_0 \leq 1/s_0$ for $\pm 1$-valued Pauli observables, where $\hat{O}_j$ is the $s_0$-shot empirical estimate of $\langle O_j\rangle$. The gradient estimate inherits this uncertainty: $\mathrm{Var}[\partial\hat{O}_j/\partial x_i] = \Theta(\Delta G_i^2/s_0)$. Estimating the full Jacobian $J\in\mathbb{R}^{C\times d}$ requires $2C$ circuit evaluations per input dimension (two PSR shifts for each of $C$ output observables), yielding $2Cd$ evaluations in total, each using $s_0$ shots. When all $C$ output observables commute (e.g., computational-basis $Z_i$ measurements, as in our experiments), the $C$ expectation values share the same bitstrings, reducing the count to $2$ evaluations per input dimension (Section~\ref{methods:fp_bp_dominance}).

\textbf{Convention.} Throughout the paper, $s$ denotes the total shots \emph{per input dimension}, absorbing all per-evaluation overheads: $s = 2C\,s_0$ in the non-commuting case, or $s = 2\,s_0$ when observables commute. The total shot budget is then uniformly
\[
  R \;=\; d\,s\;.
\]
The per-evaluation count $s_0$ appears only in this derivation; all subsequent scaling laws, propositions, and experiments are stated in terms of $s$ and $R = ds$. The effective per-shot variance $\sigma^2$ used in later sections absorbs the constant relating $s$ to $s_0$, so that the gradient noise model takes the simple form $\mathrm{Var}[\hat{g}_i]=\sigma^2/s$.
This measurement uncertainty propagates through backpropagation, inducing gradient noise that is the central object of our analysis.
We measure perturbation size via $\ell_p$ norms $\|\delta\|_p$ and quantify robustness primarily through \emph{post--attack accuracy} under an $\epsilon$-bounded perturbation (equivalently, attack success rate), together with loss-based surrogates such as the loss surplus $\Delta L(s)$ used in our scaling bounds. Our analysis focuses exclusively on the $\ell_2$ norm for three reasons. \emph{(i) Shot-noise geometry:} each gradient component estimated via PSR has independent Gaussian noise with variance $\sigma^2/s$, so the total estimation error $\|\hat{g} - \gstar\|$ concentrates in $\ell_2$, making $\ell_2$-norm constraints the natural measure of gradient fidelity. \emph{(ii) Optimal attack alignment:} under $\ell_2$-bounded perturbations, the optimal first-order attack direction is $\delta^* = \epsilon \gstar/\|\gstar\|_2$, aligning exactly with the normalized gradient; for other $\ell_p$ norms ($p \neq 2$), the dual-norm structure yields optimal perturbations that do not align with the gradient direction (e.g.\ the $\ell_\infty$ sign-gradient attack~\cite{goodfellow_explaining_2014, madry_towards_2017}), complicating the analysis of how shot noise degrades attack effectiveness. \emph{(iii) Practical comparability:} in classical adversarial robustness, different $\ell_p$ norms produce correlated but distinct vulnerability measures (an attack effective under $\ell_\infty$ constraints may perform differently under $\ell_2$ bounds), and the $\ell_2$ norm provides a natural baseline for QML that directly reflects the Euclidean structure of gradient estimation noise.
The attacker's resource cost $R$ denotes the total number of circuit executions (shots $\times$ evaluations) required to craft adversarial examples.

\begin{table}[t]
\centering
\small
\begin{tabularx}{\linewidth}{cX}
\hline
\textbf{Symbol} & \textbf{Meaning} \\
\hline
$d$ & Input dimension (number of features) \\
$d_\textup{in}$ & Input dimension to a specific quantum layer (layer local)\\
$s$ & Shots per input dimension (total across all PSR circuit evaluations for one coordinate; see Section~\ref{sec:notation}) \\
$R$ & Total shot budget $R = ds$ \\
$\epsilon$ & Perturbation budget ($\ell_2$-norm bound) \\
$\gstar$ & True gradient vector $\nabla_x L(x,y)$ \\
$\hat{g}$ & Noisy gradient estimate from finite shots \\
$\sigma^2$ & Per-shot variance of observable measurements \\
$\Sigma$ & Covariance matrix of gradient estimation error \\
$J_k$ & Jacobian of quantum layer $k$ w.r.t. inputs \\
$L(x,y)$ & Attacker's loss function (rewards misclassification) \\
$L_{\text{train}}$ & Defender's training loss \\
$\alpha$ & Misalignment angle between $\hat{g}$ and $\gstar$ \\
$\kappa_{\mathrm{vMF}}$ & vMF concentration parameter: $\kappa_{\mathrm{vMF}} \approx s\|\gstar\|^2/\sigma^2$ \\
$\zeta(d)$ & Landscape factor: measured ratio of the realized clean-attack loss rise to its first-order prediction (\S\ref{par:gnorm_baseline}); $\zeta(d)=1$ for a locally linear landscape \\
$\Delta G_i$ & Spectral gap of generator $G_i$ ($=\lambda_+-\lambda_-$)\\
$C$          & Number of output classes\\
$\beta_\mathrm{eff}$ & SGLD effective inverse temperature\\
\hline
\end{tabularx}
\caption{\textbf{Notation.} Symbols used throughout the main text.}
\label{tab:notation}
\end{table}

\subsection{Threat model: gradient-based attacks \label{sec:framework}}
We focus on physically realistic \emph{white-box, test-time (evasion) attacks} on QML classifiers~\cite{biggio_evasion_2013}: the classifier $f(\cdot,\theta)$ has already been trained, its parameters $\theta$ are frozen, and the adversary's only freedom is to craft a small input perturbation $\delta$ at inference (query) time so that $f(x+\delta,\theta)$ is misclassified. Training-time adversaries (any attack that modifies $\theta$ or the training distribution) are explicitly out of scope; the resource bounds in this paper concern only the inference-time, per-sample cost of gradient extraction. Within this evasion setting, a \textit{white-box attack} further assumes complete knowledge of the model architecture and trained parameters $\theta$, enabling the adversary to compute or estimate gradients $\nabla_x L(x,y)$ for crafting perturbations (the white-box assumption still requires physical gradient extraction via quantum measurements). This represents a worst-case threat stronger than black-box scenarios.\footnote{We assume PSR-based gradient estimation; see Section~\ref{sec:scope} and \ref{app:zero-order} for discussion of alternative methods.}

\paragraph{Beyond gradient-based attacks.} The measurement cost we bound is a tax on \emph{gradient reconstruction}; it does not directly bind attacks that avoid estimating $\nabla_x L$, namely gradient-free search and transfer (surrogate) attacks, which constitute a separate threat model. Transfer is an imperfect route against quantum classifiers: variational quantum models exhibit enhanced adversarial robustness and learn features distinct from those of classical networks~\cite{west_benchmarking_2023}, so adversarial examples crafted on a classical surrogate transfer only partially and asymmetrically, though transfer is not eliminated~\cite{lu_quantum_2020}. A full treatment of gradient-free and transfer attacks under the finite-shot model is left to future work.

\paragraph{Classification setup.}
Input samples lie in a bounded space $x \in [0,1]^d$ of dimension $d$ with labels $y \in \{1,\ldots,C\}$. The QML classifier $f(x, \theta): \mathbb{R}^d \to \mathbb{R}^C$ outputs logits (unnormalized scores) for each class; the predicted label is $\hat{y}=\arg \max_c f(x, \theta)_c$. Training minimizes a differentiable loss $L_\textup{train}(x,y;\theta)$, typically cross-entropy after softmax normalization, though alternatives (hinge, focal) exist.

\paragraph{Attack objective.}
An adversary seeks a perturbation $\delta$ that causes misclassification while remaining small to evade detection. This trade-off is formalized as a multi-objective optimization:
\begin{align}
\label{eq:multi_obj}
    \min_{\delta}\;
    &  L(x+\delta, y ; \theta)
       \;\text{and}\; \|\delta\|_p, \notag\\
    \text{s.t.}\;
    &  x+\delta\in[0,1]^d,
\end{align}
where $L$ is an attack loss rewarding misclassification (e.g., negative cross-entropy, difference-of-logits variants) and $\|\delta\|_p$ measures perturbation size under the $\ell_p$ norm. The constraint ensures adversarial examples remain in the input domain.

\paragraph{Attack families and the $\ell_2$ FGSM.}
Attacks trade off the two objectives differently: fixed-budget methods (FGSM, PGD) fix $\|\delta\|_p=\epsilon$ and minimize $L$, while minimum-norm and mixed methods (C\&W) minimize $\|\delta\|_p$ or a weighted combination $L+c\|\delta\|_p^2$. We use the $\ell_2$-normalized single-step attack
\begin{equation}
\label{eq:fgm}
    \delta = \epsilon \cdot \frac{\nabla_x L}{\|\nabla_x L\|_2}\;,
\end{equation}
which is optimal under $\ell_2$ constraints~\cite{madry_towards_2017} and preserves the relative magnitudes of all gradient components, the property our shot-noise analysis depends on; it is the $\ell_2$ analog of the sign-based FGSM~\cite{goodfellow_explaining_2014}. Throughout, ``FGSM'' denotes this $\ell_2$ variant.

\paragraph{Attack effectiveness metrics.}
Throughout this paper, we evaluate attacks at a prescribed perturbation budget $\epsilon$ and report \emph{post--attack accuracy} (equivalently, attack success rate) as the primary empirical outcome. On the theory side, we also track loss-based surrogates, in particular the loss surplus $\Delta L(s)=L(\delta)-L(\delta^*)$ that appears in Proposition~\ref{prop:fgsm-optimization}, because it connects directly to gradient misalignment under shot noise. For iterative attacks such as C\&W, we report the same metrics after a fixed number of optimization steps under a fixed total shot budget. We do not perform per-sample searches for minimal perturbations, and we do not include hyperparameter tuning overhead (e.g., binary search over the C\&W coefficient $c$) in the resource count, so accounting for this overhead would only strengthen the resource lower bounds.

\paragraph{Single-step attacks and working assumptions.}
A single-step gradient-based attack with perturbation budget $\epsilon$ takes the form
\begin{equation}
\label{eq:1s_attack}
    \delta = \epsilon\, F\!\left[\nabla L(x, y)\right],
\end{equation}
where $F(\cdot):\mathbb{R}^n\to\mathbb{R}^n$ maps the gradient to a perturbation (e.g.\ $F=\mathrm{sign}$ for the original FGSM, $F(\gstar)=\gstar/\|\gstar\|_2$ for Eq.~\eqref{eq:fgm}). The single-step shot-scaling analysis in Section~\ref{sec:singlestep} works under three assumptions, all favorable to the attacker:
\begin{assumption}[Linear local landscape]
\label{ass:1s_1}
The loss landscape is locally linear around $x$ within the perturbation ball, i.e.\ for any $\delta$ with $\|\delta\|_p\le\epsilon$, $L(x+\delta,y)=L(x,y)+\nabla L(x,y)\cdot\delta$.
\end{assumption}
\begin{assumption}[i.i.d.\ Gaussian gradient components]
\label{ass:1s_2}
The estimated loss-gradient components are independent normal, $\hat{g}_i\sim\mathcal N(\gstar_i,\sigma^2/s)$, where $s$ is the average per-dimension shot count and $\sigma^2$ is the effective per-shot variance; the total shot budget is $R=ds$.
\end{assumption}
\begin{assumption}[Small noise]
\label{ass:1s_3}
The per-component noise is much smaller than the true gradient magnitude, $\sqrt{\sigma^2/s}\ll\|\gstar\|_2$, equivalently $s\gg s_{\mathrm{vMF}}(d):=\sigma^2/\|\gstar\|^2$.
\end{assumption}
A relaxation that drops Assumptions~\ref{ass:1s_2}--\ref{ass:1s_3} and only requires zero-mean gradient error with $\tr\Sigma=\Theta(d)$ is given in Section~\ref{subsec:expected_relax}.

\subsection{The uncertainty of gradient estimates \label{sec:uncertain_gradient}}
In classical machine learning, gradient-based adversarial attacks obtain the required input gradients through automatic differentiation and backpropagation. Attackers therefore receive exact, low-variance gradients at a small constant-factor cost relative to inference (the cheap-gradient principle, $\le5{\times}$; Remark~\ref{rem:classical_cost}) \cite{baydin2018autodiff,goodfellow_explaining_2014}. Quantum models break this convenience through finite‑shot measurement uncertainty: every gradient component must be inferred from measurement statistics, so stochastic estimation becomes unavoidable even in a white-box regime, introducing irreducible uncertainty into adversarial gradient extraction. In this section, we formalize the gradient-estimation pipeline for QML attackers and quantify how shot noise disrupts the perturbations they attempt to construct.

During a white-box attack, the adversary knows the architecture and trained parameters. Let $L(x,y)$ denote the attacker's loss for input-label pair $(x,y)$. It differs from the training objective, but we drop any explicit model subscripts for clarity. The attacker seeks $g = \nabla_x L(x,y)$, the gradient with respect to the input features. When a model contains at least one quantum subroutine whose forward map is classically intractable, the adversary cannot simulate it and must execute that subroutine on quantum hardware to estimate $g$, since classical simulation becomes infeasible for generic large-scale circuits \cite{feynman_simulating_1982, bernstein_quantum_1997}. This is exactly the regime in which the shot-cost bounds below carry defensive force (Section~\ref{sec:scope}): where the circuit is small enough to simulate, a white-box attacker would instead backpropagate through the simulation at $O(1)$ cost, and the bounds describe the cost of measurement-based extraction rather than an operative barrier.

Write the model as a composition of $M$ layers $f_m$ so that $f(x) = f_M \circ f_{M-1} \circ \cdots \circ f_0(x)$. Differentiating the loss with respect to $x$ yields the multivariable chain-rule factorization
\begin{equation}
\label{eq:chain}
    g = J_M(x^{(M)}) J_{M-1}(x^{(M-1)})...J_0(x),
\end{equation}
where $J_m$ is the Jacobian of the $m$th layer.

When a quantum block participates in this composition, measurement randomness injects uncertainty into its Jacobian estimate, and that uncertainty propagates through any surrounding classical layers. For analytic tractability, we discuss a single quantum layer at position $k$ with Jacobian $J_k$. Still, the same reasoning applies to architectures with multiple quantum segments: each stochastic Jacobian multiplies the deterministic Jacobians of adjacent classical modules.

Define the effective Jacobian before the quantum layer as $J_L = J_{k-1}J_{k-2}\cdots J_0$ and the effective Jacobian after it as $J_U = J_M J_{M-1}\cdots J_{k+1}$. Substituting into (\ref{eq:chain}) gives
\begin{equation}
\label{eq:g_jac}
    \hat{g} = J_U \hat{J_k} J_L\;.
\end{equation}
Here, we also replace the analytical gradient with an estimate of the gradient and match it to $\hat{J_k}$, the estimate of the Jacobian quantum layer on the RHS. Note that $J_U$ has an implicit dependence on the output of the quantum layer $f_k$; we treat $J_U$ as known in the propositions below because the forward pass of $f_k$ is much cheaper than its backward pass (Section~\ref{methods:theory}, Proposition~\ref{prop:backward_dominance}).

\paragraph{Statistical properties of the gradient estimate.}
PSR is unbiased, $\mathbb E[\hat J_k]=J_k$, and each Jacobian entry obtained by the two-shift differencing of Eq.~\eqref{eq:psr_intro} has variance scaling
\[
\mathbb V[\hat J_{ij}] = \Theta\!\left(\frac{\Delta_{G_j}^2}{s_j}\right),
\]
where $s_j$ is the shots allocated to parameter $j$ and the constant depends on the observable statistics and shot-allocation scheme~\cite{vanDerVaart1998}. Commuting outputs (e.g.\ simultaneous $\{Z_i\}$ readout) share bitstrings, correlating the rows within a Jacobian column. Non-commuting outputs measured in separate settings give independent entries when shot pools do not overlap~\cite{Yen2023,Crawford2021}. The effective gradient $\hat G=J_U\,\hat J\,J_L$ then has mean $J_U\,\mu\,J_L$ with $\mu=\mathbb E[\hat J]$ and column-wise covariance
\begin{equation}
\label{eq:variance1}
\mathrm{Cov}(\hat G_{\cdot i},\hat G_{\cdot j})=\sum_{v=1}^{d}\!\big(J_U\,\Sigma_v\,J_U^\top\big)\,[J_L]_{v i}[J_L]_{v j},\qquad \Sigma_v=\mathrm{Cov}(\hat J_{\cdot v}).
\end{equation}
The per-case covariance bookkeeping and the $\mathrm{vec}$-level Kronecker form are collected in \ref{app:supp_derivations}. Consistent with PSR error analyses, the per-component gradient-estimation error scales as $s^{-1/2}$ (equivalently $R^{-1/2}$ for total budget $R=ds$)~\cite{Mari2021,Kaminishi2024}.

\subsection{Forward- versus backward-pass shot cost}
\label{methods:theory}
\label{methods:fp_bp_dominance}
This subsection establishes that, for a hybrid architecture with a quantum layer $f_k:\mathbb{R}^{d_{\mathrm{in}}}\to\mathbb{R}^{d_{\mathrm{out}}}$ surrounded by classical layers, backward-pass Jacobian estimation dominates the shot budget in high dimensions. Accounting for both forward-pass (output) and backward-pass (Jacobian) measurement noise, the gradient estimate of Eq.~\eqref{eq:g_jac} becomes $\hat{g} = J_U(\hat{y}_k)\,\hat{J}_k\,J_L$, where $\hat{y}_k$ is the noisy quantum output. Proposition~\ref{prop:forward_variance} characterizes the forward contribution; Proposition~\ref{prop:backward_dominance} and Corollary~\ref{cor:backward_focus} establish the dominance of the backward cost.

\begin{assumption}[Smoothness of classical post-processing]
\label{ass:fp_noise}
The classical layers following the quantum block are twice continuously differentiable with bounded Hessian in a neighborhood of $y_k$, so that the Jacobian satisfies $J_U(\hat{y}_k) - J_U(y_k) = H_U(\hat{y}_k - y_k) + O(\|\hat{y}_k - y_k\|^2)$ for a bounded matrix $H_U$.
\end{assumption}

\begin{proposition}[Forward-pass variance scaling]
\label{prop:forward_variance}
Under Assumption~\ref{ass:fp_noise}, the variance contribution from forward-pass noise satisfies $\mathbb{V}[J_U(\hat{y}_k) - J_U(y_k)] = \Theta(1/m)$, where $m$ is the number of shots per observable and the constant depends on $\|H_U\|$ and the covariance structure given in equation~\eqref{eq:variance1}.
\end{proposition}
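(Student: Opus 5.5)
The approach is a delta-method (first-order perturbation) argument. We linearize $J_U$ around the exact quantum output $y_k$ using Assumption~\ref{ass:fp_noise}, and then push the shot statistics of $\hat y_k$ through that linear map.

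\textbf{Step 1: statistics of the forward noise.} Write $\eta := \hat y_k - y_k$. Each component $\hat y_{k,j}$ is an $m$-shot empirical mean of a bounded ($\pm1$-valued Pauli) observable. As in Section~\ref{sec:notation}, it is therefore unbiased, with variance $(1-\langle O_j\rangle^2)/m \le 1/m$.

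For commuting outputs sharing bitstrings, $\mathrm{Cov}(\eta)=\Sigma_y/m$, where $\Sigma_y$ is the single-shot covariance of the $d_{\mathrm{out}}$ outputs. For non-commuting outputs with disjoint shot pools, $\Sigma_y$ is diagonal. Either way, $\mathbb E\eta=0$, $\mathrm{Cov}(\eta)=\Sigma_y/m$ with $\Sigma_y$ independent of $m$, and Hoeffding gives $\mathbb E\|\eta\|^{p}=O(m^{-p/2})$ for $p=3,4$.

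\textbf{Step 2: linearization.} By Assumption~\ref{ass:fp_noise}, $J_U(\hat y_k)-J_U(y_k)=H_U\eta + r(\eta)$ with $\|r(\eta)\|\le c\|\eta\|^2$ on the neighborhood. The leading term has zero mean and covariance $H_U(\Sigma_y/m)H_U^\top$, interpreting $H_U$ as the linear map from $\mathbb R^{d_{\mathrm{out}}}$ to $\mathrm{vec}$-Jacobians. Its variance is therefore exactly $\Theta(1/m)$, with constant $\tr(H_U\Sigma_yH_U^\top)$, i.e.\ controlled by $\|H_U\|^2\,\|\Sigma_y\|$. The lower bound $\Omega(1/m)$ needs $H_U\Sigma_y H_U^\top\neq 0$. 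I would state this as the nondegenerate case: some output is not in a $\pm1$ eigenstate, and $H_U$ does not annihilate its support. Otherwise the contribution is $O(1/m)$ trivially.

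\textbf{Step 3: controlling the remainder (main obstacle).} The variance of the sum equals $\mathrm{Var}(H_U\eta)+2\,\mathrm{Cov}(H_U\eta,r)+\mathrm{Var}(r)$. By Cauchy--Schwarz and Step 1, $\mathrm{Var}(r)\le c^2\mathbb E\|\eta\|^4=O(m^{-2})$, and the cross term is $O(m^{-1/2}\cdot m^{-1})=O(m^{-3/2})$. Both are $o(1/m)$, which yields $\Theta(1/m)$.

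The delicate point is that the Taylor bound holds only in a neighborhood of $y_k$, whereas $\hat y_k$ can leave it with small probability. I would split on the event $\{\|\eta\|\le r_0\}$. Hoeffding bounds its complement by $2d_{\mathrm{out}}e^{-m r_0^2/2}$. Off that event, I would use boundedness of $J_U$ on the compact range $[-1,1]^{d_{\mathrm{out}}}$ of the outputs (continuity of the classical layers) to make this contribution exponentially small, hence $o(1/m)$. This completes the argument.

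Finally, I would note the consequence in the paper's notation: the forward contribution depends only on the $m$ shots per output and not on $d$. This makes it subleading to the backward term in Eq.~\eqref{eq:variance1}, whose trace grows with the number of input components, and this is the fact that Proposition~\ref{prop:backward_dominance} uses.
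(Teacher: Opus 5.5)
Your proposal is correct and follows the paper's route. The paper's proof is your Step~2: it linearizes via Assumption~\ref{ass:fp_noise} and pushes a zero-mean readout error with covariance $\Theta(1/m)$ through $H_U$, but it simply posits Gaussian readout noise and stops at leading order, whereas you also bound the remainder and cross term, handle exits from the neighborhood, and correctly note that the $\Omega(1/m)$ half needs $H_U\Sigma_y H_U^\top\neq 0$. One point to state explicitly: your off-neighborhood step uses a hypothesis beyond Assumption~\ref{ass:fp_noise}, namely that $J_U$ is bounded on all of $[-1,1]^{d_{\mathrm{out}}}$, which requires bounded derivatives of the classical layers, not merely their continuity.
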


\begin{proposition}[Backward-pass dominance in high dimensions]
\label{prop:backward_dominance}
For fixed output dimension $d_{\mathrm{out}}$ and large input dimension $d_{\mathrm{in}}$, the backward-pass shot cost scales as $\Theta(d_{\mathrm{in}})$ relative to the forward-pass cost.
\end{proposition}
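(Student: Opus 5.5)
The plan is to count circuit executions on both sides at a common target precision. Then take the ratio, fixing $d_{\mathrm{out}}$ and letting $d_{\mathrm{in}}\to\infty$.

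\emph{Forward pass.} The forward pass must estimate the $d_{\mathrm{out}}$ expectation values $\hat y_k$. With $m$ shots per observable this costs at most $d_{\mathrm{out}}m$ shots. If the readout observables commute (simultaneous $Z$ readout, Section~\ref{sec:notation}), one measurement setting suffices and the cost is only $m$. Either way the forward cost is $\Theta(m)$, independent of $d_{\mathrm{in}}$. By Proposition~\ref{prop:forward_variance}, this buys a forward-induced perturbation of $J_U$ with variance $\Theta(1/m)$.

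\emph{Backward pass.} Here I would use the PSR structure of Eq.~\eqref{eq:psr_intro}. Each input coordinate $v\in\{1,\dots,d_{\mathrm{in}}\}$ needs its own pair of shifted circuits $x^{(v,\pm)}$. Shifted circuits for different $v$ are distinct executions and cannot share bitstrings, so each column $\hat J_{\cdot v}$ costs $\Theta(s_0)$ shots. Its entry variance is $\Theta(\Delta G_v^2/s_0)$, from the entry-wise variance statement preceding Eq.~\eqref{eq:variance1}. For a fair comparison I would set $s_0=\Theta(m)$, so that every column of $\hat J_k$ is known to the same $\Theta(1/m)$ precision as $\hat y_k$. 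The total backward cost is then $2d_{\mathrm{in}}s_0$, or $2Cd_{\mathrm{in}}s_0$ without commutation, which is $\Theta(d_{\mathrm{in}}m)$. The ratio to the forward cost is $\Theta(d_{\mathrm{in}})$, with constants depending only on $d_{\mathrm{out}}$ and the spectral gaps $\Delta G_v$.

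\emph{Matching lower bound.} To show the ratio is $\Theta$ and not merely $O$, I need that no reallocation makes the backward cost $o(d_{\mathrm{in}}m)$ at fixed per-column precision. Two facts give this. The shifts for different coordinates are distinct circuits. And PSR columns are unbiased estimators whose variance is at least the shot-noise floor $c/s_v$, where $c>0$ is guaranteed by the observable variance $1-\langle O\rangle^2$ being bounded away from zero. Requiring $\mathrm{Var}\le\eta/m$ per column then forces $s_v\ge cm/\eta$, and summing over $v$ gives at least $d_{\mathrm{in}}cm/\eta$ shots.

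\emph{Main obstacle.} I expect the hardest step to be making ``at matched precision'' rigorous. The backward error enters $\hat g$ through Eq.~\eqref{eq:variance1}, weighted by $J_U$ and $J_L$. The forward error enters through $H_U(\hat y_k-y_k)\hat J_kJ_L$. So I would state the comparison entry-wise for the Jacobian: equal per-entry variance on $\hat y_k$ and on $\hat J_k$. I would treat the $J_U,J_L$ weightings as $d_{\mathrm{in}}$-independent bounded factors, justified by Assumption~\ref{ass:fp_noise} and fixed $d_{\mathrm{out}}$. Degenerate cases where $1-\langle O\rangle^2\to0$ must be excluded explicitly for the lower bound.
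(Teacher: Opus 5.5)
Your proposal is correct and takes the paper's approach: the paper counts $R_{\mathrm{fwd}}=\Theta(d_{\mathrm{out}}/\sigma_{\mathrm{fwd}}^2)$ forward shots against $2d_{\mathrm{in}}d_{\mathrm{out}}$ PSR evaluations at $\Theta(1/\sigma_{\mathrm{bwd}}^2)$ shots each, then takes the ratio under matched per-entry precision $\sigma_{\mathrm{fwd}}^2=\Theta(\sigma_{\mathrm{bwd}}^2)$, which is exactly your choice $s_0=\Theta(m)$. Your per-column lower bound makes explicit what the paper's $\Theta$ leaves implicit; note that the nondegeneracy you need is of $1-\langle O\rangle^2$ at the shifted points $x^{(v,\pm)}$, with $\Delta G_v$ uniformly bounded, and that comparing precision entry-wise means you never need the $J_U,J_L$ weighting argument from your last paragraph.
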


\begin{corollary}[Backward-focused scaling analysis]
\label{cor:backward_focus}
In the regime $d_{\mathrm{in}} \gg 1$ with $d_{\mathrm{out}} = O(1)$ (typical for classification with few classes), dimension-dependent shot scaling is governed by backward-pass Jacobian estimation. Forward-pass uncertainty is retained for completeness but contributes only a lower-order term that does not affect the dimension-dependent scaling laws derived in Section~\ref{sec:singlestep}.
\end{corollary}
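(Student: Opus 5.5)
The plan is to combine Proposition~\ref{prop:forward_variance} and Proposition~\ref{prop:backward_dominance} into a first-order error decomposition of the gradient estimate, and then compare the two resulting terms in both shot cost and variance. Write $\hat y_k = y_k + \eta$ for the forward-pass noise and $\hat J_k = J_k + E$ for the PSR Jacobian error, with $\mathbb E[E]=0$ by PSR unbiasedness. Assumption~\ref{ass:fp_noise} gives $J_U(\hat y_k)=J_U(y_k)+H_U\eta+O(\|\eta\|^2)$. Expanding $\hat g = J_U(\hat y_k)\hat J_k J_L$ then yields
\[
\hat g - g = J_U E J_L + (H_U\eta) J_k J_L + (H_U\eta) E J_L + O(\|\eta\|^2).
\]
This splits the error into a backward term, a forward term, and cross and higher-order remainders.

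The next step is the cost accounting. By Proposition~\ref{prop:backward_dominance}, the forward term requires measuring only $d_{\mathrm{out}}=O(1)$ observables at the unshifted point, which costs $O(m)$ shots. The backward term requires $2d_{\mathrm{in}}$ shifted evaluations, one pair per input coordinate, so its cost is $\Theta(d_{\mathrm{in}})$ times larger. I would then fix the budget split so that both stages run at the same per-evaluation shot count $s_0$. In that case the forward shots form an $O(1/d_{\mathrm{in}})$ fraction of $R=ds$. This fraction vanishes, so the total budget's scaling exponent in $d$ is set by the backward stage alone.

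The third step compares variances, which is needed so that ``lower-order'' holds for the error and not just the cost. By Proposition~\ref{prop:forward_variance}, the forward term has covariance $\Theta(1/m)$ per entry. Its contribution to $\mathbb E\|\hat g-g\|^2$ is bounded by $\|H_U\|^2\|J_kJ_L\|^2/m$, which involves no sum over $d_{\mathrm{in}}$ independent noise sources. The backward term, by Eq.~\eqref{eq:variance1}, has trace $\sum_{v=1}^{d}\tr(J_U\Sigma_v J_U^\top)[J_L^\top J_L]_{vv}=\Theta(d_{\mathrm{in}}/s_0)$ when $J_L$ has bounded rows. This is the $\tr\Sigma=\Theta(d)$ structure later used in Section~\ref{sec:singlestep}. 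The cross term is a product of independent zero-mean errors, since it pairs forward noise with shifted-circuit noise, so it contributes $O(1/(m s_0))$. The remainder contributes $O(1/m^2)$. All of these are lower order in the small-noise regime of Assumption~\ref{ass:1s_3}.

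The main obstacle is the forward term's directional effect. It is a rank-limited, bias-like perturbation $H_U\eta J_kJ_L$ whose norm is not obviously dominated by the backward noise when $\|J_kJ_L\|$ itself grows with $d_{\mathrm{in}}$. I would handle this by stating the corollary under bounded $\|J_kJ_L\|$; in the paper's experiments this holds automatically, because the logits are the expectations themselves and $J_U$ is the softmax Jacobian. I would then note that, in any case, $m$ can be raised to $\Theta(d_{\mathrm{in}} s_0)$ at a cost equal only to a constant multiple of the backward budget. Doing so shrinks the forward term below the backward term without changing any exponent, which is the claim of the corollary.
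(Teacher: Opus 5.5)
Your proposal is correct, but it proves more than the paper does. The paper's proof is the single observation that, by Proposition~\ref{prop:backward_dominance}, $R_{\mathrm{bwd}}/R_{\mathrm{fwd}}=\Theta(d_{\mathrm{in}})$ diverges when $d_{\mathrm{out}}$ is fixed. That is your second step, and the paper stops there. It reads ``lower-order'' purely as a statement about shot cost, under the premise $\sigma^2_{\mathrm{fwd}}=\Theta(\sigma^2_{\mathrm{bwd}})$ built into that proposition.

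Your first-order decomposition of $\hat g-g$ adds an error-level justification. It shows that forward-pass \emph{uncertainty} is also subdominant in $\mathbb{E}\|\hat g-g\|^2$. It also surfaces a condition the paper leaves implicit: the forward term scales as $\|J_kJ_L\|^2/m$, so at a $d$-independent forward budget it is negligible only if $\|J_kJ_L\|$ does not grow with $d_{\mathrm{in}}$. The paper's route buys brevity and needs nothing beyond the cost count. Yours buys a statement about the gradient error itself, at the price of either that extra hypothesis or a constant-factor increase of the budget via $m=\Theta(d_{\mathrm{in}}s_0)$.

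Some details to tighten:
\begin{itemize}
\item The trace from Eq.~\eqref{eq:variance1} is $\sum_v\tr(J_U\Sigma_vJ_U^\top)[J_LJ_L^\top]_{vv}$, not $[J_L^\top J_L]_{vv}$. It involves the row norms of $J_L$, as your ``bounded rows'' suggests.
\item The cross term is $O(d_{\mathrm{in}}/(ms_0))$ rather than $O(1/(ms_0))$, because $\mathbb{E}\|EJ_L\|^2=\Theta(d_{\mathrm{in}}/s_0)$. It is still a factor $1/m$ below the backward term, so your conclusion stands.
\item The fallback $m=\Theta(d_{\mathrm{in}}s_0)$ needs $\|J_kJ_L\|^2=O(d_{\mathrm{in}})$, which you should state. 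It holds because every PSR entry is bounded by $\Delta G_j\|O\|$ and $\|J_L\|$ is bounded.
\item ``The logits are expectation values'' only bounds the entries of $J_k$, not its norm. In the experiments, bounded $\|J_k\|$ instead comes from the Hadamard-plus-phase-dense encoding. There each $\partial_j\langle O\rangle$ carries an amplitude factor $|\psi_j|=2^{-q/2}$, which gives $\|\nabla_x\langle O\rangle\|\le 2\cdot 2^{-q/2}$.
\end{itemize}
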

Proposition~\ref{prop:forward_variance}, Proposition~\ref{prop:backward_dominance}, and Corollary~\ref{cor:backward_focus} are proved in \ref{app:supp_derivations}.

\paragraph{Compatible observables.}
When quantum layer outputs consist of mutually commuting observables (e.g.\ $\{Z_i\}$ measured simultaneously), both forward and backward circuit evaluations reduce by a factor of $d_{\mathrm{out}}$ ($=C$ for the output layer), since all observables can be measured in a single circuit run. This optimization preserves the $\Theta(d_{\mathrm{in}})$ scaling ratio in Proposition~\ref{prop:backward_dominance}, as the reduction applies equally to numerator and denominator. In the commuting case the per-evaluation count drops from $2C$ to $2$, so the per-dimension budget becomes $s = 2\,s_0$ rather than $2C\,s_0$; the uniform formula $R = ds$ applies in either case (Section~\ref{sec:notation}).

\begin{remark}[Classical gradient cost and comparison]
\label{rem:classical_cost}
For a classical neural network with input dimension~$d$ and $P$ parameters, a single backpropagation pass computes the exact input gradient $\nabla_{x} L$ in $\Theta(P)$ floating-point operations. The forward pass also requires $\Theta(P)$ operations, giving a \emph{gradient cost ratio}
\[
  \rho_{\text{classical}}
  \;=\; \frac{C_{\text{bwd}}}{C_{\text{fwd}}}
  \;=\; O(1),
\]
independent of~$d$. We call $\rho$ the \emph{gradient cost ratio}: the cost of obtaining the gradient measured in units of one forward inference, that is, how many forward inferences a single gradient costs. This is a theorem, not merely an empirical observation: the Baur--Strassen result~\cite{baur_complexity_1983} and its modern treatment~\cite{griewank_evaluating_2008} (which terms the quantity the \emph{cost ratio}) prove that the cost of computing the full gradient vector via reverse-mode automatic differentiation is at most a small constant multiple ($\le 5{\times}$) of the forward pass for any differentiable program: the cheap-gradient principle, which bounds $\rho\le5$ for any classical program, independent of~$d$.

In contrast, for a quantum classifier the forward pass requires $R_{\text{fwd}} = \Theta(d_{\text{out}}/\sigma^2_{\text{fwd}})$ shots, independent of input dimension~$d$ (Proposition~\ref{prop:backward_dominance}), while a successful single-step gradient attack requires $R_{\text{bwd}} = \Theta(d^{2.5})$ shots (Proposition~\ref{prop:fgsm-optimization}, with $\epsilon \propto \sqrt{d}$ and $\|\gstar\|=\Theta(1)$). The quantum gradient cost ratio is therefore
\[
  \rho_{\text{quantum}}(d)
  \;=\; \frac{R_{\text{bwd}}}{R_{\text{fwd}}}
  \;=\; \Theta(d^{2.5}).
\]
The structural asymmetry is that classical forward and backward passes both scale with~$P$ (and hence with~$d$), keeping their ratio constant, whereas the quantum forward pass cost is $d$-independent (it depends only on the number of output classes $C$ and the desired precision) while the backward pass scales polynomially with~$d$. This decoupling makes the gradient cost ratio diverge: the cheap-gradient principle fails on quantum hardware, $\rho_{\text{quantum}}/\rho_{\text{classical}} = \Theta(d^{2.5})$ at the geometric floor, realized as $\Theta(d^{3.00})$ by the tested models (Section~\ref{subsec:classical_baseline}).
\end{remark}

\section{Shot-budget scaling laws}
\label{sec:scalinglaws}

\paragraph{Scope and relaxation of the working assumptions.}
Assumption~\ref{ass:1s_1} is valid when $\epsilon$ is small relative to the local radius of curvature $1/\|\nabla^2 L\|$. Under the natural scaling $\epsilon\propto\sqrt{d}$ motivated by norm concentration (\ref{app:gauss_l2}), when $d$ increases, the $\epsilon$-ball with increasing radius sweeps an increasingly larger portion of input space in absolute $\ell_2$ terms, and empirical evidence from classical adversarial ML suggests that loss-landscape curvature typically increases in high dimensions due to the proliferation of local modes and saddle points~\cite{dauphin2014saddle,li2018losslandscape}; second-order terms $O(\epsilon^2\|\nabla^2 L\|)$ therefore become non-negligible. The net empirical cost still exceeds this baseline, but direct measurement (\S\ref{par:gnorm_baseline}) locates the driver elsewhere: the dimensional decay of the gradient norm $\|\gstar\|\propto d^{-0.73}$, which the cost coefficient absorbs, accounts for the empirical excess over the baseline on both datasets (Section~\ref{sec:Res}). Local curvature itself, measured by a transverse line-scan of the $\epsilon$-step, mildly favors the attacker: the curved landscape forgives noise-induced misalignment, so the shot-noise penalty is smaller than the linear prediction ($\zeta<1$, \S\ref{par:gnorm_baseline}). The scaling laws below therefore remain shot-noise baselines whose empirical excess is gradient-norm-driven rather than curvature-driven. A relaxation that drops Assumptions~\ref{ass:1s_2}--\ref{ass:1s_3} and replaces them by a covariance condition $\tr\Sigma=\Theta(d)$ is developed in Section~\ref{subsec:expected_relax} and yields the same qualitative scaling.

\subsection{Single-step attacks}
\label{sec:singlestep}

We now derive the explicit scaling laws from the analytical framework of Section~\ref{sec:methods} (PSR unbiasedness and covariance, backward-pass cost dominance, and the single-step attack of Eq.~\eqref{eq:1s_attack} under Assumptions~\ref{ass:1s_1}--\ref{ass:1s_3}). Single-step methods such as FGSM draw a single noisy gradient and spend their entire perturbation budget along that direction, making them sensitive to gradient estimation errors: any misalignment induced by shot noise directly degrades attack efficacy. We derive explicit shot-scaling laws for both i.i.d.\ and correlated gradient-noise models. The key results are Propositions~\ref{prop:fgsm-optimization} and~\ref{prop:fgsm-success}, which show that maintaining constant attack efficacy requires $s=\Theta(\epsilon d)$ shots per input dimension, corresponding to a total gradient-evaluation budget $R=ds=\Theta(\epsilon d^{2})$, i.e.\ $\Theta(d^{2})$ for fixed $\epsilon$ and $\Theta(d^{5/2})$ under $\epsilon\propto\sqrt d$, 
and Corollary~\ref{cor:relaxed_shots}, which recovers the same scaling under weaker assumptions.

\subsubsection{Result: shot-budget guarantees for single-step attacks}

\paragraph{New shot-scaling law.}
\begin{proposition}[FGSM shot budget versus optimization gap]\label{prop:fgsm-optimization}
Under Assumptions~\ref{ass:1s_1}, \ref{ass:1s_2}, and~\ref{ass:1s_3}, the $l_2$ FGSM attack requires shots per dimension
\[
    s = O\!\left(\frac{\epsilon d}{\Delta\,\|\gstar\|}\right)
\]
and therefore total shots
\[
    d \; s = O\!\left(\frac{\epsilon d^2}{\Delta\,\|\gstar\|}\right)
\]
to achieve an optimization error $L(\delta) - L(\delta^*)=\Delta$. When $\|\gstar\|$ is locally constant, this simplifies to $s=\Theta(\epsilon d/\Delta)$ and $d \ s = \Theta(\epsilon d^2/\Delta)$.
\end{proposition}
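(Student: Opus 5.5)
Under Assumption~\ref{ass:1s_1} the optimization gap is purely geometric. The constrained optimum is $\delta^*=-\epsilon\,\gstar/\|\gstar\|$ for a loss being minimized (equivalently $+\epsilon\,\gstar/\|\gstar\|$ for maximization; the sign convention does not matter). The FGSM step built from the noisy estimate is $\delta=\epsilon\,\hat g/\|\hat g\|$. Linearity then gives
\[
L(\delta)-L(\delta^*)=\epsilon\,\|\gstar\|\,(1-\cos\alpha),
\]
where $\alpha$ is the misalignment angle between $\hat g$ and $\gstar$. The plan is to turn this identity into a shot requirement in three steps: compute $\mathbb E[1-\cos\alpha]$ under the Gaussian model of Assumption~\ref{ass:1s_2}, use Assumption~\ref{ass:1s_3} to linearize that expectation, and invert the relation $\Delta=\epsilon\|\gstar\|\,\mathbb E[1-\cos\alpha]$ for $s$.

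For the second step, write $\hat g=\gstar+\xi$ with $\xi\sim\mathcal N(0,(\sigma^2/s)I_d)$. Split $\xi$ into a component $\xi_\parallel$ along $\gstar$ and a component $\xi_\perp$ in the orthogonal $(d-1)$-dimensional subspace. Then
\[
\cos\alpha=\frac{\|\gstar\|+\xi_\parallel}{\sqrt{(\|\gstar\|+\xi_\parallel)^2+\|\xi_\perp\|^2}} .
\]
Expanding to second order in the noise gives $1-\cos\alpha\approx\|\xi_\perp\|^2/(2\|\gstar\|^2)$. Since $\mathbb E\|\xi_\perp\|^2=(d-1)\sigma^2/s$, this yields
\[
\mathbb E[1-\cos\alpha]\approx\frac{(d-1)\sigma^2}{2s\|\gstar\|^2}.
\]
This agrees with the von Mises--Fisher picture in Table~\ref{tab:notation}: with $\kappa_{\mathrm{vMF}}\approx s\|\gstar\|^2/\sigma^2$, one has $\mathbb E[1-\cos\alpha]\approx(d-1)/(2\kappa_{\mathrm{vMF}})$.

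Setting this equal to $\Delta$ gives
\[
\Delta\approx\frac{\epsilon(d-1)\sigma^2}{2s\|\gstar\|},
\]
so that $s=\Theta\!\big(\epsilon d\,\sigma^2/(\Delta\|\gstar\|)\big)$ and, using $R=ds$, $ds=\Theta\!\big(\epsilon d^2\sigma^2/(\Delta\|\gstar\|)\big)$. Absorbing $\sigma^2$ into the constant gives the stated $O(\cdot)$ forms. When $\|\gstar\|$ is locally constant the $\Theta$ statements follow. The matching lower bound comes from the fact that the leading term is exact in expectation up to higher-order corrections, so fewer shots leave a strictly larger gap.

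The main obstacle is controlling the remainder in the small-noise expansion of the second step. The cross term $\xi_\parallel\|\xi_\perp\|^2/\|\gstar\|^3$ has zero mean. The remaining terms are of relative size $O(d\sigma^2/(s\|\gstar\|^2))$. Making this rigorous therefore requires $d\sigma^2/(s\|\gstar\|^2)\ll1$, which is slightly stronger than the per-component condition $s\gg s_{\mathrm{vMF}}(d)$ stated in Assumption~\ref{ass:1s_3}. I would address this in one of two ways. The first option is to read Assumption~\ref{ass:1s_3} as the regime in which $\|\xi\|\ll\|\gstar\|$ with high probability, using chi-square concentration of $\|\xi_\perp\|^2$ around $(d-1)\sigma^2/s$. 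The second option is to observe that in the target regime $\Delta$ is small, so $\mathbb E[1-\cos\alpha]=\Delta/(\epsilon\|\gstar\|)\ll1$, and this already forces $d\sigma^2/(s\|\gstar\|^2)\ll1$. The second option closes the argument self-consistently.
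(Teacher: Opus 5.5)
Your proof is correct and is essentially the paper's argument: linearity reduces the gap to $\epsilon\|\gstar\|(1-\cos\alpha)$, the expected misalignment is $(d-1)\sigma^2/(2s\|\gstar\|^2)$, and inverting for $s$ gives the bound (keep the sign of $\delta$ matched to that of $\delta^*$). The one difference is that you get the expected misalignment from a direct second-order expansion of the Gaussian estimate, whereas the paper asserts a von Mises--Fisher law and uses its large-$\kappa_{\mathrm{vMF}}$ asymptotics, which is only an approximation for a normalized Gaussian; your observation that the expansion actually needs $s\gg d\sigma^2/\|\gstar\|^2$ (i.e.\ $\kappa_{\mathrm{vMF}}\gg d$), stronger than Assumption~\ref{ass:1s_3} as literally stated, is also right. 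The paper's proof relies on that same regime implicitly, and its fits start at the critical count $s^\star(d)=(d-1)\sigma^2/2\|\gstar\|^2$; your self-consistency closure via small $\Delta$ handles it.
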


\begin{proof}
We analyze the $\ell_2$-constrained FGSM attack, where the perturbation norm directly reflects the Euclidean structure of shot-noise estimation error (Sec.~\ref{sec:notation} and Sec.~\ref{sec:uncertain_gradient}). Under $\ell_2$ constraints, the optimal first-order perturbation aligns exactly with the gradient direction, enabling a clean analysis of how shot noise degrades attack effectiveness through angular misalignment. Consider the perturbation $\delta = -\epsilon \hat{g}/\|\hat{g}\|$ with measurement-driven gradient estimate $\hat{g}\sim N(\nabla L(x,y),I\sigma^2/s)$. Combining this with Assumption~\ref{ass:1s_1} yields
\begin{equation}
\label{eq:L_delta}
    L(x+\delta,y) = L(x, y) - \epsilon \|\nabla L(x,y)\| \cos(\alpha)\;,
\end{equation}
where $\alpha$ is the angle between $\hat{g}$ and the true gradient $\nabla L(x,y)$. Under Assumptions~\ref{ass:1s_1} and~\ref{ass:1s_2}, the normalized estimator $\hat{g}_{n} = \hat{g}/\|\hat{g}\|$ follows a von Mises--Fisher law with concentration parameter $\kappa_{\mathrm{vMF}}\approx s\|\gstar\|^2/\sigma^2$. Using the large-$\kappa_{\mathrm{vMF}}$ approximation from Assumption~\ref{ass:1s_3},
\begin{equation}
\label{eq:CosSim}
    \mathbb{E}[\cos \alpha] \approx 1-\frac{d-1}{2\kappa_{\mathrm{vMF}}} \approx 1-\frac{(d-1)\sigma^2}{2s\,\|\gstar\|^2}\;.
\end{equation}
Substituting into (\ref{eq:L_delta}) shows that, when the per-shot variance $\sigma^2$ is dimension-independent (the phase-dense encoding's generators have spectral gap $\Delta G_j=1$, making $\sigma^2$ flat by construction), the single-step loss surplus satisfies
\begin{equation}
    L(x+\delta,y) - L(x+\epsilon \gstar/\|\gstar\|,y) \approx \epsilon\,\frac{(d-1)\sigma^2}{2s\,\|\gstar\|}\;,
\end{equation}
and hence, using the local linearity benchmark, the optimization gap scales as
\begin{equation}
    L(\delta) - L(\delta^*) = O\!\left(\frac{\epsilon d}{s\,\|\gstar\|}\right)\;.
\end{equation}
Inverting the last expression to solve for the shots per dimension that achieve a target gap $\Delta$ yields the stated bounds.
\end{proof}

Proposition~\ref{prop:fgsm-optimization} yields a shot-noise scaling baseline $s=\Theta(\epsilon d)$, which evaluates to $\Theta(d^{3/2})$ under $\epsilon \propto \sqrt{d}$ and $\|\gstar\| = \Theta(1)$. The three assumptions underlying this result, local linearity (Assumption~\ref{ass:1s_1}), i.i.d.\ Gaussian noise (Assumption~\ref{ass:1s_2}), and small noise (Assumption~\ref{ass:1s_3}), are each optimistic for the attacker: local linearity means the gradient is maximally informative about the loss landscape, i.i.d.\ noise is the most favorable covariance structure, and the small-noise regime ensures gradient estimates are already high-quality. The $\|\gstar\|=\Theta(1)$ premise is a design property rather than an idealization: architectures built to avoid barren plateaus (quantum convolutional networks, whose gradients provably vanish at most polynomially in qubit count~\cite{pesah_absence_2021}; shallow circuits with local cost functions~\cite{Cerezo2021}; see~\cite{larocca_barren_2025} for a review) keep gradient magnitudes dimension-stable up to polylogarithmic factors in $d$, and for such models the $\Theta(d^{3/2})$ per-dimension baseline is the realized cost. Generic deep circuits without plateau mitigation enjoy an additional layer of protection: their gradient norms decay with dimension (measured directly in Section~\ref{sec:Res}: $\|\gstar\|\propto d^{-0.73}$, exponential in qubit count at $d\approx2^{q}$, the plateau signature), which the coefficient's $1/\|\gstar\|$ factor converts into the higher realized scaling $s \propto d^{2.00}$ on MNIST ($d^{2.07}$ on Fashion-MNIST), as detailed below.

\paragraph{Empirical scaling beyond the baseline.}
\label{par:empirical_surplus}
Proposition~\ref{prop:fgsm-optimization} predicts $s=\Theta(\epsilon d)=\Theta(d^{3/2})$ when its coefficient is evaluated under $\|\gstar\|=\Theta(1)$ (with $\epsilon\propto\sqrt d$). The sharpest experimental test of the law targets the coefficient with the gradient norm folded out: the measured combination $\mathrm{d}L\cdot\|\gstar\|$ follows $d^{1.46}$ on MNIST and $d^{1.42}$ on Fashion-MNIST against the parameter-free $d^{3/2}$ prediction (\S\ref{par:gnorm_baseline}), confirming the geometric factor of the bound. The bare shot requirement is then this geometric factor compounded by the model's gradient scale: direct measurement shows $\|\gstar\|\propto d^{-0.73}$ on MNIST ($d^{-0.65}$ on Fashion-MNIST) for our unmitigated deep circuits, and the resulting in-regime fits, $s \propto d^{2.00}$ and $s \propto d^{2.07}$ (Section~\ref{sec:Res}), require no parameterization beyond the bound's own $1/\|\gstar\|$ factor; landscape curvature contributes a small, oppositely-signed effect (a mild attacker advantage).

\begin{remark}[$\epsilon$-scaling]
If $\epsilon$ is dimension‑independent, maintaining constant FGSM efficacy requires $\Theta(d^{2})$ total shots.
If $\epsilon$ scales with dimension according to norm concentration (\ref{app:gauss_l2}), the requirement increases to $\Theta(d^{2.5})$ or higher.
\end{remark}

\paragraph{Shot-noise scaling baseline.}
\label{par:attacker_baseline}
Proposition~\ref{prop:fgsm-optimization} derives $R=\Theta(\epsilon d^{2})$ as the baseline scaling under Assumptions~\ref{ass:1s_1}--\ref{ass:1s_3}, which evaluates to $\Theta(d^{5/2})$ under $\epsilon\propto\sqrt d$, 
represents an idealized lower‑envelope scaling under a combination of assumptions that favor the attacker, and should therefore be interpreted as a baseline rather than a typical regime.
However, each assumption represents conditions that are favorable to the attacker: (i)~local linearity means the gradient is maximally informative about the loss landscape within the $\epsilon$-ball; (ii)~i.i.d.\ noise is the most favorable covariance structure for gradient alignment; (iii)~the small-noise regime ensures gradient estimates are already high-quality. For generic QML loss landscapes, where curvature grows with dimension, noise correlations are non-trivial, and gradient signal-to-noise varies across components, departing from these assumptions will typically increase the attacker's required shot budget. Corollary~\ref{cor:relaxed_shots} illustrates one such departure: under correlated noise with $\tr\Sigma = \Theta(d)$, the minimum sufficient total budget grows to $R_\mathrm{min}=\Theta(\epsilon^{2} d^{2})=\Theta(d^{3})$ under $\epsilon\propto\sqrt d$. The empirical excess over the baseline, $d^{0.50}$ (MNIST) and $d^{0.57}$ (Fashion-MNIST) per dimension (\S\ref{par:empirical_surplus}), is dominated by the measured gradient-norm decay (\S\ref{par:gnorm_baseline}): of the three idealizations, $\|\gstar\|=\Theta(1)$ is the load-bearing one, while local curvature, measured directly, mildly favors the attacker rather than the defender. Unlike the other two, $\|\gstar\|=\Theta(1)$ is attainable by construction: plateau-mitigated architectures realize it up to logarithmic factors~\cite{pesah_absence_2021,larocca_barren_2025}, whereas generic deep circuits violate it in the defender's favor. We therefore interpret $R=\Theta(\epsilon d^{2})$ (which gives $d^{5/2}$ under $\epsilon\propto\sqrt d$) as the \emph{shot-noise scaling baseline}: it quantifies what a single-step PSR-based attack costs against the most attacker-favorable model class, with departures driving costs higher; the plateau-prone circuits tested in Section~\ref{sec:Res} realize total exponents of $3.00$ (MNIST) and $3.07$ (Fashion-MNIST).

The result strengthens the robustness claim: even under the most favorable conditions for the attacker, the cost is already $R=\Theta(\epsilon d^{2})=\Theta(d^{5/2})$. In practice the measured gradient-norm decay drives it higher still (\S\ref{par:gnorm_baseline}), while local curvature returns a small offsetting advantage to the attacker. Whether alternative attack strategies can exploit that curvature further while respecting the $s=\Theta(\epsilon d)=\Theta(d^{3/2})$ shot-noise baseline remains an open question for future investigation.

\paragraph{Estimator-independence of the exponent.} The derivation uses only that each component is recovered by an unbiased estimator with per-component variance $\Theta(\sigma^2/s)$, not the two-shift structure of the parameter-shift rule; the $\Theta(d^2)$ exponent is therefore a property of unbiased gradient extraction, not of a particular estimator. Two routes could lower it, and both are closed in the expressive regime. Co-measuring several components at once is bounded by the expressivity--measurement-efficiency trade-off~\cite{chinzei_tradeoff_2025}, under which a near-maximal dynamical Lie algebra forces the number of simultaneously measurable components to $O(1)$. Estimating a directional derivative along a generic dense direction requires a quadrature exact on the circuit's full Fourier support, which for an expressive circuit is far larger than $d$, so it offers no saving over the $d$ axis-aligned probes. Parameter-shift is the canonical estimator that attains the bound~\cite{Wierichs2022}; biased surrogates are treated separately in \ref{app:zero-order}.

The preceding i.i.d.\ analysis and the success-probability analysis that follows (Section~\ref{subsec:success_prob}) provide two complementary paths to the same quadratic scaling conclusion: the first relaxes distributional assumptions, the second strengthens the guarantee from expectation to tail probability.

\subsubsection{Result: correlated-noise guarantee}
\label{subsec:expected_relax}
\begin{corollary}[Minimum sufficient shot budget under correlated noise]
\label{cor:relaxed_shots}
Fix a target loss-surplus tolerance $\Delta$ (written $\Delta$ throughout, and not to be confused with the compound symbols $\Delta L$, $\Delta G_i$) on the expected single-step loss surplus. Consider an attacker who allocates a per-dimension budget of $s$ shots and seeks to drive the expected single-step loss surplus to $\mathbb{E}[\Delta L]\le\Delta$. Because the surplus bound is monotonically decreasing in $s$ (proof below), this target is met for every allocation at or above the \emph{minimum sufficient budget}
\begin{equation}
    s_\mathrm{min}(\Delta) \;=\; \frac{4\,\epsilon^2\,\tr\Sigma}{\Delta^2}\,;
\end{equation}
that is, any $s\ge s_\mathrm{min}$ certifies the target tolerance $\Delta$. More generally, an arbitrary per-dimension allocation $s$ certifies the tolerance
\begin{equation}
    \Delta_\mathrm{cert}(s) \;=\; 2\,\epsilon\,\sqrt{\frac{\tr\Sigma}{s}}\,,
\end{equation}
which meets the target, $\Delta_\mathrm{cert}(s)\le\Delta$, precisely when $s\ge s_\mathrm{min}$. If $\tr\Sigma$ scales linearly with dimension, i.e., $\tr\Sigma=\Theta(d)$, then a dimension-explicit budget that certifies $\Delta$ is
\begin{equation}
    s_\mathrm{min} \;=\; \Theta\!\left(\frac{\epsilon^2}{\Delta^2}\, d\right),\qquad
    R_\mathrm{min}=d\times s_\mathrm{min} \;=\; \Theta\!\left(\frac{\epsilon^2}{\Delta^2}\, d^2\right).
\end{equation}
Thus, under this relaxed assumption, the \emph{total} gradient budget required to hold expected FGSM efficacy constant grows quadratically in $d$.
\end{corollary}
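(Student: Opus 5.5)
I will work only under Assumption~\ref{ass:1s_1} (local linearity) together with the relaxed noise model: the gradient estimate is $\hat g=\gstar+\eta$ with $\mathbb E[\eta]=0$ and $\mathrm{Cov}(\eta)=\Sigma/s$. Here $\Sigma$ is the per-shot covariance, so an allocation of $s$ shots per dimension scales it by $1/s$. Gaussianity and small noise are not used.

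The FGSM step is $\delta=\epsilon\hat g/\|\hat g\|$ and the optimum is $\delta^*=\epsilon\gstar/\|\gstar\|$, up to the sign convention for the attacker's loss. Under linearity the surplus is
\[
  \Delta L=\bigl|L(x+\delta^*)-L(x+\delta)\bigr|=\epsilon\bigl(\|\gstar\|-\gstar\cdot\hat g/\|\hat g\|\bigr).
\]

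The key step is a deterministic, distribution-free bound on this surplus by the estimation error. Write $u=\gstar/\|\gstar\|$ and $\hat u=\hat g/\|\hat g\|$. Then $\Delta L=\epsilon\,\gstar\cdot(u-\hat u)\ge 0$, and I need
\[
  \|\gstar\|-\gstar\cdot\hat u\;\le\;2\|\hat g-\gstar\|.
\]
To show this, write $\gstar\cdot\hat u=\hat g\cdot\hat u-(\hat g-\gstar)\cdot\hat u=\|\hat g\|-(\hat g-\gstar)\cdot\hat u$. Then
\[
  \|\gstar\|-\gstar\cdot\hat u=\bigl(\|\gstar\|-\|\hat g\|\bigr)+(\hat g-\gstar)\cdot\hat u\le\|\gstar-\hat g\|+\|\hat g-\gstar\|,
\]
using the reverse triangle inequality on the first term and Cauchy--Schwarz on the second. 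This gives $\Delta L\le 2\epsilon\|\eta\|$ pointwise.

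Taking expectations and applying Jensen's inequality gives
\[
  \mathbb E[\Delta L]\le 2\epsilon\,\mathbb E\|\eta\|\le 2\epsilon\sqrt{\mathbb E\|\eta\|^2}=2\epsilon\sqrt{\tr\Sigma/s}=\Delta_{\mathrm{cert}}(s).
\]
This is clearly monotone decreasing in $s$. Solving $\Delta_{\mathrm{cert}}(s)\le\Delta$ gives $s\ge 4\epsilon^2\tr\Sigma/\Delta^2=s_{\min}$, so every $s\ge s_{\min}$ certifies the tolerance.

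Substituting $\tr\Sigma=\Theta(d)$ then yields $s_{\min}=\Theta(\epsilon^2 d/\Delta^2)$ and $R_{\min}=d\,s_{\min}=\Theta(\epsilon^2 d^2/\Delta^2)$.

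The main obstacle is interpretive rather than technical: getting the normalization of $\Sigma$ consistent with the convention $R=ds$. I will state explicitly that $\Sigma$ denotes the single-shot-per-dimension covariance, so that $\mathrm{Cov}(\hat g)=\Sigma/s$. In the chain-rule setting I will cite Eq.~\eqref{eq:variance1} to justify that $\tr\Sigma=\Theta(d)$ when each component carries $\Theta(\sigma^2)$ variance.

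I will also remark that the $\Theta$ is meant as the order of this sufficient budget, not a matching lower bound. Finally, the degenerate case $\hat g=0$ has probability zero or can be assigned an arbitrary direction; the bound still holds there because $\Delta L\le 2\epsilon\|\gstar\|=2\epsilon\|\eta\|$.
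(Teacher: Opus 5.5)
Your proof is correct and is essentially the paper's argument. Both establish a deterministic pointwise bound $\|\gstar\|-\gstar\cdot\hat u=\|\gstar\|(1-\cos\alpha)\le 2\|\eta\|$, apply Cauchy--Schwarz/Jensen to get $\mathbb{E}[\Delta L]\le 2\epsilon\sqrt{\tr\Sigma/s}$, and invert for $s_{\mathrm{min}}$. The only difference is that you derive the pointwise bound directly from the reverse triangle inequality, which holds at every noise level, whereas the paper routes it through $\cos\alpha\ge(1-\rho)/(1+\rho)$; that intermediate step is valid only for $\rho\le 1$, though the final bound $1-2\rho$ is trivially true when $\rho\ge 1$.
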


The correlation-agnostic alignment bound and the minimum-budget computation are given in \ref{app:supp_derivations}.

\paragraph{Discussion.} Compared to the i.i.d.-based vMF analysis (which yields a $\Theta(d/s)$ expected cosine loss $1-\mathbb{E}[\cos\alpha]$), the relaxed expected bound scales as $\Theta(\sqrt{d/s})$ when $\tr\Sigma=\Theta(d)$. Thus, keeping $\mathbb{E}[\Delta L]$ fixed forces a per-dimension \emph{minimum budget} that scales as $s_\mathrm{min}=\Theta(\epsilon^2 d)$ and a total $R_\mathrm{min}=\Theta(\epsilon^2 d^2)$, a conservative, assumption-light provisioning rule. In our experiments we set $\epsilon\propto d^{0.5}$; the fitted in-regime shot law is $s\propto d^{2.00}$, which equivalently can be written as $s\propto \epsilon\,d^{1.50}$ after factoring out the imposed $d^{0.5}$ scaling of $\epsilon$. This empirical exponent coincides with the relaxed minimum-budget law $s\propto \epsilon^2 d$ (i.e., $d^2$ under the same $\epsilon$) and lies below the i.i.d./vMF law with the measured gradient norm inserted ($s\propto \epsilon d/\|\gstar\|$, i.e., $d^{2.23}$ given $\|\gstar\|\propto d^{-0.73}$), the shortfall reflecting the measured landscape factor $\zeta(d)<1$ and the extraction sensitivity of the fit. Table~\ref{tab:scaling} summarises these regimes side by side.

\begin{table}[t]
\centering
\caption{\textbf{FGSM shot-scaling regimes.} Each row states the setting, the resulting per-dimension and total shot scaling, the assumptions required, and the source result.}
\label{tab:scaling}
\scriptsize
\begin{tabular}{lllll}
\toprule
\textbf{Setting} & $s(d)$ & $R=\Theta(d{\cdot}s)$ & \textbf{Assumptions} & \textbf{Source}\\
\midrule
Fixed $\epsilon$, $\|\gstar\|{=}\Theta(1)$, const.\ $\tau$
  & $\Theta(d)$ & $\Theta(d^2)$
  & Asm.~\ref{ass:1s_1}--\ref{ass:1s_3}
  & Prop.~\ref{prop:fgsm-success}\\
$\epsilon{\propto}\sqrt{d}$, $\|\gstar\|{=}\Theta(1)$
  & $\Theta(d^{3/2})$ & $\Theta(d^{5/2})$
  & Asm.~\ref{ass:1s_1}--\ref{ass:1s_3}
  & Prop.~\ref{prop:fgsm-optimization}\\
$\epsilon{\propto}\sqrt{d}$, $\|\gstar\|{=}\Theta(\sqrt{d})$
  & $\Theta(d)$ & $\Theta(d^2)$
  & Asm.~\ref{ass:1s_1}--\ref{ass:1s_3}
  & Prop.~\ref{prop:fgsm-optimization}\\
Correlated noise, $\tr\Sigma{=}\Theta(d)$, $\epsilon{\propto}\sqrt{d}$
  & $O(d^{2})$ & $O(d^{3})$
  & $\tr\Sigma{=}\Theta(d)$ only
  & Cor.~\ref{cor:relaxed_shots}\\
Empirical MNIST ($\|\gstar\|{\propto}d^{-0.73}$ measured)
  & ${\approx}\,d^{2.00}$ & ${\approx}\,d^{3.00}$
  & Measured; in-regime fit
  & \S\ref{par:empirical_surplus}, Sec.~\ref{sec:Res}\\
\bottomrule
\end{tabular}
\end{table}

\subsubsection{Result: success probabilities under measurement uncertainty}
\label{subsec:success_prob}
Having established shot-scaling requirements for expected loss in the preceding sections, we now derive explicit success-probability bounds. While expected-loss analysis captures average attack degradation, security guarantees often require bounding the probability that an attack succeeds on any given sample. Here, we translate the concentration properties of noisy gradient estimates into probabilistic robustness guarantees.

\begin{figure}[!htbp]
    \centering
\begin{tikzpicture}[>=Stealth, scale=1.6]

  \coordinate (X)    at (1.000, 0.800);
  \coordinate (Foot) at (2.096, 1.896);
  \coordinate (B1)   at (3.864, 0.128);
  \coordinate (B2)   at (0.328, 3.664);
  \coordinate (Gtip) at (3.121, 2.921);
  \coordinate (Ghat) at (3.762, 1.972);
  \coordinate (Xadv) at (2.841, 1.581);

  \fill[purple!8] (B2) -- (B1) -- (3.864, 3.664) -- cycle;

  \draw[purple, very thick] (B1) -- (B2)
    node[pos=0.88, above, sloped, purple, font=\small] {decision boundary};

  % perpendicular distance l_x (collinear with g*)
  \draw[thin, gray] (X) -- (Foot);
  \draw[thin, gray] (1.926, 1.896) -- (2.011, 1.811) -- (2.096, 1.726);
  \node[font=\small, gray, left=1pt] at (1.45, 1.35) {$l_x$};

  % tolerance cone of half-angle tau around g*: edge lines + a vertex arc make tau read as an angle
  \draw[densely dotted, gray, thick] (X) -- ([shift={(84.2:2.0)}] X);
  \draw[densely dotted, gray, thick] (X) -- ([shift={(5.8:2.0)}] X);
  \draw[densely dotted, gray, thick]
    ([shift={(5.8:2.0)}] X) arc (5.8:84.2:2.0);
  % tau measured as the half-angle from g* (45 deg) to the upper cone edge (84.2 deg)
  \draw[gray, thick] ([shift={(45:1.2)}] X) arc (45:84.2:1.2);
  \node[font=\small, gray] at ([shift={(64.6:1.46)}] X) {$\tau$};

  \draw[->, thick, blue] (X) -- (Gtip)
    node[pos=1.0, above left=-1pt, font=\small, blue] {$\gstar$};

  \draw[->, thick, red!70!black, dashed] (X) -- (Ghat)
    node[pos=1.0, right=1pt, font=\small, red!70!black] {$\hat{g}$};

  \draw[->, very thick, orange] (X) -- (Xadv)
    node[pos=0.5, below=1pt, font=\small, orange] {$\epsilon$};
  \fill[orange] (Xadv) circle (1.8pt)
    node[below right=-2pt, font=\small] {$x{+}\delta$};

  \draw[thick] ([shift={(23.0:0.7)}] X) arc (23.0:45.0:0.7);
  \node[font=\small] at ([shift={(34.0:0.95)}] X) {$\alpha$};

  \fill (X) circle (2pt) node[below left, font=\small] {$x$};
  \node[font=\footnotesize, purple!70!black, anchor=north east] at (3.75, 3.55) {misclassified};
  \node[font=\footnotesize, anchor=west] at (0.05, 2.0) {correct class};
\end{tikzpicture}
    \caption{\textbf{Geometry of a single-step shot-noise attack.} The sample~$x$ lies at perpendicular distance $l_x$ from a locally linear decision boundary, with the true gradient $\gstar$ (blue) pointing toward it. Shot noise deflects the estimated gradient $\hat{g}$ (red dashed) by angle $\alpha$; the perturbation of strength $\epsilon$ (orange) follows $\hat{g}$. The attack succeeds (i.e., $x{+}\delta$ crosses the boundary) only when $\alpha \le \tau = \arccos(l_x/\epsilon)$. The dotted lines bound the tolerance cone of half-angle $\tau$ about~$\gstar$ (its edges reach the boundary at radius $\epsilon$); the grey arc marks $\tau$ measured from~$\gstar$, so the attack succeeds when $\hat{g}$ falls inside the cone.}
    \label{fig:angle2robustness}
\end{figure}

The geometry of Fig.~\ref{fig:angle2robustness} converts the angular concentration of the noisy gradient into a success probability. Conditioning on an attackable sample at distance $l_x\le\epsilon$ from a locally linear boundary, the single-step attack $\delta=-\epsilon\hat{g}$ succeeds exactly when the estimate falls within the tolerance cone $\alpha\le\tau=\arccos(l_x/\epsilon)$. Because the normalized estimate is von Mises--Fisher with concentration $\kappa_{\mathrm{vMF}}\approx s\|\gstar\|^2/\sigma^2$, the success probability is an incomplete-Gamma tail that concentrates, for large $d$, around $\sqrt{s}\,\tau\|\gstar\|-\sigma\sqrt{d}=\text{const}$. Holding the success rate fixed (e.g.\ at $50\%$) therefore forces $s=O(d)$ per dimension and $R=O(d^2)$ in total: the same quadratic law as the expectation analysis of Section~\ref{subsec:expected_relax}, now as a tail guarantee rather than an expected-loss bound (Proposition~\ref{prop:fgsm-success}; full derivation in \ref{app:supp_derivations}).

\subsection{Multi-step attacks: resource scaling for iterative optimization}
\label{sec:multistep}
Iterative attacks amortize the per-step FGSM cost only partially: each iteration refines the perturbation but requires a fresh gradient estimate, so the shot budget grows linearly with the number of steps $K$. Under a fixed total budget $R=Kds$ the attacker therefore faces a bias--variance--iteration trade-off in which lowering $s$ to buy more steps degrades per-step gradient quality.

We study the Carlini \& Wagner (C\&W) attack~\cite{carlini_towards_2016} as a representative strong iterative attack, because its weighted-sum objective $c\,L_{mc}(x+\delta)+\|\delta\|_2^2$ maps cleanly onto Stochastic Gradient Langevin Dynamics (SGLD), for which non-asymptotic convergence theory exists~\cite{raginsky_non-convex_2017, xu_global_2020}. Writing the finite-shot C\&W update as SGLD identifies the per-iteration shot count with an effective inverse temperature, $\beta = 2s/(\eta\sigma^2)\propto s$ (\ref{app:sgld_sufficient}): a low shot budget gives a high-temperature, diffusive trajectory, a high budget a low-temperature, greedy descent. This $\beta\propto s$ mapping is the modeling framework we use to read the iterative experiments; it predicts the bias--variance trade-off and the empirical shot sweet spot $s^*(d)$ reported in Section~\ref{subsec:multistep}.

Combining the mapping with Raginsky et al.'s convergence theorem yields a \emph{sufficient} total-shot budget, polynomial in $d$, for convergence below a target loss gap. The operative consequence is that finite-shot convergence theory gives an iterative attacker no guarantee of escaping the single-step scaling within any feasible budget. The explicit degree of this budget, the assumptions and their verification for a canonical PQC/softmax/C\&W model, and the sense in which it is a conservative sufficient condition are derived in \ref{app:sgld_sufficient}.

\section{Numerical and hardware experiments}
\label{sec:experiments}

\subsection{Simulation across architectures and datasets \label{sec:Res}}
Theory predicts a $d^2$ floor for single-step attacks and an empirical scaling excess that exceeds it; the SGLD framework further suggests a bias--variance trade-off for iterative methods. We now test each prediction. Each experimental theme is paired with the theoretical takeaway it interrogates: (i) whether the $\Theta(\epsilon d/s)$ single-step law from Proposition~\ref{prop:fgsm-optimization} appears in practice as model and data complexity grow; (ii) whether the same scaling persists under changes of the circuit ansatz and the gradient estimator; and (iii) how the SGLD modeling framework of \ref{subsec:sgld_summary} manifests as an observable shot-budget sweet spot.

Datasets and phase-dense encoding, model architecture and training, attack normalization ($\epsilon=\kappa\sqrt{d}$, PSR with $s$ shots per input dimension and total budget $R=ds$), the iterative-attack convention ($K=5$ steps with $s=R/(Kd)$), and the per-$(q,s)$ evaluation protocol are reported in \ref{methods:sim_common}; the design philosophy that motivates this single-factor-at-a-time setup is summarized in \ref{methods:sim_common} (\emph{Scope and generality}).
Taken together, these protocol choices (PSR with budget $R=ds$, the $\epsilon=\kappa\sqrt{d}$ schedule, fixed iteration count $K$, and a single experimental factor varied at a time) ensure that the estimated empirical gradients follow the unbiased, shot-limited model used throughout the theory section: PSR supplies the unbiased estimator with variance $\Theta(1/s)$ assumed in Assumption~\ref{ass:1s_2}, the $\epsilon=\kappa\sqrt{d}$ schedule instantiates the perturbation scaling invoked in the proofs of Proposition~\ref{prop:fgsm-optimization}, and keeping $J_L$ and $J_U$ fixed isolates the role of the quantum-layer covariance in~(\ref{eq:variance1}). Consequently, each experiment can be interpreted as a concrete instantiation of the theoretical resource laws.

Each subsequent subsection varies exactly one experimental factor while holding all others fixed: Section~\ref{subsec:shot_scaling} varies $(q,d)$ and dataset choice to study how the single-step scaling laws manifest empirically. Two robustness checks, swapping the circuit ansatz (\ref{app:arch_general}) and replacing PSR with finite-difference estimators (\ref{app:estimator}), confirm that the observed scaling is intrinsic to measurement-based gradient estimation, holding across architectures and estimators; detailed plots and discussion are deferred to the appendices. Finally, Section~\ref{subsec:multistep} turns to multi-step attacks (C\&W), examining how the per-iteration shot count $s$ sets the effective SGLD temperature and thereby controls the gradient-noise--fidelity trade-off at fixed $K$; the experiments reveal a dimension-dependent sweet spot $s^*(d)$ that drifts upward with $d$, consistent with the noise-temperature picture of the SGLD framework of \ref{subsec:sgld_summary}.

\subsubsection{Shot scaling with dimension and dataset complexity}\label{subsec:shot_scaling}
Guided by Proposition~\ref{prop:fgsm-optimization}, we test whether the predicted $1/s$ dependence between per-dimension shot budget and single-step loss surplus holds empirically, and how the coefficient $H(d)$ scales with dimension. The small-noise analysis behind the proposition fixes the structure of that coefficient: the loss gap obeys
\[
\Delta L(s)\;=\;L(\delta)-L(\delta^*)\;\approx\;\frac{H(d)}{s},\qquad
H(d)\;=\;\frac{\epsilon\,(d-1)\,\sigma^{2}\,\zeta(d)}{2\,\|\gstar\|(d)},
\]
where $\zeta(d)$ is the landscape factor, the ratio of the realized clean-attack loss rise to its first-order prediction, measured directly in \S\ref{par:gnorm_baseline} ($\zeta(d)=1$ for a locally linear loss). The $H(d)\propto\epsilon d\propto d^{3/2}$ baseline of Proposition~\ref{prop:fgsm-optimization} is this rule evaluated under $\|\gstar\|=\Theta(1)$ and $\zeta(d)=1$; the experiments below measure each factor separately. Using the shared configuration (Sec.~\ref{sec:Res}), we sweep $q\in\{4,5,6,7,8,9,10\}$ on MNIST and Fashion-MNIST under $\epsilon=\kappa\sqrt{d}$. Shot budgets are placed at log-spaced multiples $h=s/s^\star$ of the critical count $s^\star(d)=(d-1)\sigma^2/2\|\gstar\|^2$ at which the gradient signal-to-noise ratio reaches unity, reaching $h=12$ at every dimension including $d=784$, well inside the small-noise regime where the $1/s$ law and the CLT approximation behind Assumption~\ref{ass:1s_3} hold; the fit window starts at the critical count itself ($h=1$), where that ratio equals one. ($\|\gstar\|$ and $\sigma^2$ enter the grid construction through preliminary probe measurements that are not part of the reported curves; \ref{methods:sim_common}.)

Figure~\ref{fig:waterfall_mnist} shows the resulting accuracy and loss curves versus shot budget and model size for both datasets. Two trends emerge: (1)~within fixed $q$, the shortfall from the perfect-gradient attack decays as $1/s$, matching $O(\epsilon d/s)$; (2)~across increasing $q$, larger models require proportionally more shots per dimension for the same shortfall.

To extract $H(d)$, we fit the loss gap through the origin, $\Delta L(s) = B_q/s$, on the in-regime points $h\ge1$. Figure~\ref{fig:coeffic_vs_d} plots $B_q$ versus $d$ on log-log axes; power-law fits yield $H(d)\propto d^{2.00}$ (95\% CI $[1.91,2.08]$) for MNIST and $H(d)\propto d^{2.07}$ (95\% CI $[1.99,2.15]$) for Fashion-MNIST, both with $R^2=0.999$ (unweighted log--log OLS over the seven dimensions $d\in\{16,25,64,121,256,484,784\}$; CIs are Student-$t$ at $n{-}2{=}5$ degrees of freedom). The exponent is robust to the extraction method: tightening the window to $h\ge3$ or fitting the saturating form $A/(s+B)$ to all points moves it within $[1.96,2.00]$ on MNIST and $[2.04,2.07]$ on Fashion-MNIST. The bare coefficient sets the attacker's budget, but it is not the theory's sharpest observable: by the coefficient rule it compounds the geometric shot-noise factor with the model's $1/\|\gstar\|$, so its exponent is a joint property of the estimator and the model's gradient scale. The next paragraphs separate the two by direct measurement; the consistent trends across both datasets already indicate that the scaling is embedded in the gradient estimator and persists across datasets.

\begin{figure}[t]
    \centering
    \includegraphics[width=\textwidth]{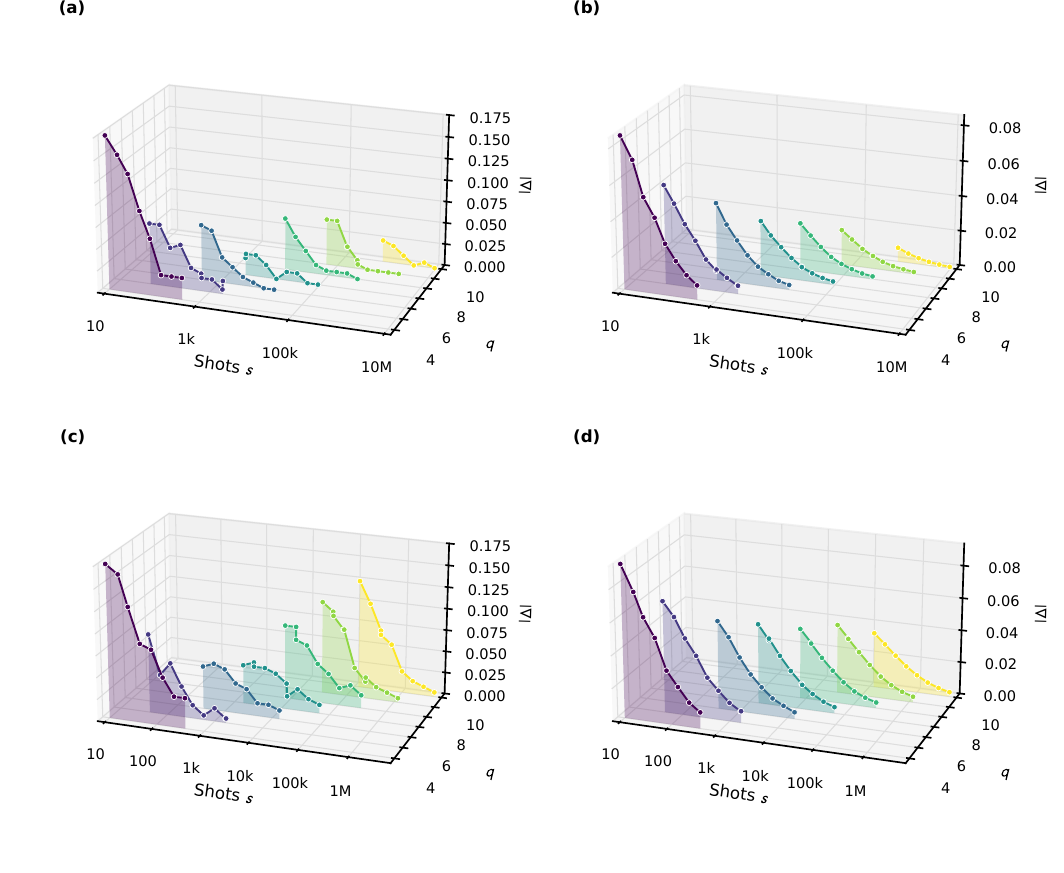}
    \caption{\textbf{FGSM shot-scaling on MNIST and Fashion-MNIST.} Waterfall plots versus shot budget and model size, on log-spaced grids at fixed multiples $h=s/s^\star$ of the critical count, reaching $h=12$ at every $q$. Axes: x = shots per input dimension $s$ (log scale, including both PSR shifts); depth = qubit/class count $q\in\{4,\dots,10\}$; height = absolute shortfall of the $s$-shot attack from the perfect-gradient attack; a marker is drawn at every measured grid point. MNIST is shown in (\textbf{a}) accuracy and (\textbf{b}) adversarial loss, and Fashion-MNIST in (\textbf{c}) accuracy and (\textbf{d}) adversarial loss. Within fixed $q$, the shortfall decays as $1/s$, matching the $O(\epsilon d/s)$ scaling; across increasing $q$, curves decay more slowly, requiring larger shots per dimension for constant attack efficacy. The two datasets show the same qualitative trends, confirming that the $1/s$ scaling law and the superlinear growth of $H(d)$ are properties of the quantum gradient estimator rather than of a particular dataset.}
    \label{fig:waterfall_mnist}
\end{figure}

\begin{figure}[t]
    \centering
    \includegraphics[width=0.66\linewidth]{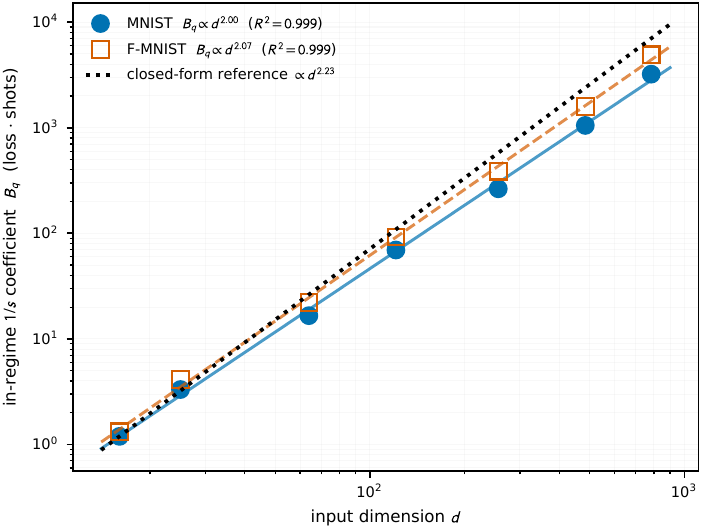}
    \caption{\textbf{Per-dimension shot-scaling coefficients.} Log--log plot of the through-origin coefficients $B_q$ (fit window $h\ge1$) versus input dimension $d$. MNIST: $B_q\approx 4.7\times10^{-3}\,d^{2.00}$ ($R^2=0.999$); Fashion-MNIST: $B_q\approx 4.5\times10^{-3}\,d^{2.07}$ ($R^2=0.999$). Dotted: the aggregate reference $\propto d^{2.23}$ obtained by inserting the cohort-mean gradient-norm decay into the coefficient rule (\S\ref{par:gnorm_baseline}). Both datasets exhibit superlinear growth far above the idealized $d^{1.5}$ baseline, with Fashion-MNIST slightly steeper.}
    \label{fig:coeffic_vs_d}
\end{figure}

\paragraph{Gradient-norm baseline.}\label{par:gnorm_baseline} The coefficient rule $H(d)=\epsilon(d-1)\sigma^2\zeta/2\|\gstar\|$ carries the gradient norm in its denominator, so the bare exponent of $B_q$ conflates the shot-noise geometry with the scaling of $\|\gstar\|$ itself. We measure that ingredient directly: the FGSM gradient norm decreases with dimension, $\|\gstar\|(d)\propto d^{-0.73}$ on MNIST and $\propto d^{-0.65}$ on Fashion-MNIST (opposite to the $\sqrt{d}$ growth of the classical intuition), while the per-shot readout variance stays flat at $\sigma^2\approx3$ (measured through $d=121$; assumed flat above). The decay is the expected behavior for this model class: at $d\approx2^{q}$ it is exponential in qubit count, the barren-plateau signature of deep generic ans\"atze~\cite{larocca_barren_2025}, and our test circuits employ no plateau mitigation. Folding the measured gradient norm back into the coefficient, per sample, forms the gradient-norm-independent combination
\begin{equation}\label{eq:dLg}
  \mathrm{d}L\cdot\|\gstar\| \;=\; \tfrac{1}{2}\,\epsilon\,(d-1)\,\sigma^2\,\zeta(d),
\end{equation}
in which $\|\gstar\|$ cancels exactly: what remains is the parameter-free geometric shot-noise cost $\epsilon(d-1)\sigma^2/2\propto d^{3/2}$, modulated only by the landscape factor $\zeta(d)$. Evaluating Eq.~\eqref{eq:dLg} per sample with the measured $\zeta_i$ gives $d^{1.46}$ on MNIST; the aggregate construction $B_q\cdot\langle\|\gstar\|\rangle$ on Fashion-MNIST gives $d^{1.42}$ (Fig.~\ref{fig:coeff_gnorm}). Both recover the $d^{3/2}$ baseline of Proposition~\ref{prop:fgsm-optimization}: the superlinear excess of the bare coefficient over $d^{3/2}$ is the measured $\|\gstar\|$ decay, not a property of the noise geometry. The per-sample construction matters because the ingredients are correlated across samples ($\mathrm{corr}(\|\gstar\|,\sigma^2)\approx+0.75$); the absolute scale of the semi-analytic per-sample gap agrees with the empirically fitted one to within $0.71$--$0.87$, the shortfall being the through-origin extraction reading low at finite $h$.

\paragraph{Takeaway: the single-step exponent.} The experiments validate the single-step law at the level of its structure. Every factor of the small-noise rule $H(d)=\epsilon(d-1)\sigma^2\zeta/2\|\gstar\|$ is measured independently, and the parameter-free geometric prediction is confirmed: with the gradient norm folded out, the coefficient follows $d^{1.46}$ (MNIST) and $d^{1.42}$ (Fashion-MNIST) against the predicted $d^{3/2}$, the small residual carried by the measured landscape factor $\zeta(d):1.05\to0.83$ (a mild, dimension-growing attacker advantage). The simulator results are thus accounted for with no free parameters beyond the measured model properties. The bare fitted coefficient is this geometry compounded by the gradient-norm decay: for the tested plateau-prone circuits, maintaining fixed single-step efficacy costs $s\propto d^{2.00}$ per dimension and $R=ds\propto d^{3.00}$ (95\% CI $[2.91,3.08]$) in total on MNIST ($d^{3.07}$, $[2.99,3.15]$, on Fashion-MNIST), while plateau-mitigated models with $\|\gstar\|=\Theta(1)$ realize the geometric floor $R=\Theta(d^{5/2})$. An aggregate estimate that inserts the cohort-mean gradient norm, instead of averaging per sample, brackets this $d^{2.00}$ from above at $d^{2.23}$ (\ref{app:supp_figs}). Three reference exponents recur through the paper: the gradient-norm-free shot-noise geometry $d^{3/2}$ (per dimension, measured $d^{1.46}$); the total-budget floor $R=\Theta(d^{5/2})$ reached by plateau-mitigated models; and the realized total budget $R\approx d^{3.00}$ of the tested gradient-decaying circuits, the floor inflated by their $\|\gstar\|$ decay. The total budget thus grows fast enough to erode the practicality of high-accuracy gradient attacks on large QML models.

\paragraph{Architectural robustness.}
To verify that these scaling laws are not artifacts of a particular circuit ansatz, we repeat the $(q,s)$ sweep under the same grid protocol for a data re-uploading variant that encodes the input twice, at the input and at mid-depth, within a single 200-layer circuit, for $q\in\{4,5,6,7\}$. The in-regime fit yields $c_0\propto d^{1.99}$, statistically indistinguishable from the single-upload $d^{1.96}$ over the matched dimension range. The coefficient itself is also unchanged (pre-factor ratio $0.98$): the second encoding pathway raises both the gradient norm ($\approx1.6\times$) and the per-shot readout variance ($\approx1.6\times$), and the two cancel in the coefficient rule $H\propto\sigma^2/\|\gstar\|$, a direct architectural confirmation of that structure. The residual architectural cost is the evaluation count: each input component appears in two encoding blocks, so a PSR gradient costs four shifted circuit evaluations per dimension instead of two, doubling the total budget per gradient at equal $s$. Architecture thus modulates the \emph{constants} of the cost via the $J_L$/$J_U$ factors and the encoding multiplicity rather than the fundamental dimension dependence. Full waterfall plots, coefficient fits, and detailed discussion are provided in \ref{app:arch_general}.

\paragraph{Estimator comparison.}
\ref{app:estimator} compares PSR against finite-difference (FD) and SPSA estimators under matched shot budgets. The key finding is that estimator choice affects constants and low-shot crossovers but not the fundamental scaling: Being an unbiased estimator, PSR dominates at moderate/large $s$. Full plots and discussion are given in \ref{app:estimator}.

\paragraph{Practical cost at scale.}
To see what the two-layer cost structure implies in practice, consider a single-step FGSM attack on a $d=784$ ($q{=}10$) MNIST classifier.
The smallest measured budget that brings the attack within ${\sim}15\%$ of its perfect-gradient accuracy drop is $s\approx7\times10^{5}$ shots per input dimension ($=2\times3.4\times10^{5}$, the two PSR shifts), above the critical count $s^\star(784)\approx4\times10^{5}$ set by the measured gradient norm.
Because all output observables are mutually commuting $Z_i$ measurements, they are obtained from the same circuit run (see the compatible-observables discussion in Section~\ref{methods:fp_bp_dominance}), so the total shot budget is $R = ds \approx 5\times 10^{8}$ per adversarial sample.
Assuming a typical circuit execution time of ${\sim}100\;\mu\mathrm{s}$ per shot, crafting a \emph{single} adversarial example requires ${\sim}5\times 10^{4}\;\mathrm{s} \approx 15\;\mathrm{hours}$ of sequential device time.
Mounting a dataset-wide attack on the full MNIST test set ($10{,}000$ images) would therefore consume ${\sim}1.5\times 10^{5}\;\mathrm{hours} \approx 17\;\mathrm{years}$ of continuous device time.
Parallelization can reduce wall-clock time but not the total number of shots required; the aggregate quantum resource cost remains the same.
These estimates further assume zero queuing, calibration, or communication overhead; real deployments would be slower still.

\subsubsection{Multi-step attacks and optimal shot allocation}\label{subsec:multistep}
For iterative C\&W attacks at fixed iteration count $K$, the per-iteration shot count $s$ is the sole control on gradient noise, setting the SGLD temperature $\beta\propto s$ through Eqs.~(\ref{eq:SGLD}) and~(\ref{eq:ChangeOfVar}). Low $s$ (high temperature) makes each step imprecise but lets the stochasticity dislodge the iterate from shallow minima of the C\&W landscape; high $s$ (low temperature) gives accurate steps that, within a fixed $K$, more readily settle into the nearest local trap. An optimal budget $s^*(d)$ therefore emerges from this fidelity-versus-exploration trade-off, which the SGLD framework (\ref{app:sgld_sufficient}) predicts qualitatively through $\beta$. When $K$ is instead free under a fixed total $R=Kds$, lowering $s$ also buys more iterations, the classical bias--variance trade-off; holding $K$ fixed isolates the noise-driven escape effect probed below.

We examine C\&W under the shared $K=5$ iteration budget and $\epsilon=\kappa\sqrt{d}$. The per-iteration shots $s$ are swept for qubit counts $q\in\{4,5,6,7\}$, with adversarial effectiveness recorded as in the single-step analysis (post-attack accuracy and loss surfaces). C\&W loss values are not compared across $q$ because the tuned coefficient $c$ is sample dependent. Tuning overhead is excluded from $R$, so any practical search would only strengthen the observed scaling.

Empirically, increasing $s$ from low values strengthens the attack: post-attack accuracy falls toward the exact-gradient (infinite-shot) value, and the budget $s^*(d)$ at which it saturates grows steeply with dimension, from $s^*\!\approx\!6\times10^{2}$ at $d{=}25$ to $\approx\!4\times10^{4}$ at $d{=}121$, nearly two orders of magnitude and broadly consistent with the $s\propto d^{2}$ growth of the single-step law. Figure~\ref{fig:waterfall_cw_asr} plots the signed accuracy gap to that exact-shot attack; because the reference shares the swept points' search budget, the gap is a clean measure of shot noise alone. In three of the four models ($d\in\{16,25,121\}$) the optimum drops \emph{below} the exact-shot value: a moderate amount of shot noise makes the attack stronger than an exact gradient, so a noisy gradient step is here preferable to the exact one. This is the stochastic-regularization effect the SGLD picture predicts, the shot noise playing the role of Langevin exploration that escapes shallow minima of the C\&W objective. The sub-floor depth is largest at small $d$ ($0.9$ pp at $d{=}25$, $0.3$ pp at $d{=}16$) and narrows to $0.13$ pp at $d{=}121$ (all resolved at $>4\sigma$ by averaging five shot-noise realizations); only $d{=}64$ shows no sub-floor dip, descending monotonically to the exact-shot floor. Simple warm‑up schedules that raise $s$ over iterations can outperform uniform allocation at the same $R$, echoing classical SGLD heuristics. The four tested dimensions are too few to fit a precise exponent for the iterative case.

Because $s^*$ climbs steeply with $d$ while the sub-floor benefit shrinks, the budget an iterative attacker must spend to reach the exact-gradient attack, let alone improve on it, grows rapidly with dimension, mirroring the single-step cost law. The trend holds across the dimensions tested at fixed $K$ and step size $\eta$.

\paragraph{Takeaway.} A dimension-dependent optimum $s^*(d)$ emerges from the gradient-noise--fidelity trade-off at fixed $K$: the SGLD temperature $\beta\propto s$ must be high enough for the chain to escape shallow local minima yet low enough for gradient updates to be constructive; we fix $K$ and $\eta$ instead of optimizing schedules and exclude hyperparameter-tuning overhead (e.g., binary search on $c$ in C\&W), so accounting for such costs would only make attacks more expensive in practice. Across the four dimensions tested, both $s^*(d)$ and $R_{\text{crit}}(d)$ rise with $d$ in a manner qualitatively consistent with the SGLD framework of \ref{sec:CW}. Simple warm‑up/annealed shot schedules can improve effectiveness at the same $R$. A precise empirical scaling exponent for iterative attacks would require a broader dimension sweep and is left to future work; the SGLD analysis (\ref{app:sgld_sufficient}) supplies the modeling framework that accounts for this trend.

\subsubsection{Classical baseline comparison}\label{subsec:classical_baseline}

The preceding quantum scaling laws quantify how attack cost grows with~$d$, but the force of this result depends on the contrast with the classical case. We now provide that anchor by training matched classical neural networks on the same dimension ladder, perturbation budget, and datasets, measuring both the gradient cost ratio and the attack success rate.

\paragraph{Classical model and timing protocol.}
The matched VGG-style CNN architecture (channel base $c=2q$, parameter counts comparable to the quantum circuits), the perturbation budget $\epsilon=\sqrt{d}/32$ matched to the quantum experiments, and the per-sample CPU wall-clock timing protocol (100 warmup iterations, 500~samples $\times$ 5~repetitions, median reported) used to compute $\rho_{\text{classical}}=t_{\text{bwd}}/t_{\text{fwd}}$ are reported in \ref{methods:sim_classical}; attack success rates are measured separately on the full test set using a GPU.

\paragraph{Results.}
Table~\ref{tab:classical_overhead} reports the measured gradient cost ratio and the quantum comparison. Two findings emerge.

\emph{First}, $\rho_{\text{classical}} \approx 5.2 \pm 0.6$ across all~$d$, with no systematic trend ($\rho \propto d^{-0.04}$, exponent indistinguishable from zero), confirming the $O(1)$ prediction of the Baur--Strassen theorem (Remark~\ref{rem:classical_cost}). \emph{Second}, the quantum overhead grows polynomially: $\rho_{\text{quantum}} \propto d^{3.00}$ (95\% CI $[2.91,3.08]$) on MNIST and $d^{3.07}$ (95\% CI $[2.99,3.15]$) on Fashion-MNIST, extracted from the in-regime shot-scaling fits of Section~\ref{subsec:shot_scaling}; the realized exponent reflects the tested models' measured gradient-norm decay, with the $d^{2.5}$ geometric floor applying to plateau-mitigated models with $\|\gstar\|=\Theta(1)$. The ratio $\rho_{\text{quantum}}/\rho_{\text{classical}}$ grows as $\Theta(d^{3.00})$ on MNIST and $\Theta(d^{3.07})$ on Fashion-MNIST, independent of the forward-inference normalization, and the quantum overhead exceeds the classical one for $d\gtrsim30$ (Table~\ref{tab:classical_overhead}).

\begin{table}[t]
\centering
\caption{\textbf{Gradient cost ratio: quantum vs.\ classical.} Classical $\rho$ (CPU wall-clock, median of 500 samples $\times$ 5 reps) is flat at $\approx 5$; quantum $\rho$ (polynomial fit of the shot-scaling data, Section~\ref{subsec:shot_scaling}) grows polynomially. $R_{\text{bwd}}=2d\,c_0(d)/\Delta L^\star$ is the shot count to reach a fixed attack-efficacy target $\Delta L^\star$ (\ref{app:classical_details}); $R_{\text{fwd}}=100$ is the forward-inference budget (Proposition~\ref{prop:backward_dominance}; $C=q$ commuting outputs at $\sigma_{\mathrm{fwd}}^2\approx3$, rounded up conservatively). The \emph{ratio} column scales as $1/R_{\text{fwd}}$; the \emph{exponent} is independent of $R_{\text{fwd}}$ and $\Delta L^\star$. For $d\le25$ the quantum overhead is at or below the classical value. Values to two significant figures.}
\label{tab:classical_overhead}
\small
\begin{tabular}{rrrrrr}
\toprule
& & \multicolumn{2}{c}{MNIST} & \multicolumn{2}{c}{Fashion-MNIST} \\
\cmidrule(lr){3-4}\cmidrule(lr){5-6}
$d$ & $\rho_{\text{cl}}$ & $\rho_{\text{qu}}$ & ratio & $\rho_{\text{qu}}$ & ratio \\
\midrule
16  & 5.4 & 0.38             & $0.07{\times}$ & 0.42             & $0.08{\times}$ \\
25  & 4.9 & 1.7              & $0.3{\times}$ & 2.1              & $0.4{\times}$ \\
64  & 6.1 & 21               & $3.5{\times}$ & 28               & $4.7{\times}$ \\
121 & 5.4 & $1.7\times10^{2}$ & $31{\times}$  & $2.2\times10^{2}$ & $41{\times}$ \\
256 & 5.4 & $1.4\times10^{3}$ & $250{\times}$ & $2.0\times10^{3}$ & $370{\times}$ \\
484 & 4.5 & $1.0\times10^{4}$ & $2.3\times10^{3}{\times}$ & $1.5\times10^{4}$ & $3.4\times10^{3}{\times}$ \\
784 & 4.4 & $5.0\times10^{4}$ & $1.1\times10^{4}{\times}$ & $7.7\times10^{4}$ & $1.7\times10^{4}{\times}$ \\
\bottomrule
\end{tabular}
\end{table}

Figure~\ref{fig:normalized_overhead} visualizes this contrast on log--log axes. The classical line is flat; the quantum curves rise with slopes matching the fitted exponents from Section~\ref{subsec:shot_scaling} and run steeper than the $\Theta(d^{2.5})$ reference, the gradient-norm excess over the idealized baseline (\S\ref{par:gnorm_baseline}).

\begin{figure}[t]
    \centering
    \includegraphics[width=0.5\linewidth]{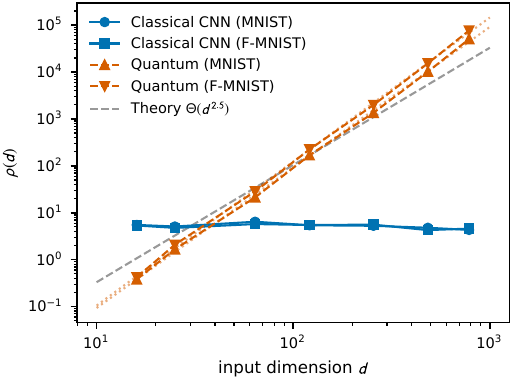}
    \caption{\textbf{Gradient cost ratio $\rho(d)$ on log--log axes.} Classical CNN (blue): flat at $\rho \approx 5$, confirming $O(1)$ scaling (Baur--Strassen). Quantum (red/orange): polynomial growth $\propto d^{3.00}$ (MNIST), $d^{3.07}$ (Fashion-MNIST). Dashed gray: $\Theta(d^{2.5})$ geometric-floor reference, anchored at the MNIST $d{=}121$ point. The quantum--classical overhead ratio grows as $\Theta(d^{3.00})$, with absolute value set by the forward-inference normalization (Table~\ref{tab:classical_overhead}).}
    \label{fig:normalized_overhead}
\end{figure}

\paragraph{Takeaway.}
Against the classical baseline, quantum measurement noise imposes a gradient cost ratio that is empirically $\rho_{\text{quantum}} \approx \Theta(d^{3})$ for the tested models (with $\Theta(d^{2.5})$ the geometric floor under $\|\gstar\|=\Theta(1)$) while classical backpropagation gives $\rho_{\text{classical}} = O(1)$ (the cheap-gradient principle). The contrast is structural, it arises from the decoupling of forward and backward measurement costs in quantum systems (Remark~\ref{rem:classical_cost}), and the overhead ratio grows with every additional input dimension.

\subsection{Hardware validation}
\label{sec:hw_validation}

The theoretical predictions and simulation results above establish shot noise as a fundamental impediment to gradient-based attacks. We now test whether this robustness survives deployment on real quantum hardware, where gate errors, decoherence, and readout imperfections introduce additional noise beyond idealized shot-sampling statistics.

The deployed model is a 4-qubit hardware-efficient classifier on $3\times 4$ MNIST ($d=12$), attacked with $\ell_2$ FGSM at $\epsilon=1.0$ across a logarithmic sweep of the attacker's shot budget $s$ on both an ideal simulator and \texttt{ibm\_boston}. The defender is evaluated noiselessly, so device noise enters only through the attacker's gradient; deployment, tiling, and sampling details are in \ref{methods:hw_protocol}.

\subsubsection{Results: shot-noise robustness on real hardware}

Figure~\ref{fig:hw_sim_comparison} presents adversarial accuracy (a) and the loss gap to the full-precision attack (b) versus shot budget for simulation and experiment; Table~\ref{tab:compare} summarises the endpoints.

\begin{table}[t]
    \centering
     \caption{\textbf{Shot-noise robustness, simulation vs.\ experiment ($d=12$, $N=100$, $\epsilon=1.0$).} Adversarial accuracy at the smallest budget, at a high budget, and at the exact-gradient limit. The $10\%$ floor is the cohort's intrinsic-robust fraction and the $s\to\infty$ limit of the simulation; at high budget the experiment stays a couple of percentage points above the simulation and does not reach this floor, its gradient device-biased. The low-shot gap is shot-noise--induced on both.}
    \begin{tabular}{cccc}
    \hline
         & Acc ($s=2$) & Acc ($s=2048$) & Exact floor ($s\rightarrow\infty$) \\
         \hline
         Simulation & 46.3\% & 12.4\% & 10.0\% \\
         Experiment & 42.2\% & 14.0\% & n/a \\
         \hline
    \end{tabular}
    \label{tab:compare}
\end{table}

\paragraph{Hardware tracks the simulator.}
In both simulation and experiment, adversarial accuracy rises sharply as the budget falls, from $\sim$42--46\% at the two-shot minimum toward the intrinsic-robust floor (Fig.~\ref{fig:hw_sim_comparison}a), driven entirely by the shot budget at fixed $\epsilon$ with the critical transition near $s\sim12$. The 100-input cohort resolves the two curves well below their separation. They cross near $s\sim12$: above it the simulation attack is stronger by $1$--$3$ percentage points, the experimental gradient running behind ideal once the shot budget is large enough to expose the device's gate and readout error; below it the experiment is marginally stronger, consistent with the within-ball criterion, under which a more isotropic low-shot direction is slightly likelier to find a boundary crossing in the near-random regime. Only the simulation reaches the exact-gradient floor ($10\%$) as $s\rightarrow\infty$; at high budget the experiment plateaus a couple of percentage points above the simulation, a residual the device cannot remove because its gradient stays biased. The loss gap to the exact-optimal attack, $\Delta L(s)=L(\delta^*)-L(\delta_s)$, closes toward zero in simulation and toward this residual in experiment (Fig.~\ref{fig:hw_sim_comparison}b), the optimization-gap decay of Proposition~\ref{prop:fgsm-optimization}.

\paragraph{The hardware gradient is faithful.}
At the full 8192-shot measurement the parameter-shift gradient on hardware aligns with the exact gradient at a cohort-median cosine of $0.98$ (mean $0.90$): the typical sample is recovered almost exactly, with a few low-cosine samples pulling the mean down and producing the high-budget residual above. Gate errors and decoherence barely rotate the gradient once shot noise is removed; what blunts the attack at low $s$ is the irreducible sampling uncertainty of the estimator, which isolates shot noise as the operative mechanism. As hardware fidelity improves, this device contribution shrinks while measurement uncertainty remains, so the shot-noise gap persists; and as the data dimension grows ($d>10$), the quadratic shot requirement renders high-quality gradient attacks rapidly prohibitive, especially for multi-step methods whose budgets accumulate over iterations.

\begin{figure}[t]
    \centering
    \includegraphics[width=\textwidth]{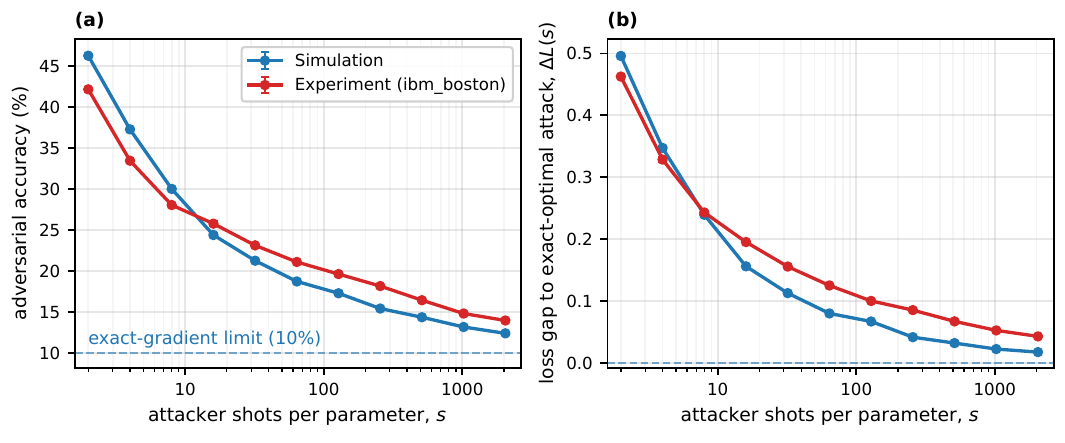}
    \caption{\textbf{Shot-noise robustness: simulation vs.\ experiment on \texttt{ibm\_boston}} ($d=12$, $N=100$, $\epsilon=1.0$). (\textbf{a})~Adversarial accuracy and (\textbf{b})~loss gap to the exact-optimal attack, $\Delta L(s)=L(\delta^*)-L(\delta_s)$, versus attacker per-parameter shot budget $s$ (log axis) under $\ell_2$ FGSM. The simulation reaches the exact-gradient floor ($10\%$, dashed) as $s\to\infty$; the experiment plateaus above it, its device-biased gradient bounded by the measured budget. Error bars are the standard error of the bootstrap mean over 200 shot-resamples (smaller than the markers). Protocol in the text and \ref{methods:hw_protocol}.}
    \label{fig:hw_sim_comparison}
\end{figure}

\section{Discussion and conclusion}
\label{sec:discussion}

The picture that emerges from theory and experiment is that finite quantum measurement imposes a shot-cost floor on gradient extraction, and that this floor reshapes adversarial attack efficacy as the input dimension grows. For single-step attacks we derive a simple law tying optimization error to the per-dimension shot budget, $\Delta L(s)\approx H(d)/s$ with $H(d)\propto \epsilon d$, which implies that maintaining fixed efficacy under growing dimension requires at least $s=\Theta(d)$ and thus a quadratic growth in total gradient shots. For multi-step C\&W, mapping the finite-shot updates to SGLD and applying Raginsky et al.'s non-asymptotic convergence theory yields a \emph{sufficient} total-shot budget polynomial in $d$ (\ref{app:sgld_sufficient}); this certificate provides a modeling framework that correctly predicts the observed bias--variance trade-off.

Experiments validate these predictions across model sizes, architectures, and estimators, and let us account for every measured exponent with no free parameters beyond the model's own properties. The sharpest test isolates the geometry of the bound: with the gradient norm folded out, the measured combination follows the parameter-free $d^{3/2}$ prediction ($d^{1.46}$ on MNIST, $d^{1.42}$ on Fashion-MNIST; \S\ref{par:gnorm_baseline}), and the waterfall plots (Fig.~\ref{fig:waterfall_mnist}) confirm the underlying $1/s$ decay. The bare per-dimension requirement (Fig.~\ref{fig:coeffic_vs_d}) is this geometry compounded by the model's gradient scale, and the exponent ledger closes explicitly: the geometric floor is $d^{3/2}$ per dimension (total $R=ds=d^{5/2}$); inserting the measured gradient-norm decay $\|\gstar\|\propto d^{-0.73}$ through the bound's $1/\|\gstar\|$ factor would predict $d^{3/2}\cdot d^{0.73}=d^{2.23}$ per dimension; and the measured landscape factor $\zeta(d)<1$, local curvature returning a small advantage to the attacker, pulls this back to the observed $d^{2.00}$ ($d^{2.07}$ on Fashion-MNIST), a total budget of $d^{3.00}$ ($d^{3.07}$). PSR-versus-FD comparisons confirm that the choice of estimator mainly modulates constants and low-shot transients, with PSR dominating at moderate or large budgets. In the multi-step regime, we observe a reproducible sweet spot $s^*(d)$ that drifts upward with dimension and can be improved by simple warm-up schedules; the critical budget $R_{\text{crit}}(d)$ rises across the four tested dimensions, though we do not fit a specific exponent given the limited range.

The resulting cost structure is a $d^{5/2}$ geometric shot-noise floor, realized as-is by plateau-mitigated models with $\|\gstar\|=\Theta(1)$~\cite{pesah_absence_2021,larocca_barren_2025}, and inflated to ${\approx}d^{3}$ for the tested generic deep circuits by the gradient-norm decay traced above; alternative estimators relocate the dimension cost instead of removing it (\ref{app:zero-order}), and for $d{=}784$ even a single FGSM sample already requires ${\sim}15$~hours of device time (Section~\ref{sec:Res}). The classical baseline comparison (Section~\ref{subsec:classical_baseline}) makes the overhead gap concrete. A matched classical CNN, trained on the same tasks with comparable parameter counts, yields $\rho_{\text{classical}} \approx 5$ across all dimensions: flat, exact, and dimension-independent, precisely as predicted by the Baur--Strassen theorem~\cite{baur_complexity_1983, griewank_evaluating_2008}. Against this anchor, the quantum gradient cost ratio of the tested models, $\rho_{\text{quantum}} \propto d^{p_1}$ with $p_1=3.00$ (95\% CI $[2.91,3.08]$) on MNIST and $p_1=3.07$ (95\% CI $[2.99,3.15]$) on Fashion-MNIST, makes the \emph{relative} gradient-extraction cost diverge as $\Theta(d^{p_1})$ against this dimension-independent baseline. Classical models are fully vulnerable at all dimensions (C\&W success $\ge 87\%$); quantum models are not intrinsically harder to fool. The defense is that the gradient information needed to fool them costs polynomially more to acquire. As scoped in Section~\ref{sec:scope}, this barrier is operative where the forward map is classically hard to simulate, the setting in which QML is expected to be useful, since only there is a white-box attacker forced onto measurement-based extraction instead of free classical backpropagation; the experiments here, run on deliberately simulable systems, establish the underlying scaling law that fixes that cost.

These scaling laws survive the move from simulation to real hardware. On IBM's 156-qubit \texttt{ibm\_boston} processor, a matched simulation-versus-experiment comparison at $d=12$ over a 100-input cohort reproduces the shot-noise robustness (Section~\ref{sec:hw_validation}): the high-shot gradient is faithful (cosine $0.90$), the simulation reaches the exact-gradient floor ($10\%$) as $s\to\infty$, and the experiment plateaus a few percent above it. That residual is a device bias the attacker cannot remove, so the observed low-shot robustness reflects finite measurement rather than a simulation artifact and persists under device imperfection. Improvements in gate fidelity and coherence will suppress this device-induced noise while leaving measurement uncertainty intact, sharpening the robustness gap rather than closing it.

Our shot-budget scaling laws are independent of structural robustness guarantees such as the dynamical conditions (unitarity, scrambling, chaos) studied in~\cite{dowling_adversarial_2024}: those conditions bound \emph{whether} certain attack classes are possible in principle, whereas our results quantify the \emph{physical measurement cost} of executing any gradient-based attack in practice. The two axes are orthogonal, and the shot-cost barrier applies even to attacks that such structural conditions permit. Two practical implications follow: evaluations of QML robustness should report and normalize by shot budgets, and defenses that amplify or exploit shot sensitivity (controlled noise injection, budget-aware training, adaptive shot allocation) are promising complements to architectural choices.

Several limitations bound these conclusions. We use fixed step sizes and hyperparameters, modest qubit counts $q$, and restricted repeats due to classical simulation cost; we focus on PSR as the primary estimator, though zero-order methods retain dimension-dependent shot requirements (\ref{app:zero-order}). The defensive consequence is scoped to the classically-hard regime (Section~\ref{sec:scope}): at the simulable sizes tested, a white-box attacker can simulate and backpropagate to circumvent the shot requirements entirely, so the experiments establish the scaling law rather than a deployed defense. The multi-step certificate of \ref{app:sgld_sufficient} is sufficient rather than necessary, and the empirical exponents exceed the $d^{5/2}$ baseline through the measured gradient-norm decay, with local curvature, measured directly, returning a small advantage to the attacker instead of penalizing it.

These limitations motivate future work on (i) realistic hardware noise beyond shot noise, (ii) principled annealing and adaptive shot-allocation schedules for multi-step attacks, (iii) model-level regularizers that widen the effective energy barriers (improving $\lambda_*$), (iv) extensions beyond gradient-based attacks, (v) adapting the finite-shot framework to QML regression with continuous outputs, (vi) full analysis of zero-order methods such as SPSA under shot constraints, including bias--variance--convergence trade-offs for adversarial optimization, and (vii) measuring the realized attack cost of plateau-mitigated architectures~\cite{pesah_absence_2021,Cerezo2021}, which the validated coefficient rule predicts sits at the $d^{5/2}$ geometric floor. Scaling toward $d{\sim}50$ to $100$ dimensions pushes single-step requirements into millions of shots, and shot-noise robustness can be further amplified via adversarial training under low-shot conditions or architectures that increase gradient variance. We hope these scaling laws and empirical patterns provide a clear target and a common language for benchmarking adversarial robustness in the finite-shot regime.

\ack
This work was supported by CSIRO under grant number~27040. We acknowledge the use of IBM Quantum services for this work; in particular, hardware experiments reported in Section~\ref{sec:hw_validation} were performed on the \texttt{ibm\_boston} processor accessed through the IBM Quantum Network. The views expressed are those of the authors and do not reflect the official policy or position of IBM or the IBM Quantum team. The authors declare no competing interests.

\section*{Data availability statement}
This study used publicly available datasets only (MNIST and Fashion-MNIST); no new training datasets were generated. The data that support the findings of this study, namely the aggregated quantum hardware run results, the low-shot bootstrap traces used in Section~\ref{sec:hw_validation}, and the per-dimension coefficient tables underlying Fig.~\ref{fig:coeffic_vs_d} and Fig.~\ref{fig:normalized_overhead}, will be made openly available in a public archive at a later date; in the interim they are available from the corresponding author on reasonable request. Raw IBM Quantum job records and per-circuit measurement strings are available from the corresponding author on reasonable request.
A cleaned and documented release of the simulation, attack, and analysis code that supports the findings of this study will be made openly available at a later date; in the interim, the code is available from the corresponding author on reasonable request.

\newpage
\appendix

\noindent The appendices are grouped in three parts. \emph{Methods and reproducibility} collects the simulation and hardware deployment protocols, the phase-dense encoding, and the classical-baseline and software details. \emph{Supporting theory} collects the multi-step SGLD mapping and sufficient-budget certificate, the covariance and cost-dominance derivations and proofs, the success-probability route, and the $\ell_2$-budget scaling. \emph{Robustness checks} collects the architectural and estimator variations, the supporting simulation figures, and the zero-order methods.

\section*{Part I: Methods and reproducibility}

\section{Simulation protocol}
\label{methods:sim_protocol}
This subsection collects the reproducibility-critical experimental configuration that supports the simulation results in Section~\ref{sec:Res} (single-step shot scaling, multi-step allocation, and the classical baseline comparison).

\subsection{Common experimental configuration}
\label{methods:sim_common}
All experiments draw from a shared base configuration so that individual subsections isolate one factor at a time while adhering to the setup discussed in Section~\ref{sec:background}.
\begin{itemize}
    \item \textbf{Datasets and encoding.} Unless noted otherwise, we use MNIST and Fashion–MNIST with phase–dense encoding (\ref{app:phase_dense}), mapping the downsampled input to $q$ qubits so that $d\leq2^q$ and $d = a^2$ for integer $a$ of pixel length of the square downsampled image. Resulting in $(q,d)\in\{(4,16),\allowbreak(5,25),\allowbreak(6,64),\allowbreak(7,121),\allowbreak(8,256),\allowbreak(9,484),\allowbreak(10,784)\}$; these dimensions span two orders of magnitude with systematic doubling, and $d=784$ is the native MNIST resolution ($28\times28$). The number of classes is $C=q$, choosing from the first $q$ classes of each dataset and balancing the training and test splits. Data is normalized to $[-\pi/2, \pi/2]$ before encoding so that the Gaussian concentration argument in \ref{app:gauss_l2} aligns with the $\epsilon=\kappa\sqrt{d}$ scaling used in Proposition~\ref{prop:fgsm-optimization}.
    \item \textbf{Models and training.} Baseline classifiers employ 200 strongly-entangling layers (SEL; Fig.~\ref{fig:ansatz_sim}) with identical initialization and optimization hyperparameters across qubit counts. Trainable rotation parameters are drawn entrywise from $\mathcal{N}(0,1)$; the default random-number-generator state was used and no fixed seed was set, and we verified that multiple independent re-runs produced consistent training trajectories and final test accuracy. Training uses Adam with cross-entropy loss; learning rate $0.001$ for $q\le 7$ and the variants used for $q\in\{8,9,10\}$ are documented in the released configuration files. Parameter counts are matched to the classical baseline of \S\ref{methods:sim_classical} so that the two pipelines compare like-for-like across $d$: each quantum classifier carries $200\times 3\times q$ trainable rotation parameters (e.g.\ $6{,}000$ at $q=10$), and the CNN's channel width $c=2q$ yields a comparable parameter budget at every dimension (Table~\ref{tab:classical_params} in \ref{app:classical_details}). When architectural variants are tested (\ref{app:arch_general}), only the ansatz changes; training schedules, data splits, and normalization remain fixed, preserving the Jacobian factors $J_L$ and $J_U$ that appear in the variance expression (\ref{eq:variance1}). A deep, generically-entangling stack of this kind generates a near-maximal dynamical Lie algebra (the highly expressive class for which the expressivity--measurement-efficiency trade-off~\cite{chinzei_tradeoff_2025} forces few simultaneously measurable gradient components), so the no-amortization premise of the single-step bound is met by construction at every $q$, independently of whether the tested instance is classically simulable.
    \item \textbf{Attack normalization.} For attacks, we set the perturbation radius to $\epsilon=\kappa \sqrt{d}$, matching the Gaussian norm concentration rationale in \ref{app:gauss_l2}. Gradients are estimated with the parameter–shift rule (PSR) using an average of $s$ shots per input dimension, giving total budget $R=ds$ per query. Loss surplus $\Delta L(s)=L(\delta)-L(\delta^*)$ and post–attack accuracy are the primary metrics, mirroring the optimization error in Proposition~\ref{prop:fgsm-optimization} and the SGLD framework of \ref{subsec:sgld_summary}.
    \item \textbf{Shot-grid provisioning.} For the single-step sweeps, per-dimension shot budgets are placed at up to eight log-spaced multiples $h=s/s^\star$ of the critical count $s^\star(d)=(d-1)\sigma^2/2\|\gstar\|^2$, spanning $h\in[0.3,12]$ ($d=784$ omits the lowest sub-critical multiples), so that every dimension reaches $h=12$, well inside the small-noise regime. The $\|\gstar\|$ and $\sigma^2$ values entering the grid construction come from preliminary per-sample probe runs used only to set the grids; they are not part of the reported attack curves. Coefficient fits use the window $h\ge1$, i.e.\ from the critical count $s^\star$ upward.
    \item \textbf{Iterative attacks.} When studying C\&W, we retain the same $\epsilon$ schedule and distribute a fixed total budget $R$ over $K$ iterations with per–dimension per–iteration shots $s=R/(Kd)$. Unless otherwise specified, we use $K=5$ steps and a constant learning rate; tuning overhead (e.g., binary search on the C\&W coefficient $c$) is excluded from $R$.
    \item \textbf{Evaluation protocol.} For each $(q,s)$ pair, we attack 200 balanced test samples (rounded as needed). Reported quantities average over all attacked samples without multiple gradient re-sampling to keep the classical cost tractable.
\end{itemize}

\begin{figure}[t]
\centering
\begin{tikzpicture}[
    every node/.style={font=\footnotesize},
    wire/.style={thick},
    block/.style={draw, rounded corners=1pt, fill=black!5, inner sep=2pt, font=\footnotesize, align=center},
    enc/.style={draw, fill=blue!10, inner sep=2pt, font=\footnotesize, align=center},
    had/.style={draw, fill=orange!15, minimum height=4mm, minimum width=4mm, inner sep=0pt, font=\footnotesize},
    meter/.style={draw, fill=white, minimum height=4mm, minimum width=5mm, inner sep=0pt},
    dots/.style={font=\footnotesize},
    brace/.style={decorate, decoration={brace, amplitude=2.5pt, mirror, raise=1pt}, thick},
    x=1mm, y=1mm,
]
  \def\meter#1#2{%
      \begin{scope}[shift={(#1,#2)}]
        \node[meter] (m) at (0,0) {};
        \draw[thick] (-1.4,-0.8) arc[start angle=180, end angle=0, radius=1.4];
        \draw[->, thick] (0,-0.8) -- (1.0,1.0);
      \end{scope}}

  \node[anchor=west, font=\footnotesize\bfseries] at (-2, 22) {(a)};
  \foreach \y in {2, 12, 17} {
    \draw[wire] (0, \y) -- (38, \y);
    \draw[wire] (44, \y) -- (74, \y);
  }
  \foreach \y in {2, 12, 17} { \node[font=\footnotesize, anchor=east] at (-1, \y) {$\ket{0}$}; }
  \foreach \y in {2, 12, 17} { \node[had] at (5, \y) {$H$}; }
  \node[font=\footnotesize] at (5, 7) {$\vdots$};
  \node[enc, minimum width=10mm, minimum height=18mm] (a-enc) at (15, 9.5) {$\Phi(x)$};
  \node[block, minimum width=10mm, minimum height=18mm] (a-sel1) at (28, 9.5) {SEL\\ layer 1};
  \node[font=\footnotesize] at (41, 9.5) {$\cdots$};
  \node[block, minimum width=18mm, minimum height=18mm] (a-selN) at (56, 9.5) {SEL\\ layer $L{=}200$};
  \draw[brace] (a-sel1.south west) -- (a-selN.south east)
      node[midway, below=4pt, font=\footnotesize, align=center]
      {200 trainable layers};
  \foreach \y in {2, 12, 17} { \meter{70}{\y} }
  \node[font=\footnotesize] at (70, 7) {$\vdots$};

  \begin{scope}[yshift=-34mm]
    \node[anchor=west, font=\footnotesize\bfseries] at (-2, 25) {(b)};
    \node[anchor=west, font=\footnotesize\itshape] at (3, 25) {Contents of one SEL layer};
    \foreach \y in {2, 8, 14, 19} { \draw[wire] (0, \y) -- (72, \y); }
    \foreach \y in {2, 8, 14, 19} {
      \node[block, minimum width=8mm, minimum height=4mm] at (10, \y) {$U$};
    }
    \fill (28, 19) circle (0.7);
    \draw[wire] (28, 19) -- (28, 14);
    \draw (28, 14) circle (1.2);
    \draw (28-1.2, 14) -- (28+1.2, 14);
    \draw (28, 14-1.2) -- (28, 14+1.2);
    \fill (36, 14) circle (0.7);
    \draw[wire] (36, 14) -- (36, 8);
    \draw (36, 8) circle (1.2);
    \draw (36-1.2, 8) -- (36+1.2, 8);
    \draw (36, 8-1.2) -- (36, 8+1.2);
    \fill (44, 8) circle (0.7);
    \draw[wire] (44, 8) -- (44, 2);
    \draw (44, 2) circle (1.2);
    \draw (44-1.2, 2) -- (44+1.2, 2);
    \draw (44, 2-1.2) -- (44, 2+1.2);
    \fill (52, 2) circle (0.7);
    \draw[wire] (52, 2) -- (52, -2) -- (60, -2) -- (60, 21) -- (52, 21) -- (52, 19);
    \draw (52, 19) circle (1.2);
    \draw (52-1.2, 19) -- (52+1.2, 19);
    \draw (52, 19-1.2) -- (52, 19+1.2);
    \node[font=\footnotesize, anchor=north west, align=left] at (0, -5)
        {Single-qubit rotation $U\!\equiv\!U(\phi_1,\phi_2,\phi_3)$ (a general $SU(2)$ Euler-\\
         angle rotation) on every wire, followed by\\
         a ring of CNOTs whose stride cycles with the layer index $\ell$.};
  \end{scope}
\end{tikzpicture}
\caption{\textbf{Simulation ansatz.}
(\textbf{a}) The simulation pipeline (\ref{methods:sim_protocol}): Hadamards on each of $q\!\in\!\{4,\dots,10\}$ wires, a phase-dense encoding $\Phi(x)$ of the $d$-dimensional input ($d$ up to $784$), 200 strongly-entangling layers (SEL), and single-qubit Pauli-$Z$ readouts.
(\textbf{b}) Contents of one SEL layer: a column of trainable single-qubit rotations $U\!\equiv\!U(\phi_1,\phi_2,\phi_3)$ followed by a ring of CNOTs whose stride cycles with $\ell$. Each layer carries an independent set of trainable angles.}
\label{fig:ansatz_sim}
\end{figure}

\paragraph{Scope and generality.} The design space of QML models, spanning circuit ans\"{a}tze, encoding schemes, depths, datasets, and attack configurations, is vast, and exhaustive coverage is computationally infeasible. Our experimental methodology, therefore, isolates individual factors (architecture, estimator, iteration count) while holding others fixed, enabling controlled comparisons that reveal how each factor modulates the underlying scaling laws. The consistent $1/s$ decay and superlinear growth of the coefficient observed across all tested configurations suggest that the shot-noise constraints are a fundamental feature of measurement-based gradient estimation. While future work may extend the parameter ranges or introduce additional circuit families, the methodological principle of varying one factor at a time against a shared baseline ensures that conclusions about scaling behavior remain interpretable and generalizable.

\subsection{Classical baseline: model and timing protocol}
\label{methods:sim_classical}
For each $(q,d,C)$ configuration, we train a VGG-style CNN with width scaled by qubit count ($c=2q$ base channels) so that parameter counts remain comparable to the quantum circuits (Table~\ref{tab:classical_params} in \ref{app:classical_details}). The perturbation budget matches the quantum experiments exactly: $\epsilon = \sqrt{d}/32$, giving a constant $\epsilon/\sqrt{d} = 0.03125$ across all~$d$.

The gradient cost ratio is defined as $\rho_{\text{classical}} = t_{\text{bwd}} / t_{\text{fwd}}$, with wall-clock timing a proxy for this dimensionless arithmetic-cost ratio. To obtain stable measurements that reflect the intrinsic arithmetic cost ratio (rather than GPU kernel-launch artifacts; see \ref{app:classical_details}), we time per-sample forward and backward passes on CPU with 100 warmup iterations followed by 500~samples $\times$ 5~repetitions, reporting the median. The backward pass computes the gradient by reverse-mode automatic differentiation with respect to the input, yielding the exact gradient an attacker would use. Attack success rates are measured separately via standard FGSM and C\&W attacks on the full test set using a GPU.

\section{Hardware experimental setup}
\label{methods:hw_protocol}

\begin{figure}[t]
\centering
\begin{tikzpicture}[
    every node/.style={font=\footnotesize},
    wire/.style={thick},
    block/.style={draw, rounded corners=1pt, fill=black!5, inner sep=2pt, font=\footnotesize, align=center},
    enc/.style={draw, fill=blue!10, inner sep=2pt, font=\footnotesize, align=center},
    meter/.style={draw, fill=white, minimum height=4mm, minimum width=5mm, inner sep=0pt},
    dots/.style={font=\footnotesize},
    brace/.style={decorate, decoration={brace, amplitude=2.5pt, mirror, raise=1pt}, thick},
    x=1mm, y=1mm,
]
  \def\meter#1#2{%
      \begin{scope}[shift={(#1,#2)}]
        \node[meter] (m) at (0,0) {};
        \draw[thick] (-1.4,-0.8) arc[start angle=180, end angle=0, radius=1.4];
        \draw[->, thick] (0,-0.8) -- (1.0,1.0);
      \end{scope}}

  \node[anchor=west, font=\footnotesize\bfseries] at (-2, 22) {(a)};
  \foreach \y in {2, 7, 12, 17} { \draw[wire] (0, \y) -- (84, \y); }
  \foreach \y in {2, 7, 12, 17} { \node[font=\footnotesize, anchor=east] at (-1, \y) {$\ket{0}$}; }
  \node[block, minimum width=9mm,  minimum height=18mm] (h-pre)  at (8.5,  9.5) {HWE\\ $\times\,3$};
  \node[enc,   minimum width=6mm,  minimum height=18mm] (h-rx1)  at (17,   9.5) {$RX$};
  \node[block, minimum width=9mm,  minimum height=18mm] (h-e0)   at (26,   9.5) {HWE\\ $\times\,3$};
  \node[enc,   minimum width=6mm,  minimum height=18mm] (h-rx2)  at (34.5, 9.5) {$RX$};
  \node[block, minimum width=9mm,  minimum height=18mm] (h-e1)   at (43,   9.5) {HWE\\ $\times\,3$};
  \node[enc,   minimum width=6mm,  minimum height=18mm] (h-rx3)  at (51.5, 9.5) {$RX$};
  \node[block, minimum width=13mm, minimum height=18mm] (h-post) at (63,   9.5) {HWE\\ $\times\,6$};
  \draw[decorate, decoration={brace, amplitude=2.5pt, raise=1pt}, thick]
       (h-rx1.north west) -- (h-rx3.north east)
       node[midway, above=3pt, font=\footnotesize] {3 $RX$ uploads ($3{\times}4{=}12$ features)};
  \draw[brace] (h-pre.south west) -- (h-post.south east)
      node[midway, below=4pt, font=\footnotesize] {15 HWE layers};
  \foreach \y in {7, 12} { \meter{78}{\y} }
  \node[font=\footnotesize, align=center, anchor=south] at (78, 19.6) {margin\\ readout};

  \begin{scope}[yshift=-34mm]
    \node[anchor=west, font=\footnotesize\bfseries] at (-2, 25) {(b)};
    \node[anchor=west, font=\footnotesize\itshape] at (3, 25) {Contents of one HWE layer};
    \foreach \y in {2, 8, 14, 19} { \draw[wire] (0, \y) -- (66, \y); }
    \foreach \y in {2, 8, 14, 19} {
      \node[block, minimum width=8mm, minimum height=4mm] at (8, \y) {$U$};
    }
    \fill (24, 19) circle (0.7);
    \fill (24, 14) circle (0.7);
    \draw[wire] (24, 19) -- (24, 14);
    \node[font=\scriptsize, anchor=west] at (25.5, 16.5) {$RZZ$};
    \fill (24, 8) circle (0.7);
    \fill (24, 2) circle (0.7);
    \draw[wire] (24, 8) -- (24, 2);
    \node[font=\scriptsize, anchor=west] at (25.5, 5) {$RZZ$};
    \fill (44, 14) circle (0.7);
    \fill (44, 8) circle (0.7);
    \draw[wire] (44, 14) -- (44, 8);
    \node[font=\scriptsize, anchor=west] at (45.5, 11) {$RZZ$};
    \foreach \y in {2, 8, 14, 19} {
      \node[block, minimum width=8mm, minimum height=4mm] at (60, \y) {$U$};
    }
    \node[font=\footnotesize, anchor=north west, align=left] at (0, -5)
        {Single-qubit rotation $U\!\equiv\!U(\phi_1,\phi_2,\phi_3)$ on every wire,\\
         a two-step brick of parametric $RZZ$ entanglers (even pairs,\\
         then odd pair), and a closing column of $U$ rotations.};
  \end{scope}
\end{tikzpicture}
\caption{\textbf{Hardware ansatz.}
(\textbf{a}) Hardware deployment on \texttt{ibm\_boston} (\ref{methods:hw_protocol}): a 4-qubit hardware-efficient ansatz (HWE) encoding $d\!=\!12$ features on $q\!=\!4$ qubits. Three $RX(x)$ feature-upload blocks (four features each, one per qubit) are interleaved with HWE entangling layers: three layers before the first upload, three after each upload, and three more closing the circuit, for 15 HWE layers in total. Only the two middle qubits are measured: the binary label is the margin $\langle Z_{q_1}\rangle-\langle Z_{q_2}\rangle$ between them, while the outer qubits are entangled but left unread. The reduced depth keeps the transpiled circuit within the device coherence budget.
(\textbf{b}) Contents of one HWE layer: a column of trainable single-qubit rotations $U\!\equiv\!U(\phi_1,\phi_2,\phi_3)$, a two-step brick of parametric $RZZ$ entanglers (even pairs, then odd pair), and a closing column of $U$ rotations. Each layer carries an independent set of trainable angles.}
\label{fig:ansatz_hw}
\end{figure}

This subsection collects the deployment, attack-execution, and bootstrap-sampling protocol used for the hardware experiments reported in Section~\ref{sec:hw_validation}.

We deploy a 4-qubit hardware-efficient classifier (HWE; Fig.~\ref{fig:ansatz_hw}) on $3\times 4$ down-sampled MNIST ($d=12$), trained on an ideal noiseless simulator with the parameter-shift rule (PSR) and transferred to \texttt{ibm\_boston} without retraining; three $RX$ feature-upload blocks encode the $d=12$ features on the $q=4$ qubits. We retain HWE rather than the 200 strongly-entangling layers of \S\ref{methods:sim_protocol} so the transpiled depth stays within the device coherence budget.

We attack with $\ell_2$ FGSM at $\epsilon=1.0$ and sweep the attacker's per-parameter shot budget $s$ on a logarithmic grid; each parameter spends $s$ shots across its two PSR shift circuits (so $s\ge2$), and $R=d\cdot s$ is the total. The $2d{+}1$ PSR shift circuits run in parallel across 20 disjoint 4-qubit tiles (one input per tile), each measured once as a wide circuit at 8192 physical shots with dynamical decoupling and gate/measurement twirling, from which the budget-$s$ gradients are bootstrap-resampled.

From these counts we bootstrap-resample the per-shift measurements to synthesize the budget-$s$ gradient (each parameter using its two parameter-shift circuits, so the minimum budget is $s=2$). The simulation's $s\to\infty$ limit uses the exact gradient; on hardware the gradient retains the device bias, so the attack is bounded by the measured budget, short of the exact-gradient limit. The defender margin at the perturbed point is evaluated on an exact statevector, so device noise enters only through the attacker's gradient. A sample is attacked if any point on the $\epsilon$-ball segment along the noisy gradient direction is misclassified; we report the mean adversarial accuracy over 200 bootstrap resamples, with error bars giving its standard error (below $0.5\%$ at every budget). We compare \texttt{ibm\_boston} (median two-qubit error ${\sim}0.12\%$, $T_2{\sim}320\,\mu$s, readout ${\sim}0.35\%$, stable across the measurement window) against the ideal statevector simulator, on inputs pre-selected by correct noiseless classification (the full 100-input cohort, measured in five tile-batches of 20).

\section{Software stack and statistical fitting}
\label{methods:software}
Simulation and hardware experiments were run on Linux (Ubuntu via WSL) with Python~3.11. Quantum components used PennyLane~0.42 (with PennyLane-Lightning~0.42 and PennyLane-Qiskit~0.42), Qiskit~1.2, Qiskit-Aer~0.16 (Qiskit-Aer-GPU~0.15), Qiskit-IBM-Runtime~0.29, and Qiskit-Machine-Learning~0.8. Classical learning used PyTorch~2.10 (CUDA~12.8 build) with torchvision~0.25, alongside the standard scientific-Python stack. Power-law exponents are obtained by an unweighted log--log ordinary-least-squares fit of the per-dimension coefficient $c_0(d)$ (through-origin slope of the loss gap versus $1/s$ on the in-regime points $h\ge1$) against $d\in\{16,25,64,121,256,484,784\}$, yielding a slope $p_2$; the total-shot exponent reported in the body is $p_1=p_2+1$ (since $R=d\cdot s$ and $s\propto c_0(d)$). The 95\% confidence interval is the Student-$t$ interval on the regression slope with $n{-}2{=}5$ degrees of freedom (the $+1$ shift does not change its half-width).

\section{Phase-dense encoding details}
\label{app:phase_dense}
Standard angle encoding injects one feature per qubit per encoding layer via single-qubit rotations $RZ(x_j)$, yielding a capacity of at most $q$ features per layer. To encode $d$ features with only $q$ qubits in a single encoding step, we employ a \emph{phase-dense} encoding defined by
\[
  U_{\mathrm{enc}}(x) = \exp\!\Bigl(i\sum_{j=1}^{d} x_j\,|j\rangle\!\langle j|\Bigr)
  = \mathrm{diag}\!\bigl(e^{i\,x_1},\, \ldots,\, e^{i\,x_{d}},\,1,\ldots,1\bigr),
\]
where $\{|j\rangle\}$ denotes the $2^q$-dimensional computational basis. Each classical feature $x_j$ is assigned to a distinct basis state, giving an encoding capacity of $d \leq 2^q$. This matches the experimental configurations $(q,d) \in \{(4,16),\allowbreak (5,25),\allowbreak (6,64),\allowbreak (7,121),\allowbreak (8,256),\allowbreak (9,484),\allowbreak (10,784)\}$; when $d < 2^q$, unused basis-state phases are set to zero.

\paragraph{PSR compatibility.}
The generator associated with feature $x_j$ is the projector $G_j = |j\rangle\!\langle j|$, whose spectrum is $\{0, 1\}$ with spectral gap $\Delta G_j = 1$. Since each $G_j$ has exactly two distinct eigenvalues, the parameter-shift rule (Eq.~\ref{eq:psr_intro}) applies directly:
\[
  \frac{\partial\langle O\rangle}{\partial x_j}
  \;=\; \tfrac{1}{2}\bigl[\langle O(x + \tfrac{\pi}{2}\,e_j)\rangle
                         - \langle O(x - \tfrac{\pi}{2}\,e_j)\rangle\bigr],
\]
requiring two shifted circuit evaluations per input dimension.

\paragraph{Hardware depth caveat.}
Any decomposition of $U_{\mathrm{enc}}$ into a standard hardware gate set such as $\{\mathrm{CNOT},\, U_3\}$ requires $\Theta(2^q)$ gates, i.e., circuit depth exponential in~$q$. This is unavoidable: encoding $d \leq 2^q$ independent phases into computational-basis amplitudes demands at least $d$ parameterized gates. In our experiments, classical statevector simulation sidesteps this constraint entirely. On near-term hardware, the exponential depth limits phase-dense encoding to small~$q$; however, the shot-noise scaling laws derived in the paper depend only on the number of input features~$d$ and the PSR structure (two shifted evaluations per feature), not on the physical gate depth. Alternative hardware-friendly encodings that inject fewer than $2^q$ features per layer would simply require more encoding repetitions (as in the data re-uploading variant) to achieve the same capacity, without altering the per-dimension shot cost.

\paragraph{Preprocessing.}
MNIST and Fashion-MNIST images are downscaled via average pooling to $d = a^2$ pixels (with $a$ chosen so that $d \leq 2^q$), then normalized to $[-\pi/2,\,\pi/2]$ before encoding. This normalization ensures that the Gaussian concentration argument in \ref{app:gauss_l2} aligns with the $\epsilon = \kappa\sqrt{d}$ scaling used in Proposition~\ref{prop:fgsm-optimization}.

\paragraph{Data re-uploading variant.}
For the re-uploading architecture tested in \ref{app:arch_general}, the same phase-dense encoding layer appears at the input and again after 100 entangling layers within the 200-layer circuit, yielding two encoding repetitions. Each repetition re-encodes the full input $x$, so each feature influences the circuit output through two pathways, raising both the gradient norm and the per-shot readout variance; the two effects offset in the cost coefficient, while the PSR evaluation count per input dimension doubles (\ref{app:arch_general}).

\section{Classical baseline: experimental details}
\label{app:classical_details}

This appendix provides full methodology and supplementary data for the classical neural network baseline reported in Section~\ref{subsec:classical_baseline}.

\paragraph{Architecture.}
Table~\ref{tab:classical_params} lists the classical CNN configurations. The VGG-style architecture uses two convolutional stages (Conv--ReLU--MaxPool) followed by adaptive average pooling and a single linear classifier:
\[
\resizebox{\columnwidth}{!}{$
  \text{Conv}(1,c,3) \to \text{ReLU} \to \text{MaxPool}(2) \to \text{Conv}(c,2c,3) \to \text{ReLU} \to \text{MaxPool}(2) \to \text{GAP}(1) \to \text{Linear}(2c, C),
$}
\]
with $c = 2q$ (channel base scaled by qubit count). This keeps parameter counts in the same order of magnitude as the quantum re-upload circuits.

\begin{table}[t]
\centering
\caption{\textbf{Matched classical CNN configurations.} Channel base, parameter counts, and depth selected to match the quantum experiments at each input dimension.}
\label{tab:classical_params}
\small
\begin{tabular}{ccccc}
\toprule
$q$ & $d = n^2$ & channels $(c, 2c)$ & classical params & quantum params ($\approx 600q$) \\
\midrule
4  & 16  & 8, 16   & 1{,}316 & 2{,}400 \\
5  & 25  & 10, 20  & 2{,}025 & 3{,}000 \\
6  & 64  & 12, 24  & 2{,}886 & 3{,}600 \\
7  & 121 & 14, 28  & 3{,}899 & 4{,}200 \\
8  & 256 & 16, 32  & 5{,}064 & 4{,}800 \\
9  & 484 & 18, 36  & 6{,}381 & 5{,}400 \\
10 & 784 & 20, 40  & 7{,}850 & 6{,}000 \\
\bottomrule
\end{tabular}
\end{table}

\paragraph{CPU timing rationale.}
Wall-clock timing is a \emph{proxy} for the dimensionless gradient cost ratio $\rho = t_{\text{bwd}}/t_{\text{fwd}}$, a ratio of arithmetic cost rather than absolute throughput. GPU timing at batch\_size=1 is dominated by kernel-launch latency ($\sim$0.9\,ms each direction), collapsing $\rho$ to~$\approx 1$ and hiding the true arithmetic ratio. At batch\_size=200, forward parallelization ($\sim$0.008\,ms/sample) outpaces the sequential backward pass ($\sim$0.12\,ms/sample), inflating $\rho$ to~$\approx 15$. Neither reflects the $O(1)$ FLOP ratio. Per-sample CPU timing avoids these GPU scheduling artifacts and yields a stable $\rho \approx 5$, consistent with the Baur--Strassen bound of $\le 5{\times}$ for computing a single loss-to-input gradient~\cite{baur_complexity_1983, griewank_evaluating_2008}. Training and attacks still run on GPU; only the timing measurement uses CPU.

\paragraph{Quantum overhead extraction.}
The quantum $\rho$ values in Table~\ref{tab:classical_overhead} come from the experimental data, not from the $d^{3.00}$ fit. Backpropagation returns the classical attacker an \emph{exact} gradient; the quantum attacker buys gradient accuracy with shots, the fit $\Delta L\approx c_0(d)/s$ giving the shortfall of the $s$-shot attack from the perfect-gradient ideal. The quantum backward cost is the shot budget that brings the attack within a fixed tolerance $\Delta L^\star$ of that ideal, the efficacy classical backpropagation reaches at its $\le5{\times}$ inference overhead. Each qubit count's fit (Section~\ref{subsec:shot_scaling}) yields a per-dimension coefficient $c_0(d)$ (units: loss$\,\cdot\,$shots); reaching tolerance $\Delta L^\star$ per component costs $c_0(d)/\Delta L^\star$ shots, so with the Parameter-Shift Rule's two shifted evaluations per input dimension $R_{\text{bwd}} = 2d\,c_0(d)/\Delta L^\star$. We set $\Delta L^\star$ to one unit of attack loss, a fixed tolerance common to all $d$, at which the in-regime $1/s$ law of Section~\ref{subsec:shot_scaling} is evaluated; a looser tolerance lowers the quantum cost. The forward cost is $R_{\text{fwd}} = 100$. Both $\Delta L^\star$ and $R_{\text{fwd}}$ shift the quantum curve vertically and leave the scaling exponent unchanged.

\paragraph{Classical gradient cost ratio isolated.}
Figure~\ref{fig:classical_rho_only} shows the classical overhead across all 14 configurations with a constant fit $\bar{\rho} = 5.2 \pm 0.57$ and a power-law fit yielding $\rho \propto d^{-0.04}$; an exponent indistinguishable from zero.

\begin{figure}[h]
    \centering
    \includegraphics[width=0.5\linewidth]{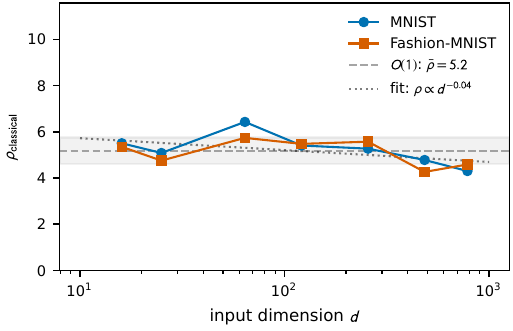}
    \caption{\textbf{Classical gradient cost ratio $\rho_{\text{classical}}(d)$.} Dashed gray: constant fit $\bar{\rho} = 5.2 \pm 0.57$. Dotted: power-law fit $\rho \propto d^{-0.04}$, confirming $O(1)$ scaling. Both MNIST (circles) and Fashion-MNIST (triangles) overlap, showing no dataset dependence.}
    \label{fig:classical_rho_only}
\end{figure}

\paragraph{Full attack success data.}
Table~\ref{tab:classical_attack_full} reports all 14 configurations.

\begin{table}[t]
\centering
\caption{\textbf{Classical attack success across dimensions.} Full results for FGSM and C\&W on both datasets and every input dimension.}
\label{tab:classical_attack_full}
\small
\begin{tabular}{clcccccc}
\toprule
$d$ & dataset & clean acc & FGSM & C\&W & $t_{\text{fwd}}$ (ms) & $t_{\text{bwd}}$ (ms) & $\rho$ \\
\midrule
16  & MNIST   & 92.0 & 50.1 & 100.0 & 0.039 & 0.225 & 5.51 \\
25  & MNIST   & 95.5 & 50.7 & 99.9  & 0.042 & 0.258 & 5.08 \\
64  & MNIST   & 97.9 & 21.4 & 99.9  & 0.043 & 0.248 & 6.43 \\
121 & MNIST   & 98.1 & 15.6 & 98.5  & 0.052 & 0.268 & 5.41 \\
256 & MNIST   & 96.4 & 22.1 & 89.4  & 0.051 & 0.266 & 5.27 \\
484 & MNIST   & 94.4 & 32.9 & 86.4  & 0.064 & 0.308 & 4.78 \\
784 & MNIST   & 93.0 & 54.7 & 87.0  & 0.077 & 0.330 & 4.30 \\
\midrule
16  & F-MNIST & 88.2 & 57.0 & 100.0 & 0.045 & 0.256 & 5.35 \\
25  & F-MNIST & 83.6 & 64.6 & 100.0 & 0.052 & 0.230 & 4.76 \\
64  & F-MNIST & 87.1 & 50.0 & 94.6  & 0.042 & 0.240 & 5.74 \\
121 & F-MNIST & 80.3 & 55.2 & 94.0  & 0.046 & 0.244 & 5.48 \\
256 & F-MNIST & 79.7 & 59.6 & 99.9  & 0.049 & 0.270 & 5.57 \\
484 & F-MNIST & 81.3 & 59.4 & 97.1  & 0.069 & 0.296 & 4.27 \\
784 & F-MNIST & 82.0 & 67.4 & 96.0  & 0.077 & 0.329 & 4.59 \\
\bottomrule
\end{tabular}
\end{table}

\section*{Part II: Supporting theory and derivations}

\section{Multi-step Carlini--Wagner attacks as stochastic gradient Langevin dynamics}
\label{app:sgld_sufficient}

This appendix gives the SGLD mapping summarized in Section~\ref{sec:multistep} and the sufficient-budget certificate it yields. The certificate is a \emph{sufficient condition} for guaranteed attacker convergence to a target error $\Delta$, derived from Raginsky et al.'s convergence theorem~\cite{raginsky_non-convex_2017}: it upper-bounds a budget that suffices, not one that is necessary, and so serves as a modeling framework consistent with the empirical evidence of Section~\ref{sec:Res} rather than as a lower bound on attacker cost.

\subsection{The C\&W--SGLD mapping}
\label{sec:CW}

The C\&W attack jointly minimizes misclassification loss and perturbation size,
\begin{equation}
\label{eq:CWloss}
    \underset{\delta}{\mathrm{minimize}} \quad c\,L_{mc}(x+\delta) + \|\delta\|_2^2,
\end{equation}
where $c$ balances the two objectives and is assumed well-tuned. Solving \eqref{eq:CWloss} by gradient descent with PSR-estimated gradients, the finite-shot update splits into the deterministic gradient and a measurement-noise term. Modeling the per-dimension shot noise as isotropic with the per-shot variance $\sigma^2$ of Section~\ref{sec:notation}, so that its covariance is $(\sigma^2/s)I$, the update reads
\begin{equation}
\label{eq:SGLD}
    \delta_{t+1} = \delta_t - \eta\,\nabla_\delta L(\delta_t) + \sqrt{2\eta/\beta}\,\xi,\qquad \xi\sim\mathcal N(0,I),
\end{equation}
which is exactly a gradient Langevin (SGLD) iteration once the injected-noise scale is matched to the shot noise. Equating the two noise scales gives the change of variable
\begin{equation}
\label{eq:ChangeOfVar}
    \beta = \frac{2s}{\eta\sigma^2},
\end{equation}
so the effective inverse temperature is linear in the per-iteration shot budget, $\beta\propto s$. A low budget is a high-temperature, exploratory trajectory; a high budget is a low-temperature, greedy descent. This coupling is the modeling framework used to read the iterative experiments (Section~\ref{subsec:multistep}).

\subsection{Sufficient-budget certificate}
\label{subsec:sgld_summary}

Raginsky et al.~\cite{raginsky_non-convex_2017,xu_global_2020} bound the expected optimality gap $\mathbb{E}[L(\delta_K)]-L(\delta^*)$ of the gradient Langevin iterate under three assumptions on the loss $\delta\mapsto L(\delta)$: $M$-smoothness, $(m,b)$-dissipativity, and an initialization law with finite $\kappa_0=\log\int_{\mathbb{R}^d} e^{\|\delta\|^2}p_0(\delta)\,d\delta$. The C\&W loss \eqref{eq:CWloss} with $p=2$ satisfies all three for a canonical PQC/softmax/C\&W model. The quantum layer is smooth because unitary encoding $e^{-iH_jx_j}$ has input derivatives bounded by the encoding-Hamiltonian norms $\|H_j\|$~\cite{kverne2025lipschitzsmooth} (the property that also underlies the trace-distance robustness guarantee of~\cite{dowling_adversarial_2024}), and the $\|\delta\|_2^2$ regularizer supplies dissipativity; the initialization condition holds for a point mass at $\delta=0$ or any $\mathcal N(0,\sigma_0^2 I_d)$ with $\sigma_0<1/\sqrt2$. Verifying these for the model yields $d$-dependent constants $M=O(cd+1)$, $b=O(c^2 d^2)$, $B=O(c)$, and gradient-noise variance $\sigma_g^2=O(c^2\sigma_\Sigma^2 d/s)$, with $m=1$ and $\kappa_0$ dimension-independent, where the bounded measurement-covariance condition $\|\Sigma_v\|\le\sigma_\Sigma^2$ keeps the observable norm $O(1)$ under unitary conjugation.

Combining the bound with the change of variable \eqref{eq:ChangeOfVar} translates the convergence requirement into a per-iteration condition $s\ge O((d/\Delta)\log(d/\Delta))\,\eta\sigma^2$, the theoretical analog of the empirical sweet spot $s^*(d)$, and, with the iteration count the same argument requires, a total budget
\begin{equation}
\label{eq:Rscale}
    R = Kds = O\!\left(\frac{d^{16}\log^{15}(d/\Delta)}{\Delta^{19}}\left(\frac{\log(d/\Delta)-\log\lambda'_*(d)}{\lambda'_*(d)}\right)^{5}\right)\eta\sigma^2,
\end{equation}
where $\lambda'_*(d)$ is the uniform spectral gap of the Langevin diffusion at the error-matching temperature. Two points make this informative despite being only sufficient. First, Raginsky's bound is monotone in each of $M,b,B,1/m$, and the verified QML constants above grow polynomially in $d$, so substituting them can only enlarge the budget: the exponent $16$ obtained under $d$-independent constants is a lower estimate of the certified degree. Second, the spectral gap satisfies $\lambda_*\gtrsim\mathrm{poly}(d,\beta)^{-1}e^{-\beta\Delta F}$ in multi-well landscapes, and this exponential-in-$\beta$ factor drives the high degree once $\beta$ is set to its error-matching value; for landscapes with favorable geometry $\lambda_*$ is far larger, which is why the empirical $R_{\mathrm{crit}}(d)$ of Section~\ref{subsec:multistep} sits far below the worst-case certificate.

The certificate is thus loose by construction and is not a lower bound on attacker cost; it establishes only that current finite-shot convergence theory grants an iterative attacker no guarantee of beating the single-step scaling within any feasible budget. What the mapping contributes instead is the $\beta\propto s$ modeling framework, which predicts the observed bias--variance trade-off and the sweet spot $s^*(d)$ (Section~\ref{subsec:multistep}).

\section{Supporting derivations and proofs}
\label{app:supp_derivations}

\subsection{Gradient-estimate covariance bookkeeping}
This expands the covariance structure summarized after Eq.~\eqref{eq:variance1}. With PSR variance $\mathbb V[\hat J_{ij}]=\Theta(\Delta_{G_j}^2/s_j)$, the commuting case (shared bitstrings at fixed shift index $j$, disjoint shot pools across different $j$) gives
\begin{align}
\mathrm{Cov}(\hat J_{ij},\hat J_{i'j'})&=\mathbb V[\hat J_{ij}] && \text{if}\ \ i'=i,\ j'=j, \\
\mathrm{Cov}(\hat J_{ij},\hat J_{i'j'})&\neq 0 && \text{if}\ \ j'=j,\ i'\neq i, \\
\mathrm{Cov}(\hat J_{ij},\hat J_{i'j'})&=0 && \text{if}\ \ j'\neq j.
\end{align}
and the non-commuting case makes all partial derivatives independent. Writing $V_{iv}=\mathbb V(\hat J_{iv})$, the full vectorized covariance of $\hat G=J_U\hat J J_L$ is
\[
\mathrm{Cov}\big[\mathrm{vec}(\hat G)\big]=(J_L^\top\!\otimes J_U)\,\mathrm{diag}(\mathrm{vec}(V))\,(J_L^\top\!\otimes J_U)^\top,
\]
which for $J_L=I_d$ reduces entrywise to the Hadamard-square weighting $\mathbb V[\hat G_{\cdot v}]=J_U^{\odot 2}\,V_{\cdot v}$.

\subsection{Proofs for the cost-dominance and correlated-noise results}

These complete the statements in Sections~\ref{methods:fp_bp_dominance} and~\ref{subsec:expected_relax}.

\begin{proof}[Proof of Proposition~\ref{prop:forward_variance}]
By Taylor expansion (Assumption~\ref{ass:fp_noise}), $J_U(\hat{y}_k) - J_U(y_k) \approx H_U(\hat{y}_k - y_k)$ to leading order. The readout error $\hat{y}_k - y_k$ is zero-mean Gaussian with covariance $\Sigma_y = \Theta(1/m)$ per observable (standard shot-noise scaling). For the Gaussian linear transformation,
\[
\mathbb{V}[H_U(\hat{y}_k - y_k)] = H_U\,\Sigma_y\,H_U^\top = \Theta(1/m).
\]
\end{proof}

\begin{proof}[Proof of Proposition~\ref{prop:backward_dominance}]
Forward-pass estimation requires $d_{\mathrm{out}}$ observable measurements, each needing $m = \Theta(1/\sigma_{\mathrm{fwd}}^2)$ shots to achieve target variance $\sigma_{\mathrm{fwd}}^2$. Total forward shots: $R_{\mathrm{fwd}} = \Theta(d_{\mathrm{out}}/\sigma_{\mathrm{fwd}}^2)$.

Backward-pass Jacobian estimation via PSR requires $2d_{\mathrm{in}}d_{\mathrm{out}}$ circuit evaluations (one for each partial derivative at two shifted parameter values), each needing $s = \Theta(1/\sigma_{\mathrm{bwd}}^2)$ shots. Total backward shots: $R_{\mathrm{bwd}} = \Theta(d_{\mathrm{in}}d_{\mathrm{out}}/\sigma_{\mathrm{bwd}}^2)$.

The ratio is
\[
\frac{R_{\mathrm{bwd}}}{R_{\mathrm{fwd}}} = \Theta\!\left(d_{\mathrm{in}}\,\frac{\sigma_{\mathrm{fwd}}^2}{\sigma_{\mathrm{bwd}}^2}\right).
\]
For comparable precision targets $\sigma_{\mathrm{fwd}}^2 = \Theta(\sigma_{\mathrm{bwd}}^2)$, backward cost dominates by a factor of $d_{\mathrm{in}}$.
\end{proof}

\begin{proof}[Proof of Corollary~\ref{cor:backward_focus}]
Immediate from Proposition~\ref{prop:backward_dominance}: as $d_{\mathrm{in}} \to \infty$ with fixed $d_{\mathrm{out}}$, the ratio $R_{\mathrm{bwd}}/R_{\mathrm{fwd}}$ diverges linearly.
\end{proof}

\begin{proof}[Proof of Corollary~\ref{cor:relaxed_shots}]
We relax the independence assumption and allow arbitrary correlations in the gradient error. The following derivation requires only that the error has a well-defined covariance matrix $\Sigma$ with finite trace; no Gaussianity or small-noise assumption is needed. For concreteness, write the one-shot gradient error model with covariance $\Sigma$ and average $s$ shots per input dimension:
\begin{equation}
    \hat{g} = \gstar + \xi,\qquad \xi\sim \mathcal{N}(\mathbf{0}, \Sigma/s).
\end{equation}
Using the deterministic geometry bound $\cos\alpha \ge (1-\rho)/(1+\rho) \ge 1-2\rho$ with $\rho=\|\xi\|/\|\gstar\|$ and Cauchy--Schwarz, we obtain the correlation-agnostic expected alignment guarantee
\begin{equation}
    \mathbb{E}[\cos\alpha] \;\ge\; 1 - 2\,\frac{\sqrt{\mathbb{E}[\|\xi\|^2]}}{\|\gstar\|}
    \;=\; 1 - 2\sqrt{\frac{\tr\Sigma}{s\,\|\gstar\|^2}}\;.
    \label{eq:expected_align_relax}
\end{equation}
Only $\Sigma\succeq 0$ with finite trace is required for this defensive effect to manifest; a one-line justification is $\mathbb{E}\,\|\xi\| \le \sqrt{\mathbb{E}\,\|\xi\|^2} = \sqrt{\tr\Sigma/s}$ by Cauchy--Schwarz, substituted into $\cos\alpha \ge 1-2\|\xi\|/\|\gstar\|$.

For single-step FGSM under Assumption~\ref{ass:1s_1}, the expected loss surplus relative to the ideal direction obeys
\begin{equation}
    \mathbb{E}[\Delta L] \;=\; \mathbb{E}\big[\epsilon\|\gstar\|(1-\cos\alpha)\big]
    \;\le\; 2\epsilon\,\sqrt{\frac{\tr\Sigma}{s}}\;.
    \label{eq:fgsm_surplus_relax}
\end{equation}
Since the right-hand side is decreasing in $s$, setting it equal to $\Delta$ and solving shows that the bound $\mathbb{E}[\Delta L]\le\Delta$ holds for every $s\ge s_\mathrm{min}(\Delta)=4\epsilon^2\tr\Sigma/\Delta^2$; thus $s_\mathrm{min}$ is the minimum sufficient per-dimension budget, and the dimension-explicit form follows from $\tr\Sigma=\Theta(d)$.
\end{proof}

\subsection{Single-step success-probability route}
This derivation underlies the tail guarantee stated at the end of Section~\ref{subsec:success_prob}; the geometry is shown in Fig.~\ref{fig:angle2robustness}.

First, we consider a one-step gradient-based attack that saturates the perturbation limit $\epsilon$ in one step by letting $\delta = -\epsilon \hat{g}$. Suppose that we consider the adversarial error rate, the percentage of incorrectly classified samples in the testing data, as the benchmark criterion for adversarial performance. Our performance will depend on both the model and the data structure. We shall first consider a single data point near the model boundary and measure the probability of the attack successfully perturbing the sample to cross the model boundary, as shown in Fig.\;\ref{fig:angle2robustness}. Here, we adopt the assumption \ref{ass:1s_1} and \ref{ass:1s_2} from the previous section and denote the distance from a sample to the linear boundary as $l_x$. Under these assumptions, and conditioning on an attackable sample with $l_x \le \epsilon$ (otherwise no perturbation of budget $\epsilon$ can reach the boundary), we notice that only if the angle $\alpha$ between the estimated gradient and the true gradient is smaller than $\tau = \arccos(l_x/\epsilon)$, the perturbed sample can cross the boundary and become an adversarial sample.

To further evaluate the robustness caused by the uncertainty of the gradient measurement during attacks, we consider the probability that the measurement-based attack $\delta = -\epsilon \hat{g}$ is successful. A successful attack will perturb the sample across the nearest model boundary, that is, $\alpha \leq \tau$. As established above, the normalized gradient estimate $\hat{g}_{n}$ follows a von Mises-Fisher distribution with concentration parameter $\kappa_{\mathrm{vMF}} \approx s \|\gstar\|^2/\sigma^2$ (cf.\ Equation~\ref{eq:CosSim}). Thus, the probability of the attack being successful $P(\|\hat{g}_n-\gstar_n\|<\tau)$ with some threshold $\tau$ is shown to be
\begin{equation}
    P(\|\hat{g}_n-\gstar_n\| < \tau) = \int_0^{\tau} f_r(r)\; dr = \frac{\gamma\left(\frac{d-1}{2}, \frac{s \tau^2 \|\gstar\|^2}{2\sigma^2}\right)}{\Gamma\left(\frac{d-1}{2}\right)}\;,
\end{equation}
where $\gamma(\cdot,\;\cdot)$ is the incomplete Gamma function and $\Gamma(\cdot)$ is the Gamma function \cite{mardia_tests_1999}.

If we assume that we are processing high-dimensional data so that $d$ is large, the probability will be further simplified via vMF angular concentration of the normalized gradient estimate (and the small-angle approximation $1-\cos\tau\approx\tau^2/2$) to
\begin{equation}
    P(\|\hat{g}_n-\gstar_n\| < \tau) \approx \Phi\!\left(\frac{\sqrt{2}\!\left(\sqrt{s}\,\tau\|\gstar\| - \sigma\sqrt{d}\right)}{\sigma}\right)\;,
\end{equation}
where $\Phi(\cdot)$ is the CDF of the standard normal distribution and the factor $\sqrt{2}$ arises from the CLT approximation of $\mathrm{Gamma}((d{-}1)/2,\,1)$, whose standard deviation is $\sqrt{(d{-}1)/2}$. The result reveals a key resource-dimension trade-off: to maintain the same attack success probability as dimension $d$ increases, the attacker must proportionally increase the shots per dimension $s$.

An important implication of this result is that maintaining a fixed attack success probability as the dimension $d$ increases requires a proportional increase in the number of shots per input dimension $s$.
\begin{equation}
    \sqrt{s}\,\tau\|\gstar\| - \sqrt{d}\,\sigma = \text{Constant}\;.
\end{equation}
Consider the success rate threshold of 50\%. We obtain the following new success-rate bound.

\begin{proposition}[FGSM shot budget for 50\% success]\label{prop:fgsm-success}
Under Assumptions~\ref{ass:1s_1}, \ref{ass:1s_2}, and~\ref{ass:1s_3}, with $\|\gstar\|=\Theta(1)$ (constant relative geometry), consider a sample $x$ in $d$ dimensions. To maintain a 50\% success rate under an $\ell_2$ FGSM attack, the shots per dimension must scale as
    \begin{equation}
    \label{eq:m}
    s = O\!\left(\frac{d}{\tau^2}\right) \;,
    \end{equation}
where the tolerance angle satisfies $\cos\tau=l_x/\epsilon$ (with $l_x \le \epsilon$ for the sample to be attackable) and $l_x$ is the shortest distance from $x$ to the decision boundary.
\end{proposition}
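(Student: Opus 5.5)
The plan is to read Proposition~\ref{prop:fgsm-success} off the normal approximation to the success probability derived just above it, and then invert that approximation at the median.

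First, I will fix the success event. Under Assumption~\ref{ass:1s_1} the decision boundary is a hyperplane at distance $l_x\le\epsilon$ from $x$. The step $\delta=-\epsilon\hat g_n$ reaches the boundary exactly when the projection $\epsilon\cos\alpha$ exceeds $l_x$, which is the cone condition $\alpha\le\tau=\arccos(l_x/\epsilon)$. Next, under Assumption~\ref{ass:1s_2}, $\hat g_n$ is von Mises--Fisher with $\kappa_{\mathrm{vMF}}\approx s\|\gstar\|^2/\sigma^2$. I will use the incomplete-Gamma expression for $P(\alpha\le\tau)$: the scaled angular deviation $\kappa_{\mathrm{vMF}}(1-\cos\alpha)\approx\kappa_{\mathrm{vMF}}\alpha^2/2$ is approximately $\mathrm{Gamma}((d-1)/2,1)$. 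The small-angle step is justified by Assumption~\ref{ass:1s_3}, since large $\kappa_{\mathrm{vMF}}$ concentrates $\alpha$ near zero. Then I apply the CLT to this Gamma variable for large $d$: its mean is $(d-1)/2$ and its standard deviation is $\sqrt{(d-1)/2}$. This gives
\[
P(\alpha\le\tau)\approx\Phi\!\left(\frac{\sqrt{2}\bigl(\sqrt{s}\,\tau\|\gstar\|-\sigma\sqrt{d}\bigr)}{\sigma}\right).
\]

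Second, I will impose the $50\%$ target. Since $\Phi(0)=1/2$, holding the success rate at one half forces the argument to vanish up to lower-order terms, $\sqrt{s}\,\tau\|\gstar\|=\sigma\sqrt{d}\,(1+o(1))$. Solving gives $s\approx\sigma^2 d/(\tau^2\|\gstar\|^2)$. With $\|\gstar\|=\Theta(1)$ and $\sigma^2$ dimension-independent, as used in Proposition~\ref{prop:fgsm-optimization}, this is $s=O(d/\tau^2)$. Because $P$ is increasing in $s$, any larger allocation also keeps the success rate at least $50\%$, so the bound is a sufficient budget. Multiplying by $d$ gives $R=O(d^2/\tau^2)$, consistent with the expectation analysis.

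The main obstacle is making the Gaussian step honest. The exact median of $\mathrm{Gamma}(k,1)$ is $k-1/3+O(1/k)$, not $k$. The CLT error is $O(1/\sqrt d)$, and the approximations $\kappa_{\mathrm{vMF}}\approx s\|\gstar\|^2/\sigma^2$ and $1-\cos\tau\approx\tau^2/2$ each add a multiplicative correction. My first attempt would avoid the CLT altogether. I would use the exact identity $\gamma(k,k)/\Gamma(k)\to1/2$ together with monotonicity of the regularized incomplete Gamma function in its second argument. Setting $s\tau^2\|\gstar\|^2/(2\sigma^2)\ge k+c\sqrt{k}$ for a fixed constant $c$ then certifies $P\ge1/2$. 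This yields $s\le C\,d\,\sigma^2/(\tau^2\|\gstar\|^2)$ with an absolute constant $C$, which is enough for the $O(\cdot)$ claim. If $\tau$ is not small, I would replace $\tau^2/2$ by $1-\cos\tau$ throughout, which changes only constants since $1-\cos\tau\ge 2\tau^2/\pi^2$ on $[0,\pi/2]$.
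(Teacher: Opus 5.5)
Your proposal is correct and follows the paper's derivation essentially step for step: the cone condition $\alpha\le\tau=\arccos(l_x/\epsilon)$, the von Mises--Fisher incomplete-Gamma tail with $\kappa_{\mathrm{vMF}}\approx s\|\gstar\|^2/\sigma^2$, the CLT reduction of $\mathrm{Gamma}((d-1)/2,1)$ to $\Phi\bigl(\sqrt{2}(\sqrt{s}\,\tau\|\gstar\|-\sigma\sqrt{d})/\sigma\bigr)$, and setting the argument to zero at $50\%$ to get $s\approx\sigma^2 d/(\tau^2\|\gstar\|^2)=O(d/\tau^2)$. Your optional non-CLT certificate goes beyond what the paper attempts, but $\gamma(k,k)/\Gamma(k)\to 1/2$ is a limit rather than an identity, and with monotonicity alone it does not certify $P\ge 1/2$ at finite $k$; the clean fix is that the median of $\mathrm{Gamma}(k,1)$ lies below $k$ for every $k>0$ (consistent with your $k-1/3$ expansion), which gives $P\ge 1/2$ as soon as $s\tau^2\|\gstar\|^2/(2\sigma^2)\ge k$.
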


Now, we combine (\ref{eq:m}) and the assumptions that $\sigma$ and $\tau$ do not change significantly for data with higher dimensions. We found that the shots per dimension $s = O(d)$, and the total number of shots $R$ for all dimensions scale quadratically with the dimension $d$, i.e., $R=O(d^2)$ in the most optimistic attacker scenario (constant relative geometry). For future quantum models, this quadratic overhead can become an important limiting factor for measurement-based attacks, making reliable boundary crossing increasingly resource-intensive in high dimensions.

\section{Dimension scaling of the \texorpdfstring{$\ell_2$}{l2} perturbation budget}
\label{app:gauss_l2}
This appendix justifies the convention $\epsilon=\kappa\sqrt{d}$ used throughout the scaling experiments for the adversarial $\ell_2$ budget (the single-dimension hardware demonstration of Section~\ref{sec:hw_validation} fixes $\epsilon$ at one budget instead). The scaling has two complementary readings. (i) \emph{Per-coordinate imperceptibility.} A uniform per-coordinate bound $|\delta_i|\le\kappa$ directly implies $\|\delta\|_2\le\kappa\sqrt{d}$, so $\epsilon=\kappa\sqrt{d}$ is the $\ell_2$ envelope of any attack whose per-pixel magnitude is capped at the imperceptibility scale $\kappa$. (ii) \emph{Detectability-matched scale.} An adversarial perturbation whose $\ell_2$ norm matches the typical $\ell_2$ norm of natural Gaussian input noise at scale $\sigma=\kappa$ is statistically indistinguishable from benign noise under an $\ell_2$ detector, so $\epsilon=\kappa\sqrt{d}$ is the largest budget that remains below the natural-noise floor. The second reading is quantified by the standard $\chi^2_d$ concentration, which we include for completeness.

Assume image (or feature) noise is i.i.d.\ Gaussian across dimensions: $\eta\sim\mathcal{N}(0,\sigma^2 I_d)$. Then $\|\eta\|_2^2/\sigma^2\sim\chi^2_d$. Standard concentration for the chi-square law yields, for all $t\ge 0$ (see, e.g., \cite{laurent_massart_2000,vershynin_HDP_2018}):
\begin{align}
\Pr\big(\|\eta\|_2 \ge \sigma(\sqrt{d}+\sqrt{2t})\big) &\le e^{-t},\\
\Pr\big(\|\eta\|_2 \le \sigma(\sqrt{d}-\sqrt{2t})\big) &\le e^{-t}.
\end{align}
Equivalently, with probability at least $1-\delta$, the noise $\ell_2$ norm lies in the band
\[
\sigma\big(\sqrt{d}-\sqrt{2\log(1/\delta)}\big)\;\le\;\|\eta\|_2\;\le\;\sigma\big(\sqrt{d}+\sqrt{2\log(1/\delta)}\big).
\]
Thus, an $\ell_2$ perturbation whose magnitude scales as $\epsilon=\kappa\sqrt{d}$ matches the typical Gaussian growth and keeps a fixed tail probability when $\kappa$ is chosen relative to $\sigma$ (e.g., take $\kappa=\sigma$ for median scale, or $\kappa=\sigma\big(1+\sqrt{2\log(1/\delta)/d}\big)$ for a $(1-\delta)$ upper-quantile). In particular, thresholds of the form
\[
\epsilon_{\text{flag}}(\delta) = \sigma\big(\sqrt{d}+\sqrt{2\log(1/\delta)}\big)
\]
flag $\ell_2$ noise levels that are unlikely under $\mathcal{N}(0,\sigma^2 I_d)$ at significance $\delta$. This justifies the concise heuristic $\epsilon=\kappa\sqrt{d}$ in the main text and provides a principled way to tie $\kappa$ to $\sigma$ and a desired false-alarm rate $\delta$.

\section*{Part III: Robustness checks}

\section{Supporting figures for the simulation results}
\label{app:supp_figs}
This appendix collects the secondary figures and the landscape-curvature probe supporting the single-step and multi-step simulation results of Section~\ref{sec:Res}: the gradient-norm-folded coefficient, the local-linearity (landscape-factor) measurement, and the multi-step C\&W shot-budget sweep. The Fashion-MNIST shot-scaling waterfall is now shown alongside MNIST in Fig.~\ref{fig:waterfall_mnist} (panels \textbf{c}, \textbf{d}).

\begin{figure}[t]
    \centering
    \includegraphics[width=0.66\linewidth]{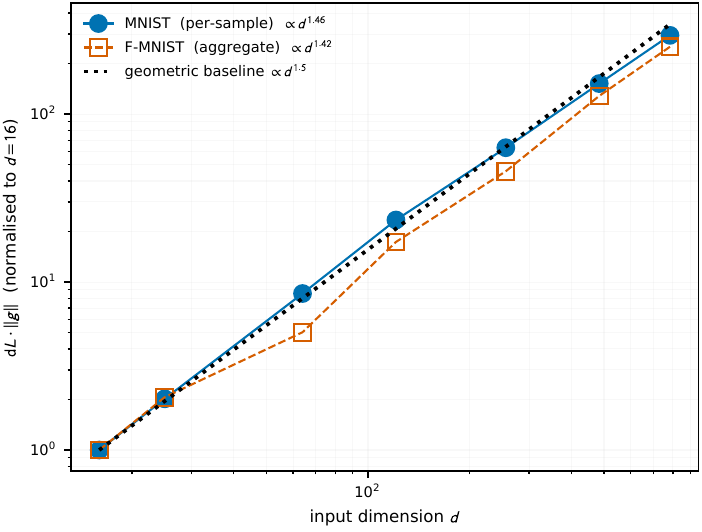}
    \caption{\textbf{Gradient-norm-independent coefficient.} The combination $\mathrm{d}L\cdot\|\gstar\|$ of Eq.~\eqref{eq:dLg} (the $1/s$ loss-gap coefficient with the measured gradient norm folded back in) versus input dimension $d$, normalised to $d{=}16$, against the geometric $d^{3/2}$ baseline (dotted). MNIST (per-sample, line-scan $\zeta_i$): $d^{1.46}$; Fashion-MNIST (aggregate $B_q\cdot\langle\|\gstar\|\rangle$): $d^{1.42}$. Folding $\|\gstar\|$ back into the coefficient recovers the shot-noise baseline on both datasets, leaving only the mild landscape factor $\zeta(d)$.}
    \label{fig:coeff_gnorm}
\end{figure}

\paragraph{Probing local linearity at the attack scale.} Assumption~\ref{ass:1s_1} treats the loss as linear over the $\epsilon$-step; the landscape factor quantifies the departure. For each input we step a distance $\epsilon$ along $\hat{d}(\alpha)=\cos\alpha\,\hat{u}^*+\sin\alpha\,\hat{t}$, rotating from the gradient direction $\hat{u}^*$ ($\alpha=0$) toward a random orthogonal direction $\hat{t}$, and record the excess over the first-order prediction, $\Delta(\alpha)=L(x+\epsilon\hat{d}(\alpha))-L(x)-\epsilon\|\gstar\|\cos\alpha$; here $L$ is the exact (noiseless) loss, so the scan isolates landscape geometry rather than shot noise. We sweep $\alpha$ over a uniform grid on $[0,75^\circ]$ and average $\Delta(\alpha)$ over eight random transverse directions $\hat{t}$ per sample. A linear landscape gives $\Delta\equiv0$, and the on-axis value defines $\zeta=1+\Delta(0)/\epsilon\|\gstar\|$ per sample. Measured at every dimension (Fig.~\ref{fig:nonlinearity}), $\zeta$ holds near unity through $d=121$ and falls monotonically to $0.83$ at $d=784$. This high-$d$ curvature favors the attacker: the loss is concave along the gradient at the $\epsilon$-scale and flatter transversely (the deficit $\Delta(\alpha)$ is most negative on-axis and relaxes toward zero off-axis), so a shot-noisy attack that tilts off the true gradient lands in a less-saturated direction and loses less loss-rise than the linear prediction. Curvature thus broadens the cone of effective attack directions and lowers the shot-noise penalty ($\zeta<1$), a small, dimension-growing advantage rather than a cost.

\begin{figure}[t]
    \centering
    \includegraphics[width=\textwidth]{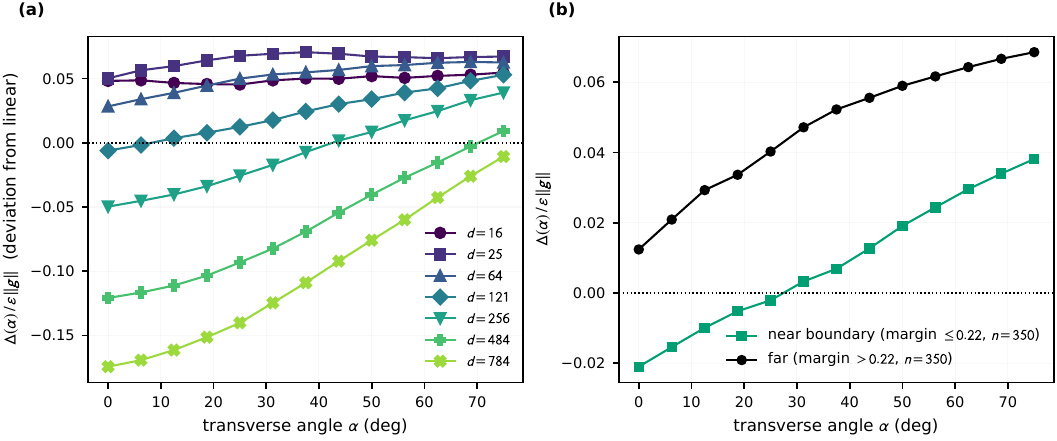}
    \caption{\textbf{Landscape non-linearity at the attack scale.} (\textbf{a}) Median transverse line-scan $\Delta(\alpha)/\epsilon\|\gstar\|$ versus tilt angle $\alpha$ for $d=16\to784$: the deviation from the first-order prediction is flat at low $d$ and grows strongly negative along the attack direction as $d$ increases; the $\alpha{=}0$ intercept equals $\zeta-1$ ($+0.05\to-0.17$). (\textbf{b}) Splitting samples by distance to the decision boundary at $d=121$: far-from-boundary samples carry the larger transverse deviation; the split washes out by $d=784$, where the non-linearity is a global property of the landscape.}
    \label{fig:nonlinearity}
\end{figure}

\begin{figure}[t]
    \centering
    \includegraphics[width=0.62\linewidth]{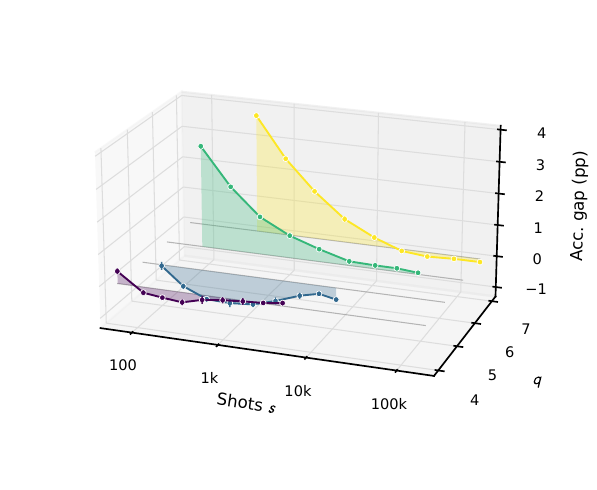}
    \caption{\textbf{C\&W shot-budget sweet spot.} Signed accuracy gap of the finite-shot attack to the exact-shot ($s\to\infty$) attack (percentage points, pp) versus shots per dimension $s$ (log scale) and model size $q$; the grey line is the exact-shot reference. Below it, finite shots make the attack \emph{stronger} than an exact gradient (the SGLD sweet spot; see text). Markers are means over five shot-noise realizations on the full test set; bars are $\pm1$ SEM.}
    \label{fig:waterfall_cw_asr}
\end{figure}

\paragraph{Aggregate bracket on the single-step exponent.} The directly fitted coefficient $d^{2.00}$ (MNIST) factors as the per-sample geometry, $d^{1.46}$, times the gradient norm, which enters with effective exponent $0.54$. The aggregate reference $d^{2.23}$ folds the cohort-mean inverse norm $1/\langle\|\gstar\|\rangle\propto d^{0.73}$ into the bare $d^{3/2}$ geometry and overstates the norm contribution: $\sigma^2$ and $\|\gstar\|$ are positively correlated across samples ($+0.75$), so low-norm samples carry low variance and the per-sample ratio $\sigma^2/\|\gstar\|$ grows more slowly than $1/\langle\|\gstar\|\rangle$. The $0.23$ separation is this norm-exponent damping ($0.73\to0.54$, contributing $0.19$) plus the $0.04$ offset between the bare $d^{3/2}$ and the measured $d^{1.46}$ geometry ($\zeta$ and finite-window extraction), so $d^{2.23}$ brackets the per-sample law $d^{2.00}$ from above, as an aggregated estimate should.

\section{Robustness of a different architecture: data re-uploading}
\label{app:arch_general}

This appendix provides a detailed experimental comparison that supports the architectural robustness paragraph in Section~\ref{subsec:shot_scaling}. We repeat the $(q,s)$ sweep from Section~\ref{subsec:shot_scaling}, under the same grid protocol, for a data re-uploading variant of that circuit, which re-encodes the input at mid-depth rather than encoding once at the input: the angle-encoding layer appears at the input and again after 100 entangling layers, two encoding repetitions within a fixed 200-layer depth. We fix the dataset to MNIST and vary $q\in\{4,5,6,7\}$, so that the only experimental difference is the circuit structure itself, isolating how the architecture-dependent Jacobian factors $J_L$ and $J_U$ in~(\ref{eq:variance1}) shift variance constants. The re-uploading model's own measured gradient norm ($\|\gstar\|\propto d^{-0.81}$, $1.5$--$1.8\times$ the single-upload values) and readout variance ($\sigma^2\approx4.7$, $1.4$--$1.8\times$ the single-upload $\sigma^2\approx3$, consistent with variance accumulating over the two encoding blocks) set its critical counts $s^\star(d)$, so the relative budgets $h=s/s^\star\in[0.3,12]$ are matched between the two architectures.

\begin{figure}[t]
    \centering
    \includegraphics[width=\textwidth]{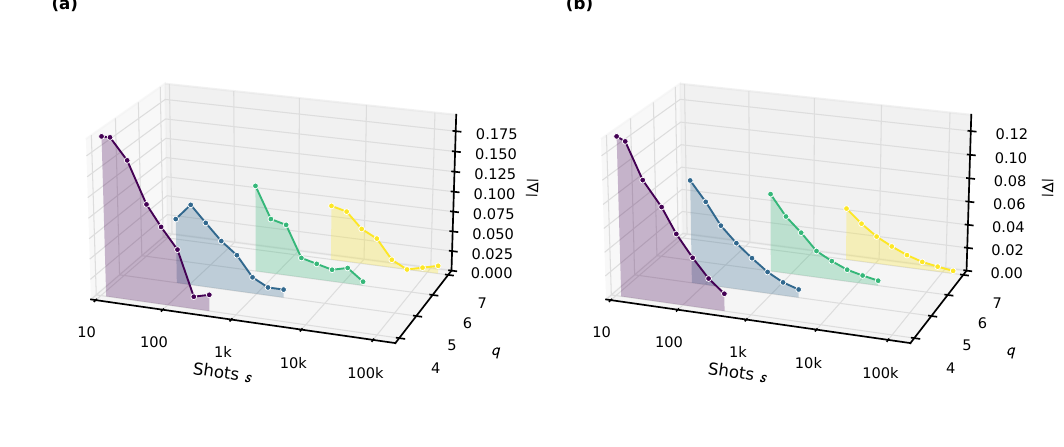}
    \caption{\textbf{Re-uploading ansatz: FGSM waterfall on MNIST.} Same convention as Fig.~\ref{fig:waterfall_mnist}: (\textbf{a}) accuracy and (\textbf{b}) loss shortfall from the perfect-gradient attack versus shot budget (log scale) for $q\in\{4,5,6,7\}$. The re-uploading architecture qualitatively mirrors the single-upload baseline, confirming the $1/s$ decay persists across circuit families.}
    \label{fig:waterfall_reupload}
\end{figure}

Figure~\ref{fig:waterfall_reupload} presents the re-uploading curves alongside Fig.~\ref{fig:waterfall_mnist} for direct comparison. The qualitative behavior is strikingly similar: both architectures exhibit the characteristic $1/s$ decay within each qubit count, precisely the signature predicted by Proposition~\ref{prop:fgsm-optimization}.

To quantify this similarity, we apply the same in-regime through-origin fitting procedure from Section~\ref{subsec:shot_scaling}. Figure~\ref{fig:Kd_compare} overlays the fitted coefficients $c_0$ for both architectures on log--log axes over the matched range $d\in\{16,25,64,121\}$. Two key observations emerge. First, the \emph{exponents} are statistically indistinguishable: $d^{1.99}$ ($R^2=0.991$) for re-uploading versus $d^{1.96}$ ($R^2=0.997$) for single-upload on the matching rungs. Second, the \emph{coefficients} coincide (per-dimension pre-factor ratio $0.89$--$1.09$, geometric mean $0.98$): the second encoding pathway raises the gradient norm and the per-shot readout variance by nearly identical factors ($\approx1.6\times$ each), and the two cancel in the coefficient rule $c_0\propto\sigma^2/\|\gstar\|$. The architectural cost surfaces elsewhere: each input component appears in both encoding blocks, so one PSR gradient consumes four shifted circuit evaluations per dimension instead of two, doubling the total shot budget per gradient at equal $s$ without touching the coefficient or the exponent.

\begin{figure}[t]
    \centering
    \includegraphics[width=0.5\linewidth]{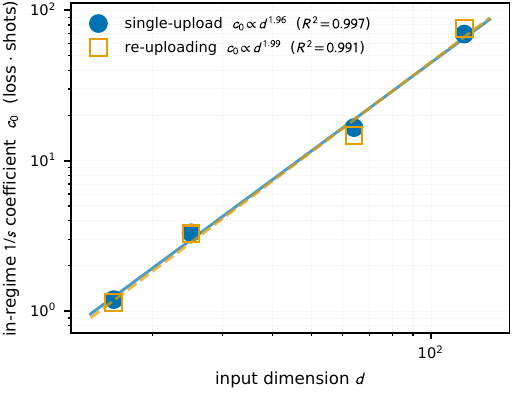}
    \caption{\textbf{Architectural robustness: re-uploading vs.\ single-upload coefficient scaling.} Log--log plot of the in-regime coefficients $c_0$ versus input dimension $d$ over the matched range $d\in\{16,25,64,121\}$, comparing re-uploading (orange, $\propto d^{1.99}$) to single-upload (blue, $\propto d^{1.96}$; cf. Fig.~\ref{fig:coeffic_vs_d}). The exponents are statistically indistinguishable and the pre-factors coincide (geometric-mean ratio $0.98$): the re-uploading circuit's larger readout variance is offset by its larger gradient norm in $c_0\propto\sigma^2/\|\gstar\|$. Lines show least-squares power-law fits.}
    \label{fig:Kd_compare}
\end{figure}

These results directly corroborate Section~\ref{subsec:shot_scaling}: the super-linear coefficient growth measured on single-upload MNIST reappears on the re-uploading variant within fitting uncertainty, and together with the Fashion-MNIST exponent (Section~\ref{subsec:shot_scaling}) we have three independent configurations all exhibiting growth far above the $d^{1.5}$ theoretical floor. The coefficient coincidence is itself informative: it confirms the structure $c_0\propto\sigma^2/\|\gstar\|$ on an independent architecture, since both measured ingredients moved by a common factor and the coefficient did not move. The consistency across datasets and architectures strongly suggests that the scaling law is an intrinsic property of PSR-based gradient estimation on quantum circuits. For practitioners, this implies that more expressive circuit families may multiply the per-gradient evaluation count, but the polynomial explosion in total budget with dimension persists regardless of ansatz.

\section{Estimator comparison: parameter-shift versus finite differences}
\label{app:estimator}

We assess how the choice of gradient estimator interacts with the shot budget and the variance model in~(\ref{eq:variance1}). Building on the baseline architecture at $q=6$ qubits, we contrast PSR, which is unbiased but potentially higher-variance at low $s$, with a two-point finite-difference (FD) estimator with step size $h$. In FD, we treat the quantum layer as a black box and choose $h\ll 1$ to avoid parameter clipping from periodic gate ranges. We enforce identical total shot budgets across methods so that variance-bias trade-offs, rather than raw sampling volume, drive the outcome; analytically, this isolates the shift from the unbiased PSR covariance in Assumption~\ref{ass:1s_2} to the biased estimator that perturbs the mean term $\mathbb E[\hat G]=J_U\,\mu\,J_L$ from Section~\ref{sec:uncertain_gradient}.

To visualize these effects, we report two-panel summaries for data re-uploading and single-upload models. In each figure, the left panel plots adversarial loss versus shots per dimension $s$, and the right panel plots post-attack accuracy versus $s$, comparing PSR with FD at several step sizes $h$.
Figure~\ref{fig:estimator_reupload} covers the data re-uploading case; Figure~\ref{fig:estimator_single} covers the single-upload case.

\begin{figure}[t]
    \centering
    \includegraphics[width=\textwidth]{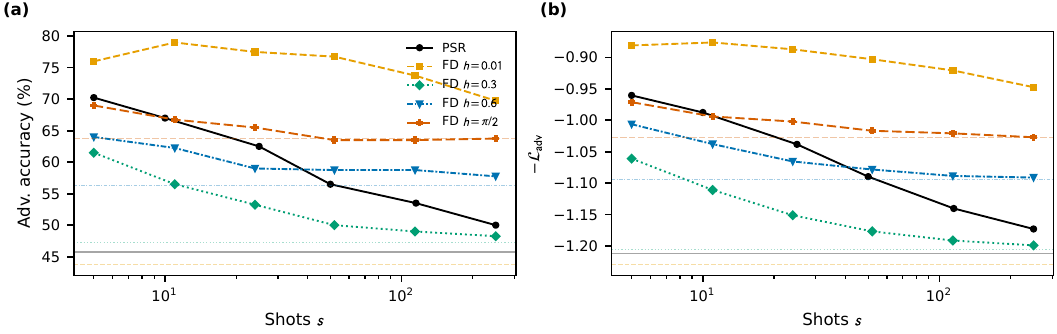}
    \caption{\textbf{Estimator comparison under data re-uploading ($q=6$).} (\textbf{a}) Adversarial accuracy (\%) and (\textbf{b}) negative adversarial loss $-L_{\mathrm{adv}}$, both versus shots $s$, for PSR and FD at several step sizes $h$ under matched total shot budgets.}
    \label{fig:estimator_reupload}
\end{figure}

\begin{figure}[t]
    \centering
    \includegraphics[width=\textwidth]{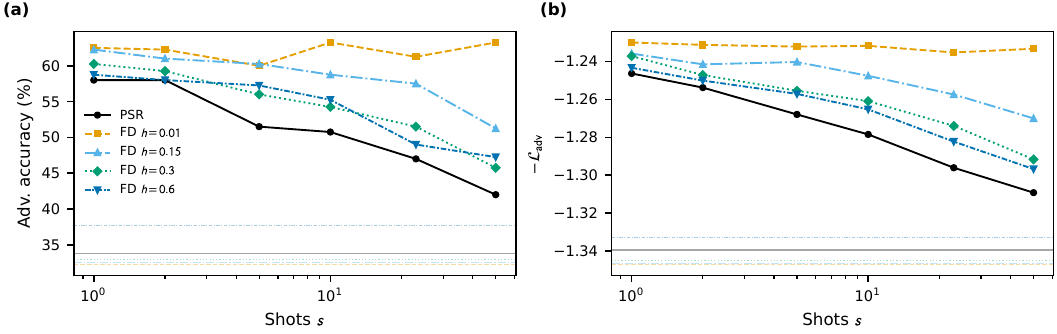}
    \caption{\textbf{Estimator comparison under single upload ($q=6$).} (\textbf{a}) Adversarial accuracy (\%) and (\textbf{b}) negative adversarial loss $-L_{\mathrm{adv}}$, both versus shots $s$, for PSR and FD at several step sizes $h$ under matched total shot budgets.}
    \label{fig:estimator_single}
\end{figure}

Across the shared $(s,h)$ sweep, the re-uploading architecture (Fig.~\ref{fig:estimator_reupload}) exhibits a small low-shot crossover: FD's reduced variance can outweigh its $O(h^2)$ bias when $s$ is tiny, after which PSR's unbiased gradients dominate as $\kappa_{\mathrm{vMF}}\propto s\|\gstar\|^2/\sigma^2$ increases per (\ref{eq:CosSim}). For the single-upload variant (Fig.~\ref{fig:estimator_single}), PSR leads for all $s$; here, the FD surrogate provides no advantage because the analytic PSR effectively acts as a large-step finite difference without incurring bias. Both behaviors are consistent with Proposition~\ref{prop:fgsm-optimization} and with the covariance structure in (\ref{eq:variance1}), which predicts that reducing estimator bias (PSR) is the dominant lever once shot noise shrinks below the true gradient magnitude.

Taken together, the results indicate that the choice of estimator primarily affects constants and low-shot transients. Under moderate and large shot budgets, PSR's unbiasedness yields superior alignment with the true gradient and better attack efficacy at fixed $R=ds$. In the limit of a large number of shots and low variance (Assumption~\ref{ass:1s_3}), the attacker will prefer the unbiased PSR method. Thus, the inclusion of FD as a gradient estimator does not affect the overall scaling behavior of the attack, and the fundamental shot-dimension law holds, as at least a quadratically increasing amount of shot resources, $R=ds$, is required to maintain constant adversarial performance as the data/model dimension $d$ increases.

\paragraph{Takeaway.} Estimator choice changes constants and low-shot crossovers but not the core scaling: PSR dominates at moderate/large $s$ because it satisfies the unbiased, shot-limited model assumed in Assumptions~\ref{ass:1s_2}--\ref{ass:1s_3}, while FD can transiently help at tiny $s$ via bias-variance trade-offs that momentarily shrink the covariance in (\ref{eq:variance1}).

\section{Zero-order gradient estimation under finite shot budgets}
\label{app:zero-order}

The bounds of Section~\ref{sec:singlestep} assume an unbiased, first-order estimator, the parameter-shift rule. A natural objection is that a zero-order estimator such as SPSA (simultaneous-perturbation stochastic approximation) sidesteps the $\Theta(d)$ circuit count by probing the loss along one random direction, using two circuit evaluations regardless of $d$. Classically this saving is real: zero-order estimators built from random directions cost only a $\sqrt{d}$ factor over first-order methods~\cite{nesterov_random_2017, duchi_optimal_2015}, with matching information-theoretic lower bounds under fixed-smoothness assumptions~\cite{alabdulkareem_lower_2021}. Whether the saving survives the finite-shot, high-curvature regime of expressive quantum circuits is the question this appendix frames.

\emph{Notation.} Throughout this appendix $s$ denotes shots per circuit evaluation, matching the per-evaluation count $s_0$ of Section~\ref{sec:notation}; the paper's per-dimension convention follows from $s_{\text{paper}} = 2s$ for commuting-observable PSR.

The SPSA gradient estimate~\cite{spall_multivariate_1992, hoffmann_gradient_2022} is
\begin{equation}
\hat{\nabla}_{\text{SPSA}} f(\theta) = \frac{f(\theta + c\Delta) - f(\theta - c\Delta)}{2c}\,\Delta\;, \qquad \Delta \sim \text{Rademacher}^d\;,
\end{equation}
with finite-difference step $c>0$. It uses two evaluations independent of $d$, but each is a quantum expectation value read from finitely many shots, so $\hat{f}(\theta) \sim \mathcal{N}(f(\theta), \sigma^2/s)$ and the shot noise survives; the random projection $\Delta$ moves it into the estimator's variance and spreads it across all $d$ coordinates,
\begin{equation}
\Var[\hat{\nabla}_{\text{SPSA}}] = \E_{\Delta}\!\left[\frac{d\sigma^2}{c^2 s} + O(c^2)\right] = O\!\left(\frac{d\sigma^2}{c^2 s}\right) + \text{bias}^2\;,
\end{equation}
which still scales as $d/s$. The finite-difference step also carries a bias $\E[\hat{\nabla}_{\text{SPSA}}] - \nabla f = O(c^2\|\nabla^2 f\|)$~\cite{spall_multivariate_1992}. Table~\ref{tab:psr-spsa} collects the trade-off.

\begin{table}[t]
\centering
\begin{tabular}{lccc}
\hline
\textbf{Method} & \textbf{Circuit Evals} & \textbf{Gradient Variance} & \textbf{Bias} \\
\hline
PSR & $2d$ & $\Theta(\sigma^2/s)$ per component & 0 \\
SPSA & $2$ & $\Theta(d\sigma^2/(c^2 s))$ total & $O(c^2)$ \\
\hline
\end{tabular}
\caption{\textbf{PSR vs.\ SPSA under finite shots.} Comparison of circuit count, gradient variance, and bias.}
\label{tab:psr-spsa}
\end{table}

The two knobs pull against each other: bias forces $c$ small and variance forces $s$ large, so reaching a fixed gradient accuracy costs $s = \Theta(d/c^2)$ shots and the total budget climbs back toward PSR's $\Theta(ds)$. In a low-curvature loss the bias term is mild and the trade can still net the classical $\sqrt{d}$ advantage~\cite{duchi_optimal_2015}; in the high-curvature regime characteristic of expressive circuits the bias grows with $\|\nabla^2 f\|$ and that advantage is no longer guaranteed. The only adversarial-QML study to use hybrid (zero-order) gradients, by Majumder et al.~\cite{Majumder2021Hybrid}, reports feasibility on a single problem instance without shot-budget or dimension-scaling analysis, so the question stays empirically open. Our scaling law fixes the exponent for the unbiased regime (Section~\ref{sec:singlestep}); whether a biased zero-order estimator can lower it for expressive, classically-hard models, and how the optimal step $c^*(d,s)$ trades bias against variance, is left to future work.

\newpage
\sloppy
\bibliographystyle{iopart-num}
\bibliography{Pershot2, previous_ref, ref_added}

\end{document}